\newcommand{\menge}[1]{\left\{#1\right\}}
\newcommand{\mengest}[2]{\left\{#1 \mid #2\right\}}
\newcommand{\nor}[1]{\left\|#1\right\|}
\newcommand{\cO}{\mathcal{O}}
\newcommand{\BP}{bin packing\xspace}
\newcommand{\cmark}{\ding{51}}%
\newcommand{\xmark}{\ding{55}}%
\newcommand{\bin}[3]{
\node (#1)[rectangle,minimum height=1.5cm,label=below:{#3},draw,xshift={#2}] {};
}
\newcommand{\drawgroups}[4]{
\edef\points{}
\foreach \x [count=\xi] in {#2}{
\ifthenelse{\equal{\x}{}}{
\node (#1\xi)[shape=rectangle,draw=white,minimum
    height=1cm, minimum width=1cm,xshift=2.1*\xi cm] {$\ldots$};
}
{
 \node (#1\xi)[draw,shape=rectangle,fill=black!20,minimum
    height=1cm, minimum width=2cm,xshift=2.1*\xi cm] {$\x$};
}
    \xdef\points{(#1\xi) \points}
    }   
 \node (all) [draw,ellipse,xscale=0.9,yscale=1.2, fit=\points]{}; 
 
 \foreach \x / \y in {#3}{
 \ifnumgreater{\x}{\y}{
 \draw (#1\x.north west) edge[->,>=open triangle 90,thick,bend right=40] node[above,fill=white,yshift=0.08cm] {#4} (#1\y.north east);
 }
 {
 \draw (#1\x.north east) edge[->,>=open triangle 90,thick,bend right=-40] node[above,fill=white,yshift=0.08cm] {#4} (#1\y.north west);
 }
}
 }
\newcommand{\drawbothgroups}[6]{
\drawgroups{i}{#1}{#2}{#6}
\begin{scope}[xshift=#5cm]
\drawgroups{j}{#3}{#4}{#6}
\end{scope}
}
\newtheorem{algo}{Algorithm}
\theoremstyle{plain}
\newtheorem*{theorem*}{Theorem}
\newtheorem*{lemma*}{Lemma}
\newtheorem{theorem}{Theorem}
\newtheorem{corollary}{Corollary}
\newtheorem{lemma}{Lemma}
\DeclareMathOperator{\online}{on}
\newcommand{\OPT}{\operatorname{\text{\textsc{opt}}}}
\newcommand{\SIZE}{\operatorname{\text{\textsc{size}}}}
\newcommand{\PACK}{\operatorname{\text{\textsc{pack}}}}
\newcommand{\INSERT}{\operatorname{\text{\textsc{insert}}}}
\newcommand{\RIGHT}{\operatorname{\text{\textsc{right}}}}
\newcommand{\LIN}{\operatorname{\text{\textsc{lin}}}}
\acrodef{afptas}[AFPTAS]{asymptotic fully polynomial time approximation
  scheme}
  \acrodef{aptas}[APTAS]{asymptotic polynomial time approximation
  scheme}
\acrodef{ilp}[ILP]{integer linear program}
\acrodef{lp}[LP]{linear program}
\acrodef{ptas}[PTAS]{polynomial time approximation scheme}
\acrodef{amfptas}[AMFPTAS]{asymptotic (fully) polynomial time approximation scheme}
\author[1]{Sebastian Berndt}
\author[2]{Klaus Jansen}
\author[2]{Kim-Manuel Klein}
\affil[1]{Institute for Theoretical Computer Science, Universit\"at zu L\"ubeck\\berndt@tcs.uni-luebeck.de}
\affil[2]{Department of Computer Science, Christian-Albrechts-University to Kiel\\ \{kj,kmk\}@informatik.uni-kiel.de}
\title{Fully Dynamic Bin Packing Revisited\footnote{Supported by DFG Project, Entwicklung und Analyse von effizienten polynomiellen Approximationsschemata f\"ur Scheduling- und verwandte Optimierungsprobleme, Ja 612/14-1.}}
\date{}
\begin{document}

\maketitle
\begin{abstract}
   We consider the \emph{fully dynamic bin packing} problem, where items
  arrive and depart in an online fashion and repacking of previously
  packed items is allowed. The goal is, of course, to minimize both the
  number of bins used as well as the amount of
  repacking. A recently introduced way of measuring the repacking costs
  at each timestep is the \emph{migration factor,} defined as the total
  size of repacked items divided by the size of an arriving or departing
  item. Concerning the trade-off between number of
  bins and migration factor, if we wish to achieve an asymptotic
  competitive ration of $1 + \epsilon$ for the number of bins, a
  relatively simple argument proves a lower bound of
  $\Omega(\nicefrac{1}{\epsilon})$ for the migration factor. We establish a nearly
  matching upper bound of $O(\nicefrac{1}{\epsilon}^4 \log
  \nicefrac{1}{\epsilon})$ using a new dynamic rounding technique and
  new ideas to handle small items in a dynamic setting such that no
  amortization is needed. The running time of our algorithm is
  polynomial in the number of items $n$ \emph{and} in
  $\nicefrac{1}{\epsilon}$. The previous best trade-off was for an
  asymptotic competitive ratio of $\nicefrac{5}{4}$ for the bins (rather
  than $1+\epsilon$) and needed an amortized number of $O(\log n)$
  repackings (while in our scheme the number of repackings is
  independent of~$n$ and non-amortized).
\end{abstract}

\section{Introduction}
For the classical \BP problem, we are given a set $I$ of items with a
size function $s\colon I\to (0,1]$ and need to pack them into as few unit
sized bins as possible. In practice, the
complete instance is often not known in advance, which has lead to the
definition of a variety of \emph{online} versions of the bin packing
problem. First, in the classical \emph{online bin packing}
\cite{ullman1971}, items arrive over time and
have to be packed on arrival. Second, in \emph{dynamic bin packing}
\cite{coffman1983}, items may also depart over time.
This dynamic bin packing model is often used for instance in
\begin{compactitem}
\item the placement and movement of 
virtual machines onto different servers for cloud computing \cite{beloglazov2010energy,
bobroff2007dynamic,srikantaiah2008energy,verma2008pmapper,jung2008generating,jung2009cost},
\item the development of guaranteed quality of service channels over certain multi-frequency
time division multiple access systems \cite{park2000efficient},
\item the placement of processes, which require different resources, onto physical host machines \cite{stolyar2013infinite, stolyar2013large},
\item the resource allocation in a cloud network where the cost depends upon different parameters \cite{daudjee2014fault,li2014dynamic}.
\end{compactitem}
Third and fourth, we may allow already packed items to be slightly
rearranged, leading to online bin packing with repacking (known as
\emph{relaxed online bin packing}) \cite{gambosi2000} and dynamic bin
packing with repacking (known as \emph{fully dynamic bin packing})
\cite{ivkovic1998}. See Figure \ref{fig:overview} for a short overview
on the different models.

\begin{figure}[h]
  \centering
\begin{tabular}[h]{l|c|c}
  Name & Deletion & Repacking\\
  \hline
  Online Bin Packing & \xmark & \xmark\\
  Relaxed Online Bin Packing & \xmark & \cmark\\
  Dynamic Bin Packing & \cmark & \xmark\\
  Fully Dynamic Bin Packing & \cmark & \cmark
\end{tabular}
  \caption{Overview of online models}
  \label{fig:overview}
\end{figure}

The amount of repacking can be measured in different ways. We can either
count the total number of moved items at each timestep or the sum of the
sizes of the moved items at each timestep. If one wants to count the
number of moved items, one typically counts a group of tiny items as a
single move. A \emph{shifting move} \cite{gambosi2000} thus involves
either a single large item or a bundle of small items in the same bin of
total size $s$ with $\nicefrac{1}{10}\leq s\leq \nicefrac{1}{5}$. Such a
bundle may consists of up to $\Omega(n)$ (very small) items. If an
algorithm measures the repacking by shifting moves, a new tiny item may
lead to a large amount of repacking. In order to guarantee that a tiny
item $i$ with size $s(i)$ only leads to a small amount of repacking, one
may allow to repack items whose size adds up to at most $\beta\cdot
s(i)$. The term $\beta$ is called the \emph{migration factor}
\cite{sanders2009}. Note that shifting moves and migration factor are
incomparable in the sense that a small migration factor does not imply a
small number of shifting moves and vice versa.

In order to measure the quality of an online algorithm, we compare the
costs incurred by an online algorithm with the costs incurred by an
optimal offline algorithm. An \emph{online algorithm} receives as input
a \emph{sequence} of items $I=(i_{1},i_{2},i_{3},\ldots)$ and decides at
each timestep $t$, where to place the item $i_t$ without knowing future
items $i_{t+1},i_{t+2},\ldots$. We denote by
$I(t)=(i_{1},i_{2},\ldots,i_{t})$ the instance containing the first $t$
items of the instance $I$ and by $\OPT(I(t))$ the minimal number of bins
needed to pack all items in $I(t)$. Note that the packings corresponding
to $\OPT(I(t))$ and $\OPT(I(t+1))$ may differ significantly, as those
packings do not need to be consistent. For an online algorithm $A$, we
denote by $A(I(t))$ the number of bins generated by the algorithm on the
input sequence $I(t)$. Note that $A$ must make its decision online,
while $\OPT(I(t))$ is the optimal value of the offline instance. The
quality of an algorithm for the online \BP problem is typically measured
by its \emph{asymptotic competitive ratio}. An online algorithm $A$ is
called an \emph{asymptotic $\alpha$-competitive algorithm}, if there is
a function $f\in o(\OPT)$ such that $A(I(t))\leq
\alpha\OPT(I(t))+f(I(t))$ for all instances $I$ and all $t\leq |I|$. The
minimum $\alpha$ such that $A$ is an asymptotic $\alpha$-competitive
algorithm is called the \emph{asymptotic competitive ratio of $A$},
denoted by $r_{\infty}^{\online}(A)$, i.\,e., the ratio is defined as
$r_{\infty}^{\online}(A)=\min\{\alpha \mid A$ is an asymptotic
$\alpha$-competitive algorithm$\}$. The online algorithm $A$ thus has a
double disadvantage: It does not know future items and we compare its
quality to the optimal offline algorithm which may produce arbitrary
different packings at time $t$ and time $t+1$. In order to remedy this
situation, one may also compare the solution generated by $A$ to a
non-repacking optimal offline algorithm. This non-repacking optimal
offline algorithm knows the complete instance, but is not allowed to
repack.

In this work, we present new results in fully dynamic bin packing where
we measure the quality of an algorithm against a repacking optimal
offline algorithm and achieve a asymptotic competitive ratio of $1 +
\epsilon$. The amount of repacking is bounded by
$\mathcal{O}(\nicefrac{1}{\epsilon}^4\log
(\nicefrac{1}{\epsilon}))$. While we measure the amount of repacking in
terms of the migration factor, we also prove that our algorithm uses at
most $\mathcal{O}(\nicefrac{1}{\epsilon}^4\log
(\nicefrac{1}{\epsilon}))$ shifting moves. Our algorithm runs in time
polynomial in the instance size and in $\nicefrac{1}{\epsilon}$.

\subsection{Previous Results on Online Variants of Bin Packing}

\subsubsection*{Online Bin Packing}
\label{sec:online-bp}
The classical version of online \BP problem was introduced by Ullman
\cite{ullman1971}.  In this classical model items arrive over time and
have to be packed at their arrival, while \emph{one is not allowed to
  repack already packed items}.  Ullman gave the very first online
algorithm \textsc{FirstFit} for the problem and proved that it its
absolute competitive ratio is at most $2$. The next algorithm
\textsc{NextFit} was given by Johnson \cite{johnson1974fast}, who proved
that its absolute competitive is also at most $2$. The analysis of the
\textsc{FirstFit} algorithm was refined by Johnson, Demers, Ullman,
Garey and Graham \cite{johnson1974worst}, who proved that its asymptotic
competitive ratio is at most $\nicefrac{17}{10}$. A revised version of
\textsc{FirstFit}, called \textsc{Revised FirstFit} was shown to have
asymptotic competitive ratio of at most $\nicefrac{5}{3}$ by Yao
\cite{yao1980new}.  A series of developments of so called \emph{harmonic
  algorithms} for this problem was started by Lee and Lee
\cite{lee1985simple} and the best known algorithm of this class which has
asymptotic competitive ratio at most $1{.}58889$ was given by Seiden
\cite{seiden2002}.
%A series of development of so called \emph{harmonic algorithm} for this problem was started by Lee and Lee \cite{lee1985simple}, who developed an algorithm called \textsc{HARMONIC} with asymptotic competitive ratio at most $1{.}69103$ and an algorithm \textsc{REFINED HARMONIC} with asymptotic competitive ratio at most $1{.}63597$. Those algorithm were further improved by Ramanan, Brown, Lee and Lee \cite{ramanan1989line}, who developed the algorithm \textsc{MODIFIED HARMONIC} with asymptotic competitive ratio at most $1{.}61562$ and an algorithm \textsc{MODIFIED HARMONIC2} with asymptotic competitive ratio at most $1{.}61217$. Another harmonic algorithm called \textsc{HARMONIC+1} was given by Richey \cite{richey1991improved}, who claimed that its asymptotic competitive ratio is at most $1{.}58872$. It was later shown by Seiden \cite{seiden2002}, that this claim was erroneous and that the asymptotic competitive ratio of \textsc{HARMONIC+1} is actually at least $1{.}59217$. Seiden also presented the algorithm \textsc{HARMONIC++} with asymptotic competitive ratio at most $1{.}58889$, which is the best known asymptotic competitive ratio.
The lower bound on the absolute approximation ratio of $\nicefrac{3}{2}$
also holds for the asymptotic competitive ratio as shown by Yao
\cite{yao1980new}. This
lower bound was first improved independently by Brown
\cite{brown1979lower} and Liang \cite{liang1980lower} to $1{.}53635$ and
subsequently to $1{.}54014$ by van Vliet \cite{vliet1992} and finally to
$1{.}54037$ by Balogh, B{\'e}k{\'e}si and Galambos \cite{balogh2010}. 

\subsubsection*{Relaxed Online Bin Packing Model}
In contrast to the classical online \BP problem, Gambosi, Postiglione
and Talamo \cite{gambosi2000} considered the online case where one is
\emph{allowed to repack items}. They called this model the \emph{relaxed
  online \BP model} and proved that the lower bound on the competitive
ratio in the classical online \BP model can be beaten. They presented an
algorithm that uses $3$ \emph{shifting moves} and has an asymptotic
competitive ratio of at most $\nicefrac{3}{2}$, and an algorithm that
uses at most $7$ shifting moves and has an asymptotic competitive ratio
of $\nicefrac{4}{3}$. In another work, Ivkovi\'{c} and Lloyd
\cite{ivkovic1997} gave an algorithm that uses $\mathcal{O}(\log n)$
\emph{amortized} shifting moves and achieves an asymptotic competitive
ratio of $1+\epsilon$. In this amortized setting, shifting moves can be
saved up for later use and the algorithm may repack the whole instance
sometimes. Epstein and Levin \cite{epstein2006robust} used the measure
of the migration factor to give an algorithm that has an asymptotic
competitive ratio of $1+\epsilon$ and a migration factor of
$2^{\mathcal{O}((1/\epsilon)\log^{2}(1/\epsilon))}$. This result was
improved by Jansen and Klein \cite{jansen2013binpacking} who achieved
polynomial migration. Their algorithm uses a migration factor of
$\mathcal{O}(\nicefrac{1}{\epsilon}^{4})$ to achieve an asymptotic
competitive ratio of $1+\epsilon$.

Concerning lower bounds on the migration factor, Epstein and Levin
\cite{epstein2006robust} showed that no optimal solution can be
maintained while having a constant migration factor (independent of
$\nicefrac{1}{\epsilon}$). Furthermore, Balogh, B{\'e}k{\'e}si,
Galambos and Reinelt \cite{balogh2008lower} proved that a lower bound on the asymptotic competitive ratio of $1{.}3877$ holds, if the amount of repacking is measured by the number of items and one is only allowed to repack a \emph{constant number of items}.

\subsubsection*{Dynamic Bin Packing}
An extension to the classical online \BP model was given by Coffman,
Garey and Johnson \cite{coffman1983}, called the \emph{dynamic bin
  packing} model. In addition to the insertion of items, \emph{items
  also depart} over time. \emph{No repacking is allowed} in this
model. It is easily seen that no algorithm can achieve a constant
asymptotic competitive ratio in this setting. In order to measure the
performance of an online algorithm $A$ in this case, they compared the
\emph{maximum number of bins used by $A$} with the \emph{maximum number
  of bins used by an optimal offline algorithm}, i.\,e., an algorithm
$A$ in this dynamic model is called an \emph{asymptotic
  $\alpha$-competitive algorithm}, if there is a function $f\in
o(\text{max-\textsc{opt}})$, where $\text{max-\textsc{opt}}(I)=\max_{t}
\OPT(I(t))$ such that $\max_{t} A(I(t))\leq \alpha\cdot
\max_{t}\OPT(I(t))+f(I)$ for all instances $I$.  The minimum of all such
$\alpha$ is called the \emph{asymptotic competitive ratio of $A$}.
Coffman, Garey and Johnson modified the \textsc{FirstFit} algorithm and
proved that its asymptotic competitive ratio is at most
$2{.}897$. Furthermore, they showed a lower bound of $2{.}5$ on the
asymptotic competitive ratio when the performance of the algorithm is
compared to a repacking optimal offline algorith, i.\,e., $\max_{t}
\OPT(I(t))$.

In the case that the performance of the algorithm is compared to an
optimal non-repacking offline algorithm, Coffman, Garey and Johnson
showed a lower bound of $2{.}388$. This lower bound on the non-repacking
optimum was later improved by Chan, Lam and Wong \cite{chan2008dynamic}
to $2{.}428$ and even further in a later work by Chan, Wong and Yung
\cite{chan2009dynamic} to $2{.}5$.

\subsubsection*{Fully Dynamic Bin Packing}
We consider the dynamic \BP when repacking of already packed items is allowed. This model
was first investigated by Ivkovi\'{c} and Lloyd \cite{ivkovic1998} and is called \emph{fully dynamic \BP}. 
In this model, items arrive and depart in an online fashion and limited repacking is
allowed. The quality of an algorithm is measured by the asymptotic competitive ratio as defined in the classical online model (no maximum is taken as in the dynamic bin packing model). Ivkovi\'{c} and Lloyd developed an algorithm that uses amortized $\mathcal{O}(\log n)$  many shifting moves (see definition above) to achieve an asymptotic competitive ratio of $\nicefrac{5}{4}$. %It is a natural question if this ratio of $\nicefrac{5}{4}$ can be improved or if the number of shifting moves can be reduced. In this paper we give positive answers for both of these questions.

\subsubsection*{Related Results on the Migration Factor}
Since the introduction of the migration factor, several problems were
considered in this model and different robust algorithms for these
problems have been developed. Following the terminology of Sanders,
Sivadasan and Skutella \cite{sanders2009} we sometimes use the term
\emph{(online) approximation ratio} instead of competitive ratio. Hence,
we also use the term \ac{aptas} and \ac{afptas} in the context of online
algorithms.  If the migration factor of an algorithm $A$ only depends
upon the approximation ratio $\epsilon$ and not on the size of the
instance, we say that \emph{$A$ is an robust algorithm}.

In the case of online bin packing, Epstein and Levin
\cite{epstein2006robust} developed the first robust \ac{aptas} for the
problem using a migration factor of $2^{\mathcal{O}((1/\epsilon^{2})
  \log (1/\epsilon))}$. They also proved that there is no online
algorithm for this problem that has a constant migration factor and that
maintains an optimal solution. The \ac{aptas} by Epstein and Levin was
later improved by Jansen and Klein \cite{jansen2013binpacking}, who
developed a robust \ac{afptas} for the problem with migration factor
$\mathcal{O}(\nicefrac{1}{\epsilon^4})$. In their paper, they developed
new \ac{lp}/\ac{ilp} techniques, which we make use of to obtain
polynomial migration. It was shown by Epstein and Levin \cite{epsteinu}
that their \ac{aptas} for bin packing can be generalized to packing
$d$-dimensional cubes into unit cubes. Sanders, Sivadasan and Skutella
\cite{sanders2009} developed a robust \ac{ptas} for the scheduling
problem on identical machines with a migration factor of
$2^{\mathcal{O}((1/\epsilon) \log^2(1/\epsilon))}$. Skutella and
Verschae \cite{skutella2010} studied the problem of maximizing the
minimum load given $n$ jobs and $m$ identical machines. They also
considered a dynamic setting, where jobs may also depart. They showed
that there is no robust \ac{ptas} for this machine covering problem with
constant migration. The main reason for the nonexistence is due to very
small jobs. By using an amortized migration factor, they developed a
\ac{ptas} for the problem with amortized migration of
$2^{\mathcal{O}((1/\epsilon) \log^2(1/\epsilon))}$.

\subsection{Our Contributions}
\subsubsection*{Main Result}
In this work, we investigate the \emph{fully dynamic bin packing}
model. We measure the amount of repacking by the \emph{migration factor};
but our algorithm uses a bounded number of shifting moves as well.
Since the work of Ivkovi\'{c} and Lloyd from 1998 \cite{ivkovic1998}, no
progress was made on the fully dynamic \BP problem concerning the
asymptotic competitive ratio of $\nicefrac{5}{4}$. It was also unclear
whether the number of shifting moves (respectively migration factor)
must depend on the number of packed items $n$. In this paper we give
positive answers for both of these concerns. We develop an algorithm
that provides at each time step $t$ an approximation guarantee of
$(1+\epsilon)\OPT(I(t)) + \mathcal{O}(\nicefrac{1}{\epsilon} \log
(\nicefrac{1}{\epsilon}))$. The algorithm uses a migration factor of
$\mathcal{O}(\nicefrac{1}{\epsilon^4}\cdot
\log(\nicefrac{1}{\epsilon}))$ by repacking at most
$\mathcal{O}(\nicefrac{1}{\epsilon^3}\cdot
\log(\nicefrac{1}{\epsilon}))$ bins. Hence, the generated solution can
be arbitrarily close to the optimum solution, and for every fixed
$\epsilon$ the provided migration factor is constant (it does not depend
on the number of packed items). The running time is polynomial in $n$
and $\nicefrac{1}{\epsilon}$. In case that no deletions are used, the
algorithm has a migration factor of
$\mathcal{O}(\nicefrac{1}{\epsilon^3}\cdot
\log(\nicefrac{1}{\epsilon}))$, which beats the best known migration
factor of $\mathcal{O}(\nicefrac{1}{\epsilon^4})$ by Jansen and Klein
\cite{jansen2013binpacking}. Since the number of repacked bins is
bounded, so is the number of shifting moves as it requires at most
$O(\nicefrac{1}{\epsilon})$ shifting moves to repack a single
bin. Furthermore, we prove that there is no asymptotic approximation
scheme for the online \BP problem with a migration factor of
$o(\nicefrac{1}{\epsilon})$ even in the case that no items depart (and
even if $\mathcal{P}=\mathcal{NP}$). 

\subsubsection*{Technical Contributions}
We use the following techniques to achieve our results:
\begin{compactitem}
\item In order to obtain a lower bound on the migration factor in
  Section \ref{sec:bound}, we construct a series of instances that
  provably need a migration factor of $\Omega(\nicefrac{1}{\epsilon})$
  in order to have an asymptotic approximation ratio of $1+\epsilon$.
\item In Section \ref{sec:rounding}, we show how to handle large items
  in a fully dynamic setting. The fully dynamic setting involves more
  difficulties in the rounding procedure, in contrast to the setting
  where large items may not depart, treated in
  \cite{jansen2013binpacking}. A simple adaption of the dynamic
  techniques developed in \cite{jansen2013binpacking} does not work (see
  introduction of Section \ref{sec:rounding}). We modify the offline
  rounding technique by Karmarkar and Karp \cite{karmarkar1982} such
  that a feasible rounding structure can be maintained when items are
  inserted or removed. This way, we can make use of the
  \acs{lp}-techniques developed in Jansen and Klein
  \cite{jansen2013binpacking}.
\item In Section \ref{sec:small}, we explain how to deal with small
  items in a dynamic setting. In contrast to the setting where departure
  of items is not allowed, the fully dynamic setting provides major
  challenges in the treatment of small items. An approach is thus
  developed where small items of similar size are packed near each
  other. We describe how this structure can be maintained as new items
  arrive or depart. Note that the algorithm of Ivkovi\'{c} and Lloyd
  \cite{ivkovic1998} relies on the ability to manipulate up to
  $\Omega(n)$ very small items in constant time. See also their updated
  work for a thorough discussion of this issue \cite{ivkovic2009fully}.
\item In order to unify the different approaches for small and large
  items, in Section \ref{sec:general}, we develop an advanced structure for
  the packing. We give novel techniques and ideas to manage this
  mixed setting of small and large items. The advanced structure makes
  use of a potential function, which bounds the number of bins that need
  to be reserved for incoming items.
\end{compactitem}

%\section{Essentials}
%\input{essentials.tex}

\section{Lower Bound}
\label{sec:bound}
We start by showing that there is no robust (asymptotic) approximation scheme for \BP with migration factor of $o(\nicefrac{1}{\epsilon})$, even if $\mathcal{P}=\mathcal{NP}$. This improves the lower bound given by Epstein and Levin \cite{epstein2006robust}, which states that no algorithm for \BP, that maintains an optimal solution can have a constant migration factor. Previously it was not clear whether there exists a robust approximation algorithm for bin packing with sublinear migration factor or even a constant migration factor.

\begin{theorem}
\label{thm:bound}
For a fixed migration factor $\gamma > 0$, there is no robust approximation algorithm for \BP with asymptotic approximation ratio better than $1 + \frac{1}{6\lceil\gamma\rceil+5}$.
\end{theorem}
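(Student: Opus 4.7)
Let $k=\lceil\gamma\rceil$. The plan is to exhibit, for each large enough $N$, an insertion sequence $I$ such that $\OPT(I)\le (6k+5)\cdot N+O(1)$ but any algorithm $A$ with migration factor at most $\gamma$ is forced, at some prefix of $I$, to use at least $(6k+6)\cdot N-O(1)$ bins. Dividing gives an asymptotic ratio of at least $(6k+6)/(6k+5)=1+1/(6k+5)$. Because this argument is purely combinatorial (it only uses the migration constraint, never computational hardness), it holds unconditionally and hence also if $\mathcal P=\mathcal{NP}$.

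The core gadget uses a constant number of item types whose sizes depend on $k$. The idea is to first present a ``flooding'' phase of items whose only reasonable packings leave very little slack in each used bin, committing the algorithm to a dense configuration with roughly $6k+5$ bins. I would then present a short ``trigger'' phase of items with two properties: (i) the offline optimum can absorb all of them using the same number of bins by a local re-pairing that an offline packer would have anticipated, and (ii) restoring any near-optimal packing from the algorithm's current configuration requires moving items of total size strictly larger than $\gamma$ times the size of the trigger item(s). By the migration bound, the algorithm must therefore open at least one additional bin. Concatenating $N$ such gadgets on disjoint copies of the item types (scaled or shifted slightly so items from different gadgets do not combine in any bin) accumulates the gap.

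The delicate part is the parameter tuning behind the fraction $1/(6k+5)$. One must choose item sizes so that after flooding, every bin is ``full'' up to a slack strictly smaller than the trigger size, while simultaneously ensuring the trigger is small enough that the migration budget $\gamma\cdot s_{\text{trig}}$ cannot pay for even a single useful swap: any swap that would reclaim a bin necessarily involves moving items of total size about $k\cdot s_{\text{trig}}+\Theta(s_{\text{trig}})$. Counting bins in the resulting configuration gives $6k+5$ optimal bins per gadget against $6k+6$ algorithmic bins. A complementary adversary argument is needed to kill ``conservative'' strategies that deliberately leave slack in anticipation of triggers: by supplying enough flooding items of each type before the triggers appear, the adversary can ensure that any algorithm leaving slack already suffers the claimed loss, while any algorithm not leaving slack is punished by the trigger.

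The main obstacle is the second step—designing a flooding/trigger pair whose combinatorics realize exactly the ratio $(6k+6)/(6k+5)$ regardless of the algorithm's (possibly randomized over history) strategy. Once that gadget is in hand, the asymptotic statement follows routinely: taking $N\to\infty$ makes $\OPT(I)$ grow linearly, the additive slack $O(1)$ from boundary effects becomes $o(\OPT)$, and so any $\alpha$-competitive algorithm must satisfy $\alpha\ge 1+1/(6k+5)$, as claimed.
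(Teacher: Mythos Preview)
Your proposal is not a proof; it is a plan that explicitly defers the only substantive step. You yourself write that ``the main obstacle is the second step---designing a flooding/trigger pair whose combinatorics realize exactly the ratio $(6k+6)/(6k+5)$'' and that ``once that gadget is in hand, the asymptotic statement follows routinely.'' But the gadget \emph{is} the proof: everything else (concatenating $N$ copies, letting $N\to\infty$) is boilerplate. Without the concrete item sizes and the verification that the migration constraint actually blocks the algorithm, nothing has been shown.

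The paper's construction is much simpler than your sketch suggests, and in particular avoids the complications you anticipate (disjoint scaled gadgets, randomized strategies, counting $6k+5$ vs.\ $6k+6$ bins per gadget). It uses only two item sizes, $b=\tfrac12-\tfrac{a}{3}$ and $a=\tfrac{3/2}{3k+2}$, and inserts $2M$ copies of $b$ followed by $2M(k+1)$ copies of $a$. The single key observation is that $b>k\cdot a\ge\gamma\cdot a$, so once the $a$-items start arriving, no $b$-item can ever be moved. This forces a \emph{tradeoff} governed by one parameter $\beta_2$ (the number of bins holding two $b$'s after the first phase): the ratio at the end of the $b$-phase is decreasing in $\beta_2$, while the ratio at the end of the $a$-phase is increasing in $\beta_2$. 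Taking $\max(r_1,r_2)$ and minimizing over $\beta_2$ gives exactly $1+\tfrac{1}{6k+5}$. There is no per-gadget bin count, no trigger item, and no need to separate copies; the whole adversary is a single two-phase sequence with a one-variable minimax.
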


\begin{proof}
Let $\mathcal{A}$ be an approximation algorithm with migration factor $\gamma > 0$ and $c=\lceil \gamma\rceil$. We will now construct an instance such that the asymptotic approximation ratio of $\mathcal{A}$ with migration factor $c$ is at least $1+ \frac{1}{6c+5}$. The instance contains only two types of items: An $A$-item has size $a=\frac{\nicefrac{3}{2}}{3c+2}$ and an $B$-item has size $b=\nicefrac{1}{2}-\nicefrac{a}{3}$.
For a $M\in \mathbb{N}$, let 
\begin{align*}
I_M=[\underbrace{(b,\text{Insert}),(b,\text{Insert}),\ldots,(b,\text{Insert})}_{2M},\underbrace{(a,\text{Insert}),(a,\text{Insert}),\ldots,(a,\text{Insert})}_{2M(c+1)}]
\end{align*}
be the instance consisting of $2M$ insertions of $B$-items, followed by $2M(c+1)$ insertions of $A$-items. Denote by $r(t)$ the approximation ratio of the algorithm at time $t\in \mathbb{N}$. The approximation ratio of the algorithm is thus $r=\max_{t}\{r(t)\}$.

The insertion of the $B$-items produces a packing with $\beta_1$ bins containing a single $B$-item and $\beta_2$ bins containing two $B$-items. These are the only possible packings and hence $\beta_1+2 \beta_2=2M$. The optimal solution is reached if $\beta_1=0,\beta_2=M$. We thus have an approximation ratio of 
\begin{align*}
r(2M)=:r_1=\frac{\beta_1+\beta_2}{M}=\frac{2M-\beta_2}{M},
\end{align*}
which is strictly monotonically decreasing in $\beta_2$.  

The $A$-items, which are inserted afterwards, may either be put into bins which only contain $A$-items or into bins which contain only one $B$-item. The choice of $a,b$ implies $2\cdot b+a>1$ which shows that no $A$-item can be put into a bin containing two $B$-items. Denote by $\alpha$ the number of bins containing only $A$-items. The existing $B$-items may not be moved as the choice of $a,b$ implies $b>c\cdot a>\gamma\cdot a$. At most $\frac{\nicefrac{1}{2}+\nicefrac{a}{3}}{a}=c+1$ items of type $A$ may be put into the bins containing only one $B$-item. Note that this also implies that a bin which contains one $B$-item and $c+1$ items of type $A$ is filled completely. The optimal packing thus consists of $2M$ of those bins and the approximation ratio of the solution is given by 
\begin{align*}
r(2M(c+2))=:r_2=\frac{\beta_1+\beta_2+\alpha}{2M}=\frac{2M-2\beta_2+\beta_2+\alpha}{2M}=\frac{2M-\beta_2+\alpha}{2M}.
\end{align*}
There are at most $\beta_1\cdot (c+1)$ items of type $A$ which can be put into bins containing only one $B$-item. The remaining $(2M-\beta_1)(c+1)$ items of type $A$ therefore need to be put into bins containing only $A$-items. We can thus conclude $\alpha\geq (2M-\beta_1)(c+1) a=(2M-2M+2\beta_2)(c+1)a=2\beta_2(c+1)a$. As noted above, $\frac{\nicefrac{1}{2}+\nicefrac{a}{3}}{a}=c+1$ and thus $(c+1)a=\nicefrac{1}{2}+\nicefrac{a}{3}$.
Hence the approximation ratio is at least 
\begin{align*}
&r_2=\frac{\beta_1+\beta_2+\alpha}{2M}\geq \frac{2M-\beta_2+2\beta_2(\nicefrac{1}{2}+\nicefrac{a}{3})}{2M}=\\
&\frac{2M+\beta_2(-1+1+\nicefrac{2a}{3})}{2M}=\frac{2M+\beta_2\cdot \nicefrac{2a}{3}}{2M},
\end{align*}
which is strictly monotonically increasing in $\beta_2$. 

As $r\geq \max\{r_1,r_2\}$, a lower bound on the approximation ratio is thus given if $r_1=r_2$ by $\frac{2M-\beta}{M}=\frac{2M+\beta\cdot \nicefrac{2a}{3}}{2M}$ for a certain $\beta$. Solving this equation leads to $\beta=\frac{M}{\nicefrac{a}{3}+1}$. The lower bound is thus given as 
\begin{align*}
r\geq \frac{2M-\beta}{M}=2-\frac{1}{\nicefrac{a}{3}+1}=1+ \frac{1}{6c+5}
\end{align*}
by the choice of $a$. Note that this lower bound is independent from $M$. Hence, $r$ is also a lower bound on the asymptotic approximation ratio of any algorithm as the instance size grows with $M$. 
\end{proof}
We obtain the following corollary:
\begin{corollary}
There is no robust/dynamic (asymptotic) approximation scheme for \BP with a migration factor $\gamma \leq \nicefrac{1}{6}(\nicefrac{1}{\epsilon}-11) = \Theta(\nicefrac{1}{\epsilon})$.
\end{corollary}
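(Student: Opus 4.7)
The plan is to derive the corollary directly from Theorem~\ref{thm:bound} by contraposition. A robust asymptotic approximation scheme for \BP is, by definition, a family of algorithms that for every $\epsilon>0$ supplies an algorithm with asymptotic approximation ratio $1+\epsilon$ and some migration factor $\gamma=\gamma(\epsilon)$ independent of the instance size. My first step is therefore to assume, toward contradiction, that for some $\epsilon>0$ the scheme delivers an algorithm whose migration factor satisfies $\gamma \leq \tfrac{1}{6}(\tfrac{1}{\epsilon}-11)$, and then to apply Theorem~\ref{thm:bound} to this single algorithm.

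The main algebraic step is to invert the lower bound of the theorem. The theorem forces the asymptotic ratio of any such algorithm to be at least $1 + \tfrac{1}{6\lceil\gamma\rceil+5}$, so together with the assumed upper bound $1+\epsilon$ this yields $\epsilon \geq \tfrac{1}{6\lceil\gamma\rceil+5}$, equivalently $\lceil\gamma\rceil \geq \tfrac{1}{6}(\tfrac{1}{\epsilon}-5)$. Using the trivial ceiling estimate $\lceil\gamma\rceil < \gamma + 1$, this rearranges to $\gamma > \tfrac{1}{6}(\tfrac{1}{\epsilon}-5) - 1 = \tfrac{1}{6}(\tfrac{1}{\epsilon}-11)$, contradicting the initial assumption on $\gamma$.

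The asymptotic statement $\gamma = \Theta(\nicefrac{1}{\epsilon})$ can be read off directly from the closed-form bound. I do not foresee any substantive obstacle in this argument; the only care required is the strict-versus-non-strict inequality when translating a bound on $\lceil\gamma\rceil$ into one on $\gamma$, which is precisely the reason why the additive correction inside the parentheses shifts from $-5$ (at the level of $\lceil\gamma\rceil$) to $-11$ (at the level of $\gamma$).
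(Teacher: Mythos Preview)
Your derivation is correct and is exactly the intended argument: the paper states the corollary immediately after Theorem~\ref{thm:bound} without an explicit proof, relying on precisely the inversion you carry out. Your handling of the ceiling via $\lceil\gamma\rceil < \gamma + 1$, which turns the bound $\lceil\gamma\rceil \ge \tfrac{1}{6}(\tfrac{1}{\epsilon}-5)$ into $\gamma > \tfrac{1}{6}(\tfrac{1}{\epsilon}-11)$, is the right way to account for the shift from $-5$ to $-11$.
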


\section{Dynamic Rounding}
\label{sec:rounding}
The goal of this section is to give a robust \ac{afptas} for the case that only large items arrive and depart. In the first subsection we present a general rounding structure. In the second subsection we give operations on how the rounding can be modified such that the general structure is preserved. We give the final algorithm in Section \ref{sec:dynamicbinpacking}, which is performed, when large items arrive or depart. Finally, the correctness is proved by using the \ac{lp}/\ac{ilp} techniques developed in \cite{jansen2013binpacking}. 

In \cite{jansen2013binpacking}, the last two authors developed a dynamic rounding technique based on an offline rounding technique from Fernandez de la Vega and Lueker \cite{de1981bin}. However, a simple adaption of these techniques does not work in the dynamic case where items may also depart. In the case of the offline rounding by Fernandez de la Vega and Lueker, items are sorted and then collected in groups of the same cardinality. As a new item arrives in an online fashion, this structure can be maintained by inserting the new item to its corresponding group. By shifting the largest item of each group to the left, the cardinality of each group (except for the first one) can be maintained. However, shifting items to the right whenever an item departs leads to difficulties in the \ac{lp}/\ac{ilp} techniques. As the rounding for a group may increase, patterns of the existing \ac{lp}/\ac{ilp} solution might become infeasible.  We overcome these difficulties by developing a new dynamic rounding structure and operations based on the offline rounding technique by Karmarkar and Karp \cite{karmarkar1982}. We felt that the dynamic rounding technique based on Karmarkar and Karp is easier to analyze since the structure can essentially be maintained by shifting items.

A \BP instance consists of a set of \emph{items} $I=\{i_{1},i_{2},\ldots,i_{n}\}$ with \emph{size function} $s:I\to [0,1]\cap \mathbb{Q}$. A feasible solution is a partition $B^{1},\ldots,B^{k}$ of $I$ such that $\sum_{i\in B^{j}}s(i)\leq 1$ for $j=1,\ldots,k$. We call a partition $B^{1},\ldots,B^{k}$ a \emph{packing} and a single set $B^{j}$ is called a \emph{bin}. The goal is to find a solution with a minimal number of bins. If the item $i$ is packed into the bin $B^{j}$, we write $B(i)=j$. The smallest value of $k\in \mathbb{N}$ such that a packing with $k$ bins exists is denoted by $\OPT(I,s)$ or if the size function is clear by $\OPT(I)$. A trivial lower bound is given by the value $\SIZE(I,s)=\sum_{i\in I}s(i)$. 

\subsection{Rounding}
\label{subsec:rounding}
To obtain an \ac{lp} formulation of fixed (independent of $|I|$) dimension, we use a rounding technique based on the offline \ac{afptas} by Karmarkar and Karp \cite{karmarkar1982}. In order to use the technique for our dynamic setting, we give a more general rounding. This generalized rounding has a certain structure that is maintained throughout the algorithm and guarantees an approximate solution for the original instance. First, we divide the set of items into \emph{small} ones and \emph{large} ones. An item $i$ is called \emph{small} if $s(i) <\nicefrac{\epsilon}{14}$, otherwise it is called \emph{large}. Instance $I$ is partitioned accordingly into a set of large items $I_{L}$ and a set of small items $I_{S}$. We treat small items and large items differently. Small items can be packed using an algorithm presented in Section \ref{sec:smallitems} while large items will be assigned using an \ac{ilp}. In this section we discuss how to handle large items. 

First, we characterize the set of large items more precisely by their sizes. We say that two large items $i,i'$ are in the same size category if there is a $\ell\in \mathbb{N}$ such that $s(i)\in (2^{-(\ell+1)},2^{-\ell}]$ and $s(i')\in (2^{-(\ell+1)},2^{-\ell}]$. Denote the set of all size categories by $W$. As every large item has size at least $\nicefrac{\epsilon}{14}$, the number of size categories is bounded by $\log(\nicefrac{1}{\epsilon})+5$. Next, items of the same size category are characterized by their \emph{block}, which is either $A$ or $B$ and their \emph{position} $r\in \mathbb{N}$ in this block. 
Therefore, we partition the set of large items into a set of groups $G \subseteq W \times \{A,B\}\times \mathbb{N}$. A group $g \in G$ consists of a triple $(\ell,X,r)$ with size category $\ell \in W$, block $X \in \{A,B \}$ and position $r \in \mathbb{N}$.
The \emph{rounding function} is defined as a function $R: I_L \mapsto G$ that maps each large item $i\in I_L$ to a \emph{group} $g\in G$. By $g[R]$ we denote the set of items being mapped to the group $g$, i.\,e.,\ $g[R]=\mengest{i\in I_L}{R(i)=g}$. 

 Let $q(\ell,X)$ be the maximal $r\in \mathbb{N}$ such that $|(\ell,X,r)[R]|> 0$. If $(\ell,X_1,r_1)$ and $(\ell,X_2,r_2)$ are two different groups, we say that $(\ell,X_1,r_1)$ is \emph{left} of $(\ell,X_2,r_1)$, if $X_1=A$ and $X_2=B$ or $X_1=X_2$ and $r_1<r_2$. We say that $(\ell,X_1,r_1)$ is \emph{right} of $(\ell,X_2,r_2)$ if it is not left of it.

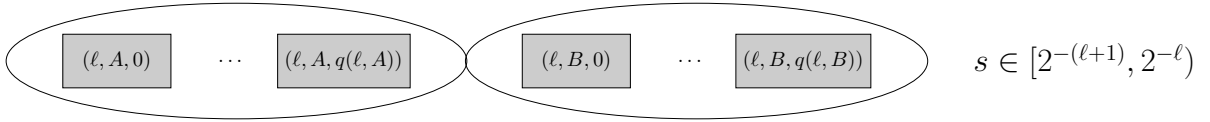
\begin{figure}[ht]
  \centering
  \scalebox{1.0}{
  \resizebox{\textwidth}{!}{
    \begin{tikzpicture}
    \drawbothgroups{(\ell,A,0),,{(\ell,A,q(\ell,A))}}{}{(\ell,B,0),,{(\ell,B,q(\ell,B))}}{}{8.5}{1}
    \node at (20,0) {{\LARGE $s\in [2^{-(\ell+1)},2^{-\ell})$}};    
            
  \end{tikzpicture}
  }}
  \caption{Grouping in $(\ell,A,\cdot)$ and $(\ell,B,\cdot)$}
\end{figure}

%}

Given an instance $(I,s)$ and a rounding function $R$, we define the rounded size function $s^R$ by rounding the size of every large item $i \in g[R]$ up to the size of the largest item in its group, hence $s^R(i)=\max\mengest{s(i')}{R(i')=R(i)}$. We denote by $\OPT(I,s^R)$ the value of an optimal solution of the rounded instance $(I,s^R)$.

Depending on a parameter $k$, we define the following properties for a rounding function~$R$.
\begin{compactdesc}
\item[(a)\label{prop:a}] For each $i\in (\ell,X,r)[R]$ we have $2^{-(\ell+1)}<s(i)\leq 2^{-\ell}$. 
\item[(b)\label{prop:b}] For each $i\in (\ell,X,r)[R]$ and each $i'\in (\ell,X,r')[R]$ and $r<r'$, we have $s(i)\geq s(i')$.
\item[(c)\label{prop:c}] For each $\ell\in W$ and $1 \leq r\leq  q(\ell,A) $ we have $|(\ell,A,r)[R]| = 2^{\ell} k$ and $|(\ell,A,0)[R]|\leq 2^{\ell}k$.
\item[(d)\label{prop:d}] For each $\ell\in W$ and each $0 \leq r\leq q(\ell,B)-1 $ we have $|(\ell,B,r)[R]| = 2^{\ell} (k-1)$  and furthermore $|(\ell,B,q(\ell,B))[R]|\leq 2^{\ell}(k-1)$.
\end{compactdesc}
Property \nameref{prop:a} guarantees that the items are categorized correctly according to their sizes. 
Property \nameref{prop:b} guarantees that items of the same size category are sorted by their size and properties
\nameref{prop:c} and \nameref{prop:d} define the number of items in each group. 
\begin{lemma}\label{lem1}
For $k = \left\lfloor \frac{\SIZE(I_{L})\cdot
    \epsilon}{2(\lfloor \log(\nicefrac{1}{\epsilon})\rfloor
    +5)}\right\rfloor$ the number of non-empty groups in $G$ is
    bounded from above by $\mathcal{O}(\nicefrac{1}{\epsilon}\log(\nicefrac{1}{\epsilon}))$ assuming that $\SIZE(I_L) > \nicefrac{8}{\epsilon}\cdot (\lceil \log (\nicefrac{1}{\epsilon})\rceil+5)$.
\end{lemma}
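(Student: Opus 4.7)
The plan is to count non-empty groups category by category, bounding the count in each category $\ell$ by the total size of items in that category divided by a quantity of order $k$, and then summing over the $O(\log(1/\epsilon))$ categories.

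First, I would observe that for a fixed size category $\ell \in W$, properties \nameref{prop:c} and \nameref{prop:d} tell us that at most two groups in category $\ell$ can be under-full (namely $(\ell,A,0)$ and $(\ell,B,q(\ell,B))$), while all remaining non-empty groups have exactly $2^\ell k$ items (for the $A$-block) or exactly $2^\ell(k-1)$ items (for the $B$-block). Combined with property \nameref{prop:a}, which guarantees $s(i) > 2^{-(\ell+1)}$ for any $i$ in category $\ell$, each completely filled $A$-group contributes size strictly more than $2^\ell k \cdot 2^{-(\ell+1)} = k/2$, and each completely filled $B$-group contributes size strictly more than $(k-1)/2$. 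Thus the number of filled groups in category $\ell$ is at most $2\,\SIZE(I_L^{(\ell)})/(k-1)$, where $I_L^{(\ell)}$ denotes the large items in category $\ell$, and the total number of non-empty groups in category $\ell$ is at most $2\,\SIZE(I_L^{(\ell)})/(k-1) + 2$.

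Next I would sum this bound over all $\ell \in W$. Since $|W| \leq \lceil\log(1/\epsilon)\rceil + 5$ and $\sum_\ell \SIZE(I_L^{(\ell)}) = \SIZE(I_L)$, the total number of non-empty groups is bounded by
\[
\frac{2\,\SIZE(I_L)}{k-1} + 2(\lceil\log(1/\epsilon)\rceil + 5).
\]
The second summand is already $O(\log(1/\epsilon))$, so it suffices to show that $\SIZE(I_L)/(k-1) = O(\log(1/\epsilon)/\epsilon)$.

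For this final step I would use the assumption $\SIZE(I_L) > (8/\epsilon)(\lceil\log(1/\epsilon)\rceil + 5)$ to show that $k$ is at least a small absolute constant (in particular $k \geq 4$), so that $k - 1 \geq 3k/4$ and replacing $k-1$ by $k$ only costs a constant factor. Then plugging in $k \geq \SIZE(I_L)\epsilon/(2(\lfloor\log(1/\epsilon)\rfloor+5)) - 1$ — where the assumption on $\SIZE(I_L)$ again ensures the $-1$ can be absorbed into a constant factor — yields $\SIZE(I_L)/k = O(\log(1/\epsilon)/\epsilon)$ and therefore the claimed bound of $O((1/\epsilon)\log(1/\epsilon))$. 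The main obstacle is nothing conceptually deep; it is just the bookkeeping in this last step, namely verifying that the lower bound on $\SIZE(I_L)$ is strong enough both to force $k$ to be at least a constant (so $k-1 = \Theta(k)$) and to absorb the rounding error in the definition of $k$ as a floor.
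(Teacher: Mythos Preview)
Your proposal is correct and follows essentially the same route as the paper: both bound the number of non-empty groups in each size category $\ell$ by $2\,\SIZE(I_L^{(\ell)})/(k-1)+2$ via properties \nameref{prop:a}, \nameref{prop:c}, \nameref{prop:d}, and then sum over the $O(\log(1/\epsilon))$ categories. The only cosmetic difference is in the final algebra: the paper directly manipulates $\frac{2\,\SIZE(I_L)}{k-1}$ to obtain the explicit bound $\frac{8}{\epsilon}(\lfloor\log(1/\epsilon)\rfloor+5)$, whereas you first note $k\ge 4$ to replace $k-1$ by $\Theta(k)$ and then absorb the floor; both arrive at the same $O((1/\epsilon)\log(1/\epsilon))$ conclusion.
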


 \begin{proof}
Using the definition of $k$ and the assumption, we show $\frac{2\SIZE(I_{L})}{k-1}\leq \nicefrac{8}{\epsilon}(\lfloor \log(\nicefrac{1}{\epsilon})\rfloor +5)$. We have
\begin{align*}
&\frac{2\SIZE(I_L)}{k-1} =
\frac{2\SIZE(I_L)}{\left\lfloor \frac{\SIZE(I_{L})\cdot
    \epsilon}{2(\lfloor \log(\nicefrac{1}{\epsilon})\rfloor +5)}\right\rfloor-1}\leq
    \frac{2\SIZE(I_L)}{\frac{\SIZE(I_L)\cdot \epsilon}{2(\lfloor \log( \nicefrac{1}{\epsilon})\rfloor +5)}-2}=\\
    &\frac{2\SIZE(I_L)}{\frac{\SIZE(I_L)\cdot \epsilon- 4(\lfloor\log (\nicefrac{1}{\epsilon})\rfloor +5)}{2(\lfloor\log( \nicefrac{1}{\epsilon})\rfloor +5)}}=
    \frac{2\SIZE(I_L)\cdot 2(\lfloor \log(\nicefrac{1}{\epsilon})\rfloor +5)}{\SIZE(I_L)\cdot \epsilon - 4(\lfloor\log(\nicefrac{1}{\epsilon})\rfloor +5)}
\end{align*}
As $\SIZE(I_L)> \nicefrac{8}{\epsilon}\cdot (\lceil \log (\nicefrac{1}{\epsilon})\rceil+5)$, we have $\nicefrac{\epsilon}{2}\SIZE(I_L) > 4(\lfloor \log(\nicefrac{1}{\epsilon})\rfloor +5)$. We can thus bound:

\begin{align*}
&\frac{2\SIZE(I_L)\cdot 2(\lfloor \log(\nicefrac{1}{\epsilon})\rfloor +5)}{\SIZE(I_L)\cdot \epsilon - 4(\lfloor\log(\nicefrac{1}{\epsilon})\rfloor +5)}\leq 
\frac{2\SIZE(I_L)\cdot 2(\lfloor \log(\nicefrac{1}{\epsilon})\rfloor +5)}{\SIZE(I_L)\cdot \epsilon - \nicefrac{\epsilon}{2}\SIZE(I_L) +1}=\\
&\frac{2\SIZE(I_L)\cdot 2(\lfloor \log(\nicefrac{1}{\epsilon})\rfloor +5)}{\SIZE(I_L)\cdot \nicefrac{\epsilon}{2}}=
\frac{4(\lfloor \log(\nicefrac{1}{\epsilon})\rfloor +5)}{\nicefrac{\epsilon}{2}}=
\frac{8(\lfloor \log(\nicefrac{1}{\epsilon})\rfloor +5)}{\epsilon}
\end{align*}

Note that property \nameref{prop:c} and property \nameref{prop:d} imply $|I(\ell)|\geq (q(\ell,A)+q(\ell,B)-2)2^{\ell}(k-1)$ . Hence property \nameref{prop:a} implies that $\SIZE(I(\ell),s)\geq |I(\ell)|2^{-\ell+1}\geq (q(\ell,A)+q(\ell,B)-2)(k-1)/2$ and therefore $q(\ell,A)+q(\ell,B)\leq 2\SIZE(I(\ell))/(k-1) +2$. We can now bound the total number of used groups by
 \begin{align*}
 &\sum_{\ell\in W}q(\ell,A)+q(\ell,B)\leq \sum_{\ell\in W} \left (\frac{2\SIZE(I(\ell))}{k-1}+2\right )\\
 &=2|W|+\frac{2}{k-1}\sum_{\ell\in W}\SIZE(I(\ell))= 2|W|+\frac{2}{k-1}\SIZE(I_{L})\\
 &\leq 2|W|+\frac{8}{\epsilon}(\lfloor \log(\nicefrac{1}{\epsilon})\rfloor +5) \leq\\
 & 2\cdot (\log(\nicefrac{1}{\epsilon})+5)+\frac{8}{\epsilon}(\log (\nicefrac{1}{\epsilon})+5)=\\
 &(\nicefrac{8}{\epsilon}+2)(\log (\nicefrac{1}{\epsilon})+5)\in \mathcal{O}(\nicefrac{1}{\epsilon}\log(\nicefrac{1}{\epsilon}))
 \end{align*}
 The total number of used groups is therefore bounded by $\mathcal{O}(\nicefrac{1}{\epsilon}\log(\nicefrac{1}{\epsilon}))$.

\end{proof}

The following lemma shows that the rounding function does in fact yield a $(1+\epsilon)$-approximation.

\begin{lemma}\label{lem2}

Given an instance $(I,s)$ with items greater than $\epsilon/14$ and a rounding function~$R$ fulfilling properties
  \nameref{prop:a} to \nameref{prop:d}, then $\OPT(I,s^R) \leq (1+\epsilon)\mathit{OPT}(I,s)$.

\end{lemma}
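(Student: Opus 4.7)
The plan is to apply a Karmarkar--Karp style ``shifting'' argument, adapted to the two-block $A/B$ group structure. Starting from any optimal packing of $(I,s)$ using $\OPT(I,s)$ bins, I would transform it into a packing of the rounded instance $(I,s^R)$ by performing a shift within each size category $\ell \in W$ and each block $X \in \{A,B\}$ separately. Concretely, for $r = 1,2,\ldots,q(\ell,X)$ I would place rounded items of group $(\ell,X,r)$ into the slots formerly occupied by original items of group $(\ell,X,r-1)$ in the packing. Property~(b) guarantees that the rounded size of any item of $(\ell,X,r)$, namely $\max\{s(i')\mid R(i')=(\ell,X,r)\}$, is at most the smallest original size in $(\ell,X,r-1)$, so this replacement does not increase the load of any bin.

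Next I would bound the rounded items that remain unplaced after the shift. In the $B$-block, property~(d) gives $|(\ell,B,r-1)[R]| \geq |(\ell,B,r)[R]|$ for every $r \geq 1$, so every rounded item of $(\ell,B,r)$ with $r \geq 1$ finds a slot, and only the at most $2^{\ell}(k-1)$ items of the topmost group $(\ell,B,0)$ remain unplaced. In the $A$-block, property~(c) gives matching counts for all pairs of consecutive groups except possibly $(\ell,A,0)$ and $(\ell,A,1)$; hence the unplaced $A$-items are the $|(\ell,A,0)[R]|$ items of $(\ell,A,0)$ together with the $2^{\ell}k - |(\ell,A,0)[R]|$ overflow from $(\ell,A,1)$, summing to exactly $2^{\ell}k$ items. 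By property~(a) each unplaced item in size category $\ell$ has rounded size at most $2^{-\ell}$, so $2^{\ell}$ of them fit into a single fresh bin; this costs at most $k$ bins for $A$ and at most $k-1$ bins for $B$ per size category, i.e., at most $2k$ new bins per size category.

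Summed over all $\ell \in W$, the number of new bins is at most $2k|W|$. Using the value of $k$ from Lemma~\ref{lem1} together with $|W|\leq \lfloor\log(\nicefrac{1}{\epsilon})\rfloor + 5$, this is at most $\epsilon\cdot\SIZE(I_L) \leq \epsilon\cdot\OPT(I,s)$, the last inequality using that the total item size lower-bounds the optimum. Adding the $\OPT(I,s)$ original bins yields $\OPT(I,s^R) \leq (1+\epsilon)\OPT(I,s)$. The main bookkeeping challenge is the asymmetry between the two blocks introduced by properties~(c) and~(d): in $A$ the possibly deficient group sits at the top of the chain, at index $0$, causing overflow when $(\ell,A,1)$ is shifted into $(\ell,A,0)$; in $B$ the deficient group sits at the bottom, at index $q(\ell,B)$, so its shift into $(\ell,B,q(\ell,B)-1)$ is always absorbed. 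Verifying that the $A$-block overflow combined with $(\ell,A,0)$ still totals exactly $2^{\ell}k$ items, and therefore fits in $k$ fresh bins per category, is the critical counting step that keeps the extra bin count within the $\epsilon\cdot \OPT$ budget.
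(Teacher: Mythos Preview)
Your argument is correct and follows the same Karmarkar--Karp shifting idea as the paper, but with a slightly different decomposition of the ``leftover'' items. The paper sets aside the entire first two $A$-groups $K=\bigcup_\ell (\ell,A,0)[R]\cup(\ell,A,1)[R]$ (at most $2\cdot 2^{\ell}k$ items, hence $\le 2k$ fresh bins per category) and then maps every remaining rounded group one step to the left in the global order $A_0,\dots,A_{q_A},B_0,\dots,B_{q_B}$; in particular $(\ell,B,0)$ is absorbed into the slot of $(\ell,A,q(\ell,A))$. You instead shift strictly within each block, so your leftovers are $(\ell,A,0)$ together with the overflow of $(\ell,A,1)$ (exactly $2^{\ell}k$ items, $k$ bins) plus $(\ell,B,0)$ ($\le 2^{\ell}(k-1)$ items, $k-1$ bins), again at most $2k$ bins per category. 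The advantage of your variant is that it relies only on property~(b) as stated (a within-block ordering), whereas the paper's mapping of $(\ell,B,0)$ into $(\ell,A,q(\ell,A))$ implicitly assumes the cross-block inequality $\max_{i\in(\ell,B,0)}s(i)\le\min_{i\in(\ell,A,q(\ell,A))}s(i)$; your argument sidesteps that assumption entirely. The final arithmetic, $2k|W|\le\epsilon\,\SIZE(I_L)\le\epsilon\,\OPT(I,s)$, is identical in both proofs.
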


\begin{proof}
As $(I,s)$ only contains large items, $I_L=I$. Define for every $\ell$ the instances
$J_{\ell}=\bigcup_{r=2}^{q(\ell,A)}(\ell,A,r)[R]\cup \bigcup_{r=0}^{q(\ell,B)}(\ell,B,r)[R]$,
$J=\bigcup_{\ell\in W}J_{\ell}$ and
$K=\bigcup_{\ell\in W}(\ell,A,0)[R]\cup
(\ell,A,1)[R]$. We will now prove, that the error generated by this rounding
is bounded by $\epsilon$. As each solution to $J\cup K$ yields a
solution to $J$ and a solution to $K$, we get $\OPT(J\cup
K,s^R)\leq \OPT(J,s^{R})+\OPT(K,s^{R})$. For $i\in
(\ell,A,0)[R]\cup (\ell,A,1)[R]$, we have $s(i)\leq \max\mengest{s(i')}{i'\in
  (\ell,A,0)[R]}\leq 2^{-\ell}$ because of property \nameref{prop:a}. We can therefore pack at least $2^{\ell}$ items
from $(\ell,A,0)[R]\cup (\ell,A,1)[R]$ into a single bin. Hence, we get with property \nameref{prop:c}:
\begin{align*}
  &\OPT((\ell,A,0)[R]\cup (\ell,A,1)[R]),s^{R})\\
  &\leq (|(\ell,A,0)[R]|+|(\ell,A,1)[R]|)\cdot 2^{-\ell}\\
%  &\leq 2\cdot k\cdot 2^{\ell}\cdot 2^{-\ell}\\
  &=2k
\end{align*}
We can therefore bound $\OPT(K,s^{R})$ as follows:
\begin{align*}
  \OPT(K,s^{R})&\leq
  \sum_{\ell\in W}\OPT((\ell,A,0)[R]\cup (\ell,A,1)[R]),s^{R})\\
  &\leq \sum_{\ell\in W}2k\\
  &\leq 2(\lfloor\log(\nicefrac{1}{\epsilon})\rfloor+5)k\\
  &=2\lfloor\frac{\SIZE(I)\epsilon}{2(\lfloor\log(\nicefrac{1}{\epsilon})\rfloor+5)}\rfloor\cdot
  (\lfloor\log(\nicefrac{1}{\epsilon})\rfloor+5)\\
  &\leq
  2\frac{\SIZE(I)\epsilon}{2(\lfloor\log(\nicefrac{1}{\epsilon})\rfloor+5)}\cdot (\lfloor\log(\nicefrac{1}{\epsilon})\rfloor+5)\\
  &=\epsilon \SIZE(I)\\
  &\leq \epsilon \OPT(I,s)\
\end{align*}
Using property \nameref{prop:b} for each item in $((\ell,X,r+1)[R]),s^R)$ we find a unique larger item in $(\ell,X,r)[R]$.
    Therefore we have for every item in the rounded instance $(J,s^R)$ an item with larger size in instance $(I,s)$ and hence
	\begin{align*}
	\OPT(J,s^{R})\leq \OPT(I,s).
   \end{align*}
   
The optimal value of the rounded solution can be bounded by 
\begin{align*}
\OPT(I,s^{R})\leq \OPT(J,s^{R})+\OPT(K,s^{R})\leq (1+\epsilon)\OPT(I,s).
\end{align*}
\end{proof}

We therefore have a rounding function, which generates only $\cO(\nicefrac{1}{\epsilon}\log(\nicefrac{1}{\epsilon}))$ different item sizes and the generated error is bounded by $\epsilon$.

\subsection{Rounding Operations}

\label{subsec:operations}

Let us consider the case where large items arrive and depart in an online fashion.
Formally this is described by a sequence of pairs $(i_{1},A_{1}),\ldots,(i_{n},A_{n})$ where $A_{i}\in \{\operatorname{Insert},\operatorname{Delete}\}$. 
At each time $t\in \{1,\ldots,n\}$ we need to pack the item $i_{t}$ into the corresponding packing of $i_{1},\ldots,i_{t-1}$ if $A_{i}=\operatorname{Insert}$ or remove the item $i_{t}$ from the corresponding packing of $i_{1},\ldots,i_{t-1}$ if $A_{i}=\operatorname{Delete}$.
We will denote the instance $i_{1},\ldots,i_{t}$ at time $t$ by $I(t)$ and the corresponding packing by $B_t$. We will also round our items and denote the rounding function at time $t$ by $R_t$. 
The large items of $I(t)$ are denoted by $I_{L}(t)$. At time $t$ we are allowed to repack  several items with a total size of $\beta\cdot s(i_{t})$ but we intend to keep the migration factor $\beta$ as small as possible. The term $\operatorname{repack}(t)=\sum_{i, B_{t-1}(i)\neq B_t(i)}s(i)$ denotes the sum of the items which are moved at time $t$, the \emph{migration factor} $\beta$ of an algorithm is then defined as $\max_t\menge{\nicefrac{\operatorname{repack}(t)}{s(i_t)}}$. As the value of $\SIZE$ will also change over the time, we define the value $\kappa(t)$ as
\begin{align*}
\kappa(t)=\frac{\SIZE(I_{L}(t))\cdot \epsilon}{2(\lfloor \log (\nicefrac{1}{\epsilon})\rfloor+5)}.
\end{align*}
As shown in Lemma \ref{lem1}, we will make use of the value $k(t):= \lfloor \kappa(t)\rfloor$.

We present operations that modify the current rounding $R_t$ and packing $B_t$ with its corresponding \ac{lp}/\ac{ilp} solutions to give a solution for the new instance $I(t+1)$. At every time $t$ the rounding $R_t$ maintains properties $\nameref{prop:a}$ to $\nameref{prop:d}$. Therefore
the rounding provides an asymptotic approximation ratio of $1+\epsilon$ (Lemma \ref{lem2}) while maintaining only $\mathcal{O}(\nicefrac{1}{\epsilon}\log(\nicefrac{1}{\epsilon}))$ many groups (Lemma \ref{lem1}). We will now present a way how to adapt this rounding to a dynamic setting, where items arrive or depart online.

Our rounding $R_{t}$ is manipulated by different \emph{operations}, called the \emph{insert, delete, shiftA} and \emph{shiftB} operation. Some ideas behind the operations are inspired by Epstein and Levin \cite{epstein2006robust}. The insert operation is performed whenever a large item arrives and the delete operation
is performed whenever a large item departs. The shiftA/shiftB operations are used to modify the number of groups that are contained in the $A$ and $B$ block. As we often need to filter the largest items of a group $g$ belonging to a rounding $R$, we denote this item by $\lambda(g,R)$.

\begin{itemize}

\item shift:
A shift operation takes two groups $(\ell,X_1,r_1)$ and
$(\ell,X_2,r_2)$, where $(\ell,X_1,r_1)$ is left of $(\ell,X_2,r_2)$,
and a rounding function $R$ and produces a new rounding function $R'$
and packing $B'$ by shifting the largest item from $(\ell,X_2,r_2)$ to
$(\ell,X_2,r_2-1)$ and so on until $(\ell,X_1,r_1)$ is reached.
\begin{itemize}
\item For all groups $g$ left of $(\ell,X_1,r_1)$ or right of $(\ell,X_2,r_2)$ set $g[R']=g[R]$. 
\item As we move an items out of $(\ell,X_2,r_2)$, set 
\begin{align*}
(\ell,X_2,r_2)[R']=(\ell,X_2,r_2)[R]\setminus \lambda((\ell,X_2,r_2),R).
\end{align*}
\item As we move an item into $(\ell,X_1,r_1)$, set
\begin{align*}
(\ell,X_1,r_1)[R']=(\ell,X_1,r_1)[R]\cup \lambda(\RIGHT(\ell,X_1,r_1),R).
\end{align*}
\end{itemize}

Whenever a shift-operation on $(\ell,X_{1},r_{1})$ and $(\ell,X_{2},r_{2})$ is performed, the \ac{lp} solution $x$ and the corresponding \ac{ilp} solution $y$ is updated to $x'$ and $y'$.
Let $C_i$ be a configuration containing $\lambda((\ell,X_2,r_2),R)$ with $x_i\geq 1$. Let $C_j = C_i \setminus s(\lambda((\ell,X_2,r_2),R))$ be the configuration without $\lambda((\ell,X_2,r_2),R)$. Set $x'_j = x_j +1$, $y'_j = y_j +1$ and $x'_i = x_i -1$, $y'_i = y_i -1$. 
In order to add the new item in $(\ell,X_1,r_1)$, set $x_h' = x_h +1$ and $y'_h = y_h +1$ for the index $h$ with $C_h = \{1:s(\lambda((\ell,X_1,r_1),R)) \}$. The remaining configurations do not change.

\begin{figure}[ht]
\centering
 \scalebox{0.9}{
   \resizebox{\textwidth}{!}{
 \begin{tikzpicture}
  \drawbothgroups{,(\ell,X_1,r_1),,{(\ell,A,q(\ell,A))}}{4/3,3/2}{(\ell,B,0),,(\ell,X_2,r_2),}{3/2,2/1}{11}{}
 \draw (j1.north west) edge[->,>=open triangle 90,thick,bend right=40] (i4.north east);  
 \end{tikzpicture}
 }}
 \caption{shift with parameters $(\ell,X_1,r_1)$ and $(\ell,X_2,r_2)$}
\end{figure}
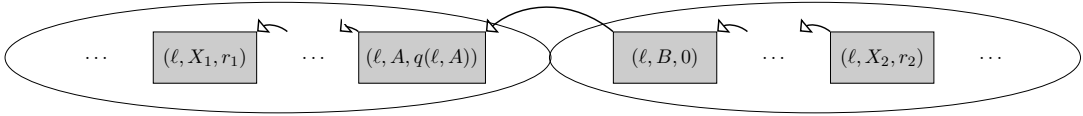

\item Insert: To insert item $i_t$, find the corresponding group $(\ell,X,r)$ with
\begin{compactitem}
\item $s(i_t)\in [\ell,2\ell)$,
\item $\min\mengest{s(i)}{i\in (\ell,X,r-1)}> s(i_t)$ and
\item $s(\lambda((\ell,X,r+1),R))\leq s(i_t)$.
\end{compactitem}
We will then insert $i_t$ into $(\ell,X,r)$ and get the rounding $R'$ by shifting the largest element of $(\ell,X,r)$ to $(\ell,X,r-1)$ and the largest item of $(\ell,X,r-1)$ to $(\ell,X,r-2)$ and so on until $(\ell,A,0)$ is reached. Formally, set $R^*(i_{t})=(\ell,X,r)$ and $R^*(i_j)=R(i_j)$ for $j\neq t$.
The rounding function $R'$ is then obtained by applying the shift operation on $R^*$ i.e. the new rounding is $R'=\operatorname{shift}((\ell,A,0),(\ell,X,r),R^*)$.

In order to pack the new item, let $i$ be the index with $C_i=\{1:s(\lambda((\ell,X,r),R'))\}$, as $i_t$ is rounded to the largest size in $(\ell,X,r)[R]$ after the shift. Place item $i_t$ into a new bin by setting $B'(i_t)=\max_j B(i_j)+1$ and $x_i' = x_i +1$ and $y_i'=y_i +1$. 

If $|(\ell, A ,0)[R']|=2^{\ell}\cdot k+1$, we have to create a new rounding group $(\ell, A ,-1)$. Additionally we shift the largest item in $(\ell,A,0)[R']$ to  the new group $(\ell,A,-1)[R']$. The final rounding $R''$ is then obtained by setting $(\ell,A,r)[R'']=(\ell,A,r-1)[R']$ i.e. incrementing the number of each rounding group by $1$. Note that the largest item in $(\ell,A,0)[R']$ is already packed into a bin of its own due to the shift operation. Hence, no change in the packing or the \ac{lp}/\ac{ilp} solution is needed. The insert operation thus yields a new packing $B'$ (or $B''$) which uses two more bins than the packing $B$.

\begin{figure}[ht]
\centering
  \scalebox{0.7}{
\resizebox{\textwidth}{!}{
  \begin{tikzpicture}
   \drawgroups{i}{(\ell,A,0),,(\ell,X,r),,{(\ell,X,q(\ell,X))}}{3/2,2/1}{}
  
  \node (i) [draw,rectangle,below = of i3,yshift=0.45cm] (i){$i$}; 
  \draw[->,thick] (i) to (i3);
  
  \end{tikzpicture}}}
  \caption{Insert $i$ into $(\ell,X,\cdot)$}
\end{figure}
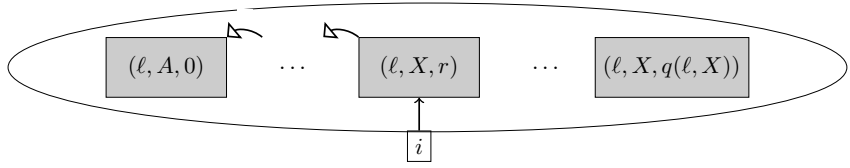

\item Delete: To delete item $i_t$ from the 
group $(\ell,X,r)$ with $R(i_t)=(\ell,X,r)$, we remove $i_t$ from this group and move the largest item from $(\ell,X,r+1)$ into $(\ell,X,r)$ and the largest item from $(\ell,X,r+2)$ into $(\ell,X,r+1)$ and so on until $(\ell,B,q(\ell,B))$. Formally the rounding $R'$ is described by the expression $\operatorname{shift}((\ell,X,r),(\ell,B,q(\ell,B)),R^*)$ where
\begin{align*}
g[R^*]=
\begin{cases}
(\ell,X,r)[R]\setminus \{i_t\} & g=(\ell,X,r)\\
g[R] & \text{else}
\end{cases}.
\end{align*}
As a single shift operation is used, the delete operation yields a new packing $B'$ which uses one more bin than the packing $B$.

For the \ac{lp}/ \ac{ilp} solution let $C_i$ be a configuration containing $\lambda((\ell,B,q(\ell,B)),R)$ with $x_i\geq 1$. Let $C_j = C_i \ s(\lambda((\ell,B,q(\ell,B)),R))$ be the configuration without the item $\lambda((\ell,B,q(\ell,B)),R)$. Set $x'_j = x_j +1$, $y'_j = y_j +1$ and
$x'_i = x_i -1$, $y'_i = y_i -1$. Set $B'(i_j) = B(i_j)$ for all  $j \neq t$ in order to remove the item $i_t$ from the packing.

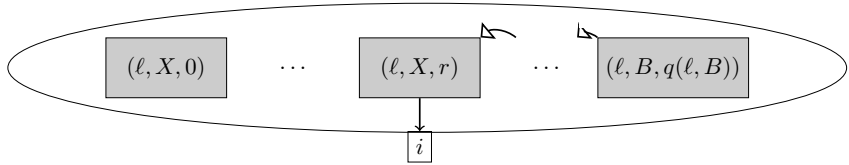
\begin{figure}[ht]
\centering
  \scalebox{0.7}{
  \resizebox{\textwidth}{!}{
  \begin{tikzpicture}
   \drawgroups{i}{(\ell,X,0),,(\ell,X,r),,{(\ell,B,q(\ell,B))}}{4/3,5/4}{}

  \node (i) [draw,rectangle,below = of i3,yshift=0.45cm] (i){$i$}; 
  \draw[->,thick] (i3) to (i);
  
  \end{tikzpicture}}
  }
  \caption{Delete $i$ from $(\ell,X,\cdot)$}
\end{figure}

\end{itemize}

To control the number of groups in $A$ and $B$ we introduce operations shiftA and shiftB that increase or decrease the number of groups in $A$ respectively $B$. An operation shiftA increases the number of groups in $A$ by $1$ and decreases the number of groups in $B$ by $1$. Operations shiftB is doing the inverse of shiftA.

\begin{itemize}

\item shiftA:
In order to move a group from $B$ to $A$ we will perform exactly $2^\ell$ times the operation $\operatorname{shift}((\ell,B,0),(\ell,B,q(\ell,B)),R)$ to receive the rounding $R^*$. Instead of opening a new bin for each of those $2^\ell$ items in every shift operation, we rather open one bin containing all items. Since every item in the corresponding size category has size $\leq 2^{-\ell}$, the items fit into a single bin.
The group $(\ell,B,0)$ has now the same size as the groups in $(\ell,A,\cdot)$. We transfer $(\ell,B,0)$ to block $A$. Hence we define for the final rounding $R'$ that $(\ell,A,r)[R']=(\ell,A,r)[R^*]$ for $r=0,\ldots,q(\ell,A)$ and $(\ell,A,q(\ell,A)+1)[R']=(\ell,B,0)[R^*]$ as well as $(\ell,B,r)[R']=(\ell,B,r+1)[R^*]$ for $r=0,\ldots,q(\ell,B)-1$. The resulting packing $B'$ hence uses one more bin than the packing $B$.

\begin{figure}[ht]
\centering
  \scalebox{0.7}{
  \resizebox{\textwidth}{!}{
  \begin{tikzpicture}
     \drawgroups{i}{(\ell,B,0),,(\ell,B,r),,{(\ell,B,q(\ell,B))}}{2/1,3/2,4/3,5/4}{\large{$2^\ell$}}

  \end{tikzpicture}}
  }
  \caption{shiftA}
\end{figure}
\item shiftB:
In order to move a group from $A$ to $B$ we will perform exactly $2^\ell$ times the operation $\operatorname{shift}((\ell,A,0),(\ell,A,q(\ell,A)),R)$ to receive the rounding $R^*$. As before in shiftA, we open a single bin containing all of the $2^\ell$ items. The group $(\ell,A,q(\ell,A))$ has now the same size as the groups in $(\ell,B,\cdot)$. We transfer $(\ell,A,q(\ell,A))$ to block $B$. Similar to shiftA we define for the final rounding $R'$ that $(\ell,A,r)[R']=(\ell,A,r)[R^*]$ for $r=0,\ldots,q(\ell,A)-1$ and $(\ell,B,0)[R']=(\ell,A,q(\ell,A))[R^*]$ as well as $(\ell,B,r+1)[R']=(\ell,B,r)[R^*]$. The resulting packing $B'$ hence uses one more bin than the packing $B$.

\end{itemize}

\begin{lemma}
\label{lem4}
Let $R$ be a rounding function fulfilling properties $\nameref{prop:a}$ to $\nameref{prop:d}$. Applying one of the operations insert, delete, shiftA or shiftB on $R$ results in a rounding function $R'$ fulfilling properties $\nameref{prop:a}$ to $\nameref{prop:d}$. 
\end{lemma}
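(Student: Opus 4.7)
I would proceed by case analysis over the four operations, checking properties \nameref{prop:a}--\nameref{prop:d} in turn. Property \nameref{prop:a} is immediate throughout: none of insert, delete, shiftA, or shiftB changes the physical size $s(i)$ of any item, an item's level $\ell\in W$ is fixed by $s(i)$ alone, and each operation only ever places items into groups of the correct level. So it remains to check \nameref{prop:b}--\nameref{prop:d} for each of the four operations.

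The unifying device will be a short ``shift-lemma'': in a single $\operatorname{shift}((\ell,X_1,r_1),(\ell,X_2,r_2),R)$, every group strictly between the endpoints simultaneously loses its own largest item and gains the largest item of its right neighbour, so its cardinality is unchanged; the left endpoint gains exactly one item and the right endpoint loses exactly one. Because the element transported into any group is precisely the former maximum of the group immediately to its right, and because the pre-shift data satisfies the sort invariant, the transported item is $\geq$ every item remaining in its source group and $\leq$ every item already present in its destination. Hence \nameref{prop:b} is preserved by each shift, and \nameref{prop:c},\nameref{prop:d} reduce to tracking only the two endpoint cardinalities.

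I would then dispatch the four cases. For \emph{insert}, placing $i_t$ in $(\ell,X,r)$ and then shifting to $(\ell,A,0)$ leaves $|(\ell,X,r)|$ unchanged and increments $|(\ell,A,0)|$ by one; if this count reaches $2^{\ell}k+1$ the explicit relabel from the definition puts exactly one item into the new $(\ell,A,0)$ and slides every old $A$-group (each of size $2^{\ell}k$) one index to the right, restoring \nameref{prop:c} without touching \nameref{prop:d}. The two preconditions on $s(i_t)$ built into the insert step guarantee that \nameref{prop:b} holds immediately after insertion, and the shift-lemma preserves it thereafter. For \emph{delete}, the composite effect is to subtract one from $|(\ell,B,q(\ell,B))|$ and leave all other cardinalities fixed, which only tightens \nameref{prop:d}; if this group empties I decrement $q(\ell,B)$. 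For \emph{shiftA} I iterate the shift-lemma $2^{\ell}$ times along $(\ell,B,0),\ldots,(\ell,B,q(\ell,B))$, so $|(\ell,B,0)|$ grows from $2^{\ell}(k-1)$ to exactly $2^{\ell}k$ (the correct $A$-cardinality), the intermediate $B$-groups stay at $2^{\ell}(k-1)$, and the terminal $B$-group shrinks by $2^{\ell}$; the subsequent promotion of $(\ell,B,0)$ to $(\ell,A,q(\ell,A)+1)$ and downshift of the remaining $B$-indices restore \nameref{prop:c} and \nameref{prop:d}. The \emph{shiftB} case is the mirror argument inside the $A$-block.

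The main obstacle will be the bookkeeping in the relabelling steps: I must verify that after each relabel (i) the new endpoint groups satisfy the $\leq$-clauses of \nameref{prop:c} and \nameref{prop:d} with the correct right-hand sides $2^{\ell}k$ and $2^{\ell}(k-1)$, and (ii) every non-endpoint group still holds exactly $2^{\ell}k$ or $2^{\ell}(k-1)$ items. These are elementary arithmetic checks driven by the identity $2^{\ell}(k-1)+2^{\ell}=2^{\ell}k$, but they are the only place in the argument that is not handled directly by the shift-lemma.
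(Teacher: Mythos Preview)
Your proposal is correct and follows essentially the same approach as the paper's own proof: both reduce the verification to tracking how a single shift affects the endpoint cardinalities (your ``shift-lemma'' is exactly the observation the paper uses), both dispatch insert/delete by noting that the composite effect only perturbs the flexible groups $(\ell,A,0)$ and $(\ell,B,q(\ell,B))$, and both handle shiftA/shiftB via the identity $2^{\ell}(k-1)+2^{\ell}=2^{\ell}k$. Your write-up is somewhat more explicit than the paper's about why property \nameref{prop:b} survives each shift and about the post-insert relabel, but the argument is the same.
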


\begin{proof}
Property $\nameref{prop:a}$ is always fulfilled as no item is moved between different size categories and the insert operation inserts an item into its appropriate size category.

As the order of items never changes and the insert operation inserts an item into the appropriate place, property $\nameref{prop:b}$ also holds.

For properties $\nameref{prop:c}$ and $\nameref{prop:d}$ we first note that the operation $\operatorname{shift}(g,g',R)$ increases the number of items in $g$ by $1$ and decreases the number of items in $g'$ by $1$.
The insert operation consists of adding a new item to a group $g$ followed by a $\operatorname{shift}((\ell,A,0),g,R)$ operation. Hence the number of items in every group except for $(\ell,A,0)$ (which is increased by $1$) remains the same. The delete operation consists of removing an item from a group $g$ followed by a $\operatorname{shift}(g,(\ell,B,q(\ell,B)),R)$ operation. Therefore the number of items in all groups except for $(\ell,B,q(\ell,B))$ (which is decreased by $1$) remains the same. As the number of items in $(\ell,A,0)$ and $(\ell,B,q(\ell,B))$ are treated seperately and may be smaller than $2^{\ell}\cdot k$ respectively $2^{\ell}\cdot (k-1)$, the properties $\nameref{prop:c}$ and $\nameref{prop:d}$ are always fulfilled for the insert and the delete operation. Concerning the shiftA operation we increase the number of items in a group $(\ell,B,0)$ by $2^\ell$. Therefore it now contains $2^{\ell}(k-1)+2^\ell= 2^{\ell}\cdot k$ items, which equals the number of items in groups of block $A$. As this group is now moved to block $A$, the properties $\nameref{prop:c}$ and $\nameref{prop:d}$ are fulfilled. Symmetrically the shiftB operation decreases the number of items in a group $(\ell,A,q(\ell,A))$ by $2^{\ell}$. Therefore the number of items in the group is now $2^{\ell}\cdot k - 2^\ell=2^{\ell}\cdot (k-1)$, which equals the number of items in the groups of block $B$. As this group is now moved to block $B$, the properties $\nameref{prop:c}$ and $\nameref{prop:d}$ are fulfilled. 
\end{proof}

According to Lemma \ref{lem1} the rounded instance $(I,s^R)$ has $\mathcal{O}(\nicefrac{1}{\epsilon}\log (\nicefrac{1}{\epsilon}))$ different item sizes (given a suitable $k$). Using the \ac{lp} formulation of Eisemann \cite{eisemann1957trim}, the resulting \ac{lp} called $LP(I,s^R)$ has $m = \mathcal{O}(\nicefrac{1}{\epsilon}\log(\nicefrac{1}{\epsilon}))$ constraints. 
We say a packing $B$ \emph{corresponds} to a rounding $R$ and an integral solution $y$ of the \ac{ilp} if all items in $(I,s^R)$ are
packed by $B$ according to $y$.

\begin{lemma}
\label{lem5}
Applying any of the operations insert, delete, shiftA or shiftB on a
rounding function $R$ and \ac{ilp} solution $y$ with corresponding packing $B$
defines a new rounding function $R'$ and a new integral solution $y'$. Solution $y'$ is a feasible solution of $LP(I,s^{R'})$.
\end{lemma}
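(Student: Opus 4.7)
The plan is to verify feasibility of the updated $y'$ with respect to $LP(I, s^{R'})$ by checking three conditions: (i)~$y' \geq 0$, (ii)~every configuration $C$ with $y'_C > 0$ has total rounded size at most one under $s^{R'}$, and (iii)~the demand for every rounded size appearing in $s^{R'}$ is met. I would organize the argument as a case analysis on the four operations.

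My first step would be to reduce the three composite operations to the primitive shift. The \emph{delete} operation is a single shift (preceded by the removal of $i_t$ from its group); the \emph{insert} operation is a single shift on an augmented rounding $R^*$, plus the opening of a singleton bin for $i_t$; the \emph{shiftA} and \emph{shiftB} operations each amount to $2^\ell$ iterated shifts followed by a renaming of one boundary group between the $A$ and $B$ blocks, where the renaming does not alter any item's membership from the \ac{lp}/\ac{ilp} perspective. Hence it suffices to analyze one shift in detail and then iterate.

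For a single shift from $(\ell,X_1,r_1)$ to $(\ell,X_2,r_2)$, I would verify the three conditions as follows. \emph{Non-negativity} is immediate because the only decrement is $y'_i = y_i - 1$, and $C_i$ is selected with $x_i \geq 1$, hence $y_i \geq 1$. For \emph{configuration validity}, I would invoke Property~\nameref{prop:b}: whenever an item moves into a group from the right, it is strictly smaller than every item already present, so the representative size satisfies $s^{R'}(g) \leq s^R(g)$ for every intermediate and endpoint group. Consequently any configuration feasible under $s^R$ remains feasible under $s^{R'}$, and the two newly used configurations---$C_j = C_i \setminus \{s(\lambda((\ell,X_2,r_2),R))\}$, which is a sub-multiset of $C_i$, and $C_h = \{s(\lambda((\ell,X_1,r_1),R))\}$, a singleton---are trivially feasible. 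For \emph{demand coverage}, rounded sizes from unaffected groups inherit feasibility directly from $y$. The group $(\ell,X_1,r_1)$ gains exactly one item while keeping its representative unchanged, and the increment $y'_h = y_h + 1$ supplies the extra occurrence. The group $(\ell,X_2,r_2)$ loses exactly one item of its old representative, which is compensated by the decrement $y'_i = y_i - 1$ paired with the increment $y'_j = y_j + 1$. Intermediate groups keep their cardinality while possibly shrinking in representative size; the existing bin assignments continue to cover demand because a smaller representative only relaxes the capacity constraint.

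The hard part will be handling the intermediate groups cleanly, since their representatives strictly decrease after a cascading shift and one must argue that LP demand for the new, smaller rounded size is already covered. I expect this to be manageable because every bin that previously served an item at the old, larger rounded size can be reinterpreted to serve it at the new, smaller size without violating capacity, so demand in the new LP is dominated by supply from the old LP. Once the shift case is settled, the remaining operations require only short additions: \emph{insert} additionally injects the singleton configuration $\{s(\lambda((\ell,X,r),R'))\}$ with multiplicity one to house $i_t$, and the optional creation of the new group $(\ell,A,-1)$ is a pure relabeling; \emph{delete} simply removes one occurrence of the rounded size of $i_t$ from an appropriate configuration; and \emph{shiftA}/\emph{shiftB} iterate the shift argument $2^\ell$ times before a harmless cross-block relabeling of one group.
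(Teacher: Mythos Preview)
Your proposal is correct and follows essentially the same approach as the paper: reduce the composite operations to the primitive shift, use Property~\nameref{prop:b} to argue that the representative rounded size of every intermediate group weakly decreases (so old configurations remain feasible under $s^{R'}$), and verify that the $\pm 1$ updates to $y$ track the changes in group cardinalities at the two endpoints. The paper's proof is organized around how the LP's right-hand side changes rather than your explicit three-condition checklist, but the underlying argument is the same.
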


\begin{proof}
    We have to analyze how the \ac{lp} for instance $(I,s^{R'})$ changes in comparison to the \ac{lp} for instance $(I,s^R)$.	\\
   {\bf Shift Operation:} A single $\operatorname{shift}(g_1,g_2,R)$ operation moves one item from each group $g$ between $g_1$ and $g_2$ into $g$ and one item out of $g$. As no item is moved out of $g_1$ and no item is moved into $g_2$, the number of items in $g_1$ is increased by $1$ and the number of items in $g_2$ is decreased by $1$. The right hand side of the $LP(I,s^R)$ is defined by the cardinalities $|g[R]|$ of the rounding groups $g$ in $R$. As only the cardinalities of $g_1$ and $g_2$ change by $\pm 1$ the right hand side changes accordingly to $\pm 1$ in the corresponding components of $y$. The moved item from $g_2$ is removed from the configuration and a new configuration containing the new item of $g_1$ is added. The \ac{lp} and \ac{ilp} solutions $x$ and $y$ are being modified such that $\lambda(g_2,R)$ is removed from its configuration and a new configuration is added such that the enhanced right hand side of $g_1$ is covered.
   Since the largest item $\lambda(g,R)$ of every group $g$ between $g_1$ and $g_2$ is shifted to its left group, the size $s^{R'}(i)$ of item $i\in g[R]$ is defined by $s^{R'}(i)=s(\iota(g,R))$, where $\iota(g,R)$ is the second largest item of $g[R]$. Therefore each item in $(I,s^{R'})$ is rounded to a smaller or equal value as $s(\iota(g,R))\leq s(\lambda(g,R))$. All configurations of $(I,s^R)$ can thus be transformed into feasible configurations of $(I,s^{R'})$.\\
    {\bf Insert Operation:} The insert operation consists of inserting the new item into its corresponding group $g$ followed by a shift operation. Inserting the new item into $g$ increases the right hand side of the \ac{lp} by $1$. To cover the increased right hand side, we add a new configuration $\{1:s^{R'}(i)\}$ containing only the new item. In order to reflect the change in the \ac{lp} solution, the new item is added into an additional bin. The remaining changes are due to the shift operation already treated above.\\
    {\bf Delete Operation:} The delete operation consists of removing an item $i$ from its corresponding group $g$ followed by a shift operation. Removing the new item from $g$ decreases the right hand side of the \ac{lp} by $1$. The current \ac{lp} and \ac{ilp} solutions $x$ and $y$ do not need to be changed to cover the new right hand side. The remaining changes are due to the shift operation already treated above.\\
    {\bf shiftA/shiftB Operation:} As the shiftA and shiftB operations consist only of repeated use of the shift operation, the correspondence between the packing and the \ac{lp}/\ac{ilp} solution follow simply by induction.
\end{proof}

\subsection{Algorithm for Dynamic Bin Packing}
\label{sec:dynamicbinpacking}
We will use the operations from the previous section to obtain a dynamic algorithm for \BP with respect to large items. The operations insert and delete are designed to process the input depending of whether an item is to be inserted or removed. Keep in mind that the parameter $k = \lfloor \kappa\rfloor =  \left\lfloor\frac{\SIZE(I_L)\cdot \epsilon}{2(\lfloor \log(\nicefrac{1}{\epsilon})\rfloor +5)} \right\rfloor$ changes over time as $\SIZE(I_L)$ may increase or decrease. In order to fulfill the properties $\nameref{prop:c}$ and $\nameref{prop:d}$, we need to adapt the number of items per group whenever $k$ changes. The shiftA and shiftB operations are thus designed to manage the dynamic number of items in the groups as $k$ changes.
Note that a group in the $A$-block with parameter $k$ has by definition the same number of items as a group in the $B$-block with parameter $k-1$ assuming they are in the same size category. If $k$ increases, the former $A$ block is treated as the new $B$ block in order to fulfill the properties $\nameref{prop:c}$ and $\nameref{prop:d}$ while a new empty $A$ block is introduced. To be able to rename the blocks, the $B$ block needs to be empty. Accordingly the $A$ block needs to be empty if $k$ decreases in order to treat the old $B$ block as new $A$ block. Hence we need to make sure that there are no groups in the $B$-block if $k$ increases and vice versa, that there are no groups in the $A$-block if $k$ decreases.

We denote the number of all groups in the $A$-blocks at time t by $A(t)$ and the number of groups in $B$-blocks at time $t$ by $B(t)$. To make sure that the $B$-block (respectively the $A$-block) is empty when $k$ increases (decreases) the ratio $\frac{A(t)}{A(t)+B(t)}$ needs to correlate to the fractional digits of $\kappa(t)$ at time $t$ denoted by $\Delta(t)$. Hence we partition the interval $[0,1)$ into exactly $A(t)+B(t)$ smaller intervals $J_i=\left[\frac{i}{A(t)+B(t)},\frac{i+1}{A(t)+B(t)}\right)$. We will make sure that $\Delta(t)\in J_i$ iff $\frac{A(t)}{A(t)+B(t)}\in J_i$. Note that the term $\frac{A(t)}{A(t)+B(t)}$ is $0$ if the $A$-block is empty and the term is $1$ if the $B$-block is empty. This way, we can make sure that as soon as $k(t)$ increases, the number of $B$-blocks is close to $0$ and as soon as $k(t)$ decreases, the number of $A$-blocks is close to $0$. Therefore, the $A,B$-block can be renamed whenever $k(t)$ changes. The algorithm uses shiftA and shiftB operations to adjust the number of $A$- and $B$-blocks. Recall that a shiftA operation reduces the number of groups in the $B$-block by $1$ and increases the number of groups in the $A$-block by $1$ (shiftB works vice versa). Let $d$ be the number of shiftA/shiftB operations that need to be performed to adjust $\frac{A(t)}{A(t)+B(t)}$. 

%In case of an insert($i)$ operation, it may happen that the number of groups in the $A$ block increases by $1$, i.\,e., $A(t) = A(t-1)+1$. In case of an delete($i)$ operation, it may happen that the number of groups in the $B$ block decreases by $1$, i.\,e., $B(t) = B(t-1)-1$.

\begin{figure}[ht]
  \begin{subfigure}{.99\textwidth}
\centering
\scalebox{0.9}{
\begin{tikzpicture}
\draw (0,0) -- (10,0);
\draw[very thick] (0,-7pt) -- (0,7pt);
\node at (0,-15pt) {$k(t-1)$};
\draw[very thick] (5,-7pt) -- (5,7pt);
\node at (5,-15pt) {$k(t-1)+1$};

\draw[very thick] (10,-7pt) -- (10,7pt);
\node at (10,-15pt) {$k(t-1)+2$};

\foreach \x in {0,1,...,9}{
  \draw (\x+1,-2pt) -- (\x+1,2pt);
}
\node at (0.5,10pt) {$J_{0}$};
\node at (1.5,10pt) {$J_{1}$};
\node at (2.5,10pt) {$\ldots$};
\node at (3.5,10pt) {$J_{j}$};
\node at (4.5,10pt) {$\ldots$};

\node[rectangle,draw,minimum width=8cm, minimum height=1cm] (A) at
(4,-2.7) {$A(t-1)$};
\node[below = of A,yshift=1cm] {98\%};
\node[right = of A, rectangle,draw,minimum width=1cm, minimum
height=1cm,xshift=-0.9cm] (B) {$B(t-1)$};
\node[below = of B,yshift=1cm] {2\%};

\node at (3.2,-23pt) (dt) {$\Delta(t-1)$};
\draw[thick,->] (dt) to (3.8,0);
\end{tikzpicture}
}
\caption{Before Insert}
  \end{subfigure}

  \begin{subfigure}{.99\textwidth}
\centering
\scalebox{0.9}{
\begin{tikzpicture}[every text node part/.style={align=center}]
\draw (0,0) -- (10,0);
\draw[very thick] (0,-7pt) -- (0,7pt);
\node at (0,-25pt) {$k(t)-1$\\ $\|$ \\$k(t-1) $};
\draw[very thick] (5,-7pt) -- (5,7pt);
\node at (5,-25pt) {$k(t)$\\$\|$ \\$k(t-1)+1$};

\draw[very thick] (10,-7pt) -- (10,7pt);
\node at (10,-25pt) {$k(t)+1$\\$\|$\\$k(t-1)+2$};

\foreach \x in {0,1,...,9}{
  \draw (\x+1,-2pt) -- (\x+1,2pt);
}
\node at (5.5,10pt) {$J_{0}$};
\node at (6.5,23pt) {$J_j$\\ $\|$ \\ $J_{1}$};
\node at (7.5,10pt) {$\ldots$};
\node at (8.5,10pt) {$\ldots$};
\node at (9.5,10pt) {$\ldots$};

\node[rectangle,draw,minimum width=1cm, minimum height=1cm] (A) at
(1,-2.7) {$A(t)$};
\node[below = of A,yshift=1cm] {1\%};
\node[right = of A, rectangle,draw,minimum width=8cm, minimum
height=1cm,xshift=-0.9cm] (B) {$B(t)$};
\node[below = of B,yshift=1cm] {99\%};

\node at (6.0,-23pt) (dt) {$\Delta(t)$};
\draw[thick,->] (dt) to (6.7,0);
\end{tikzpicture}
}
    
\caption{After Insert}
    \end{subfigure}
  \caption{Comparison of the situation before and after an Insert Operation}
\end{figure}

In the following algorithm we make use of an algorithm called \textsc{improve}, which was developed in \cite{jansen2013binpacking} to reduce the number of used bins. Using \textsc{improve}(x) on a packing $B$ with approximation guarantee $\max_i B(i) \leq (1+\bar{\epsilon})\OPT + C$ for some $\bar{\epsilon} = \mathcal{O}(\epsilon)$ and some additive term $C$ yields a new packing $B'$ with approximation guarantee $\max_i B(i) \leq (1+\bar{\epsilon})\OPT + C-x$. We use the operations in combination with the improve algorithm to obtain a fixed approximation guarantee.

\begin{algo}[\ac{afptas} for large items] \label{alg-afptas}
  \ 
  \begin{small}

    \begin{algorithm}[H]
    \TitleOfAlgo{Insertion}
    \If{SIZE($I(t)) < (m+2)(\nicefrac{1}{\delta} +2)$ or 
	SIZE$(I(t)) < 8 (\nicefrac{1}{\delta} +1)$}{use offline Bin Packing}
    \Else{
    \textsc{improve}(2);
    insert($i$)\;
    \tcp{Shifting to the correct interval}
    Let $J_i$ be the interval containing $\Delta(t)$\;
    Let $J_j$ be the interval containing $\frac{A(t)}{A(t)+B(t)}$\;
    Set $d=i-j$\;
    \If(\tcp*[h]{Modulo $A(t)+B(t)$ when $k$ increases}){$k(t) > k(t-1)$}{
        $d$ = $d$ + $(A(t)+B(t))$\;
    }
    \tcp{Shifting $d$ groups from $B$ to $A$}
    \For{$p :=0$ to $|d|-1$}{
        \If{i+p = A(t) + B(t)}{Rename($A,B$);}
        \textsc{improve}(1);
        shiftA\;
    }
    }
    %improve(d,x',y',$delta$)\;
    \end{algorithm}
  \end{small}
  \begin{small}

\begin{algorithm}[H]
    \TitleOfAlgo{Deletion}
    \If{SIZE($I(t)) < (m+2)(\nicefrac{1}{\delta} +2)$ or 
	SIZE$(I(t)) < 8 (\nicefrac{1}{\delta} +1)$}{use offline Bin Packing}
    \Else{
    \tcp{Departing item $i$}
    \textsc{improve}(4);
    delete($i$)\;
    \textsc{ReduceComponents}\;
    \tcp{}
    \tcp{Shifting to the correct interval}
    Let $J_i$ be the interval containing $\Delta(t)$\;
    Let $J_j$ be the interval containing $\frac{A(t)}{A(t)+B(t)}$\;
    Set $d=i-j$\;
    \If(\tcp*[h]{Modulo $A(t)+B(t)$ when $k$ decreases}){$k(t) < k(t-1)$}{
        d = d - (A(t)+B(t))\;
    }
    \tcp{Shifting $d$ groups from A to B}
    \For{$p :=0$ to $|d|-1$}{
        \If{i-p = 0}{Rename(A,B);}
        \textsc{improve}(3);
        shiftB\;
    }
    }%improve(d,x',y',$delta$)\;
    \end{algorithm}
  \end{small}    
\end{algo}
Note that as exactly $d$ groups are shifted from $A$ to $B$ (or $B$ to $A$) we have by definition that $\Delta(t) \in  \left[ \frac{A(t)}{A(t)+B(t)}, \frac{A(t)+1}{A(t)+B(t)}\right)$ at the end of the algorithm. Note that $d$ can be bounded by $11$.

\begin{lemma}
\label{lem:disbounded}
At most $11$ groups are shifted from $A$ to $B$ (or $B$ to $A$) in Algorithm \ref{alg-afptas}.
\end{lemma}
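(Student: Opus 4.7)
The plan is to bound $|d|$ by tracking how the quantities that determine it evolve during a single insert or delete step. I work out the insert case explicitly; the deletion case is completely symmetric (with the sign of the $k$-correction reversed). Let $A_0, B_0, N_0 := A_0+B_0, \kappa_0, k_0, \Delta_0$ denote the values of the respective quantities at the end of the previous step, and let $A_1, B_1, N_1 := A_1+B_1, \kappa_1, k_1, \Delta_1$ denote their values right after the call to insert but before any shiftA is executed. The invariant maintained at the end of the previous step gives $A_0 = \lfloor N_0 \Delta_0 \rfloor$, and the algorithm sets
\[
d \;=\; i^* - A_1 + \mathbf{1}[k_1>k_0]\cdot N_1, \qquad i^* := \lfloor N_1 \Delta_1 \rfloor.
\]

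First I would collect four elementary estimates. Since the inserted item has size at most $1$, the definition of $\kappa$ gives $|\kappa_1-\kappa_0| \leq \frac{\epsilon}{2(\lfloor\log(\nicefrac{1}{\epsilon})\rfloor+5)}$. The precondition of Algorithm~\ref{alg-afptas} ensures that the hypothesis of Lemma~\ref{lem1} holds, so $N_0 \leq (\nicefrac{8}{\epsilon}+2)(\log(\nicefrac{1}{\epsilon})+5)$; multiplying the two yields $N_0\,|\kappa_1-\kappa_0| \leq 4+O(\epsilon)$. Next, inspection of the insert procedure shows that the only way a new group is created is when $|(\ell,A,0)[R']|$ overflows $2^\ell k$, in which case the single new group $(\ell,A,-1)$ is introduced; hence $|A_1-A_0| \leq 1$ and $|N_1-N_0| \leq 1$. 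Finally, $k_1-k_0 \in \{0,1\}$ because $|\kappa_1-\kappa_0|<1$.

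The core step is to rewrite $d$ in a form that absorbs the $k$-correction. Using $\Delta_1+(k_1-k_0) = \kappa_1-k_0$,
\[
d \;=\; N_1(\kappa_1-k_0) - A_1 + \delta, \qquad |\delta|<1,
\]
where $\delta = i^* - N_1\Delta_1 \in (-1,0]$ absorbs the floor. Telescoping through $N_0(\kappa_0-k_0) = N_0\Delta_0$ and $A_0$ turns the right-hand side into
\[
(N_1-N_0)(\kappa_1-k_0) \;+\; N_0(\kappa_1-\kappa_0) \;+\; (N_0\Delta_0 - A_0) \;+\; (A_0-A_1).
\]
Each summand is now $O(1)$: the first is bounded by $1\cdot 2 = 2$ since $\kappa_1-k_0 = \Delta_1+(k_1-k_0)<2$, the second by $4+O(\epsilon)$ by the combined estimate above, the third by $1$ by definition of $A_0$, and the fourth by $1$. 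Adding $|\delta|<1$ yields $|d| \leq 2+4+1+1+1+O(\epsilon) \leq 11$ for all sufficiently small $\epsilon$, with a direct check handling the remaining finitely many cases.

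The main obstacle is the clean handling of the $k$-correction: the $+N_1$ added to $d$ whenever $k$ jumps is on the order of $\frac{1}{\epsilon}\log(\nicefrac{1}{\epsilon})$ and would ruin any naive bound. Expressing $d$ via $\kappa_1-k_0$ rather than $\Delta_1$ is precisely what makes this large quantity cancel against $N_1$, leaving only the elementary $O(1)$ discrepancies isolated above.
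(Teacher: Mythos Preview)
Your argument is correct and arrives at the same bound, but the route differs from the paper's. The paper argues conceptually: the quantity $\Delta(t)$ moves by at most $D=|\kappa(t)-\kappa(t-1)|\leq \frac{\epsilon}{\log(1/\epsilon)+5}$ along a circle partitioned into at most $(\nicefrac{8}{\epsilon}+2)(\log(\nicefrac{1}{\epsilon})+5)$ intervals of equal length, so the number of intervals crossed is at most $D\cdot(\#\text{intervals})+1\leq 8+2\epsilon+1<11$. The wraparound when $k$ changes and the change in the number of intervals are handled implicitly by this ``pointer on a circle'' picture.

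You instead unroll an explicit formula $d = \lfloor N_1\Delta_1\rfloor - A_1 + \mathbf{1}[k_1>k_0]\,N_1$, substitute $\Delta_1+(k_1-k_0)=\kappa_1-k_0$ to absorb the potentially huge $N_1$ correction, and telescope against the previous-step invariant $A_0=\lfloor N_0\Delta_0\rfloor$. Each residual term is then bounded separately. This is more verbose but makes the cancellation of the $N_1$ term and the treatment of the $|N_1-N_0|\leq 1$ discrepancy fully explicit, whereas the paper's one-line computation sweeps these boundary effects into the ``$+1$''. Two small remarks: your ``$4+O(\epsilon)$'' for $N_0|\kappa_1-\kappa_0|$ is really $4+o(1)$ as $\epsilon\to 0$ (the ratio $\frac{\log(1/\epsilon)+5}{\lfloor\log(1/\epsilon)\rfloor+5}$ contributes an additive term up to $1$ for moderate $\epsilon$), and ``a direct check handling the remaining finitely many cases'' is not the right phrasing since $\epsilon$ is continuous---but the slack between your $9+o(1)$ and $11$ absorbs this without difficulty for all $\epsilon\leq 1$.
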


\begin{proof}
Since the value $|\SIZE(I(t-1))-\SIZE(I(t))|$ changes at most by $1$ we can bound $|\kappa(t-1) - \kappa(t)|$ by $\frac{\epsilon}{2(\lfloor \log(1/\epsilon)\rfloor +5)}\leq \frac{\epsilon}{\log(\nicefrac{1}{\epsilon})+5}$ to obtain the change in the fractional part. By  Lemma \ref{lem1} the number of intervals (=the number of groups) is bounded by $(\frac{8}{\epsilon}+2)(\log (\nicefrac{1}{\epsilon})+5)$. Using $\Delta(t-1) \in [ \frac{A(t-1)}{A(t-1)+B(t-1)}, \frac{A(t-1)+1}{A(t-1)+B(t-1)})$ and the fact that the number of groups $A(t-1)+B(t-1)$ increases or decreases at most by $1$, we can give a bound for the parameter $d$ in both cases by
\begin{align*}
&d \leq \frac{D}{\text{interval length}} +1 = D \cdot \#intervals +1 \leq \\
&\left((\frac{\epsilon}{\log (\nicefrac{1}{\epsilon})+5})\cdot (\frac{8}{\epsilon}+2)\cdot (\log(\nicefrac{1}{\epsilon})+5)\right)+1=\\
&8+2\epsilon +1 < 11
\end{align*}
Hence, the number of shiftA/shiftB operations is bounded by $11$.
\end{proof}

\begin{lemma}
\label{lem:binpackingalg}
    Every rounding function $R_t$ produced by Algorithm \ref{alg-afptas} fulfills properties $\nameref{prop:a}$ to $\nameref{prop:d}$ with parameter $k(t)= \left\lfloor \frac{\SIZE(I_{L})\cdot \epsilon}{2(\lfloor \log(1/\epsilon)\rfloor +5)}\right\rfloor$.
\end{lemma}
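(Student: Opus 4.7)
The plan is to prove the claim by induction on $t$. For the base case $t=0$, no large items have arrived yet; the rounding function $R_0$ is vacuous and properties \nameref{prop:a}--\nameref{prop:d} hold trivially. When the offline branch of Algorithm~\ref{alg-afptas} is invoked (for small $\SIZE(I(t))$), one can set $R_t$ so that every large item lies in group $(\ell,A,0)$ of its size category; properties \nameref{prop:a}, \nameref{prop:b} are immediate, and \nameref{prop:c}, \nameref{prop:d} hold vacuously since $q(\ell,A)=0$ and the $B$-block is empty.

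For the inductive step, assume $R_{t-1}$ satisfies \nameref{prop:a}--\nameref{prop:d} with parameter $k(t-1)$. By Lemma~\ref{lem4}, each single application of insert, delete, shiftA, or shiftB preserves \nameref{prop:a}--\nameref{prop:d} with respect to the current value of $k$. Hence whenever $k(t)=k(t-1)$, the full sequence of operations executed at step $t$ yields an $R_t$ satisfying the four properties with $k(t)=k(t-1)$, and we are done for this subcase.

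The nontrivial subcase is $k(t)\neq k(t-1)$. Since $|\SIZE(I_L(t))-\SIZE(I_L(t-1))|\leq 1$, we have $|\kappa(t)-\kappa(t-1)|\leq \epsilon/(2(\lfloor\log(1/\epsilon)\rfloor+5))<1$, so $|k(t)-k(t-1)|\leq 1$. Consider $k(t)=k(t-1)+1$ during an insertion (the deletion case with $k(t)=k(t-1)-1$ is symmetric). After the insert operation, the intermediate rounding still satisfies \nameref{prop:a}--\nameref{prop:d} with $k(t-1)$. The algorithm augments $d$ by $A(t)+B(t)$ and enters the shiftA loop, during which \textsc{Rename}$(A,B)$ is triggered exactly once. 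The crucial structural observation is that the cardinalities match: an $A$-group under the old parameter contains $2^{\ell}k(t-1)=2^{\ell}(k(t)-1)$ items, which is precisely the size required of a $B$-group under the new parameter $k(t)$. Therefore, relabelling the current $A$-block as the new $B$-block and starting a fresh empty $A$-block produces a structure consistent with \nameref{prop:c} and \nameref{prop:d} under $k(t)$, provided that the old $B$-block has been completely emptied by the preceding shiftA operations. The remaining shiftA operations after the Rename then populate the new $A$-block from the new $B$-block and, again by Lemma~\ref{lem4}, preserve \nameref{prop:a}--\nameref{prop:d} with $k(t)$.

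The main obstacle is this timing argument for the \textsc{Rename}: one must verify that the counter condition $i+p=A(t)+B(t)$ fires at exactly the iteration when the $B$-block has just been emptied (respectively the $A$-block, for deletion). This reduces to a direct bookkeeping of how the ratio $A(\tau)/(A(\tau)+B(\tau))$ advances by one interval $J_i$ per shiftA/shiftB and how the fractional part $\Delta(\tau)$ jumps from near $1$ to near $0$ when $\kappa(\tau)$ crosses an integer. The interval construction was engineered precisely so that these two events coincide: the maintained correspondence $\Delta(\tau)\in[A(\tau)/(A(\tau)+B(\tau)),\,(A(\tau)+1)/(A(\tau)+B(\tau)))$ forces the final shiftA before the Rename to be the one that removes the last group from $B$, which is exactly what is needed to validate the relabelling.
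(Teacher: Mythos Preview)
Your proposal is correct and follows essentially the same approach as the paper's own proof: induct on $t$, invoke Lemma~\ref{lem4} to preserve properties \nameref{prop:a}--\nameref{prop:d} under each individual operation, and handle the change $k(t)\neq k(t-1)$ via the renaming of blocks together with the cardinality identity $2^{\ell}k(t-1)=2^{\ell}(k(t)-1)$. In fact you spell out considerably more than the paper does---the paper's proof is essentially two sentences: it cites Lemma~\ref{lem4} and then simply asserts that ``the algorithm is designed such that whenever $k$ increases the $B$-block is empty and the $A$-block is renamed to be the new $B$-block'' (and symmetrically when $k$ decreases). Your explicit identification of the timing argument for \textsc{Rename} as the crux, and your explanation via the maintained interval correspondence, goes beyond what the paper writes out; note, however, that your final paragraph (like the paper's proof) still leaves that bookkeeping as an assertion rather than a line-by-line verification.
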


\begin{proof}
Since Algorithm \ref{alg-afptas} uses only the operations insert, delete, shiftA and shiftB, the properties $\nameref{prop:a}$to$\nameref{prop:d}$ are always fulfilled by Lemma $\ref{lem4}$ and the \ac{lp}/\ac{ilp} solutions $x,y$ correspond to the rounding function by Lemma $\ref{lem5}$. 

Furthermore, the algorithm is designed such that whenever $k$ increases the $B$-block is empty and the $A$-block is renamed to be the new $B$-block. Whenever $k$ decreases the $A$-block is empty and the $B$-block is renamed to be the new $A$-block. Therefore the number of items in the groups is dynamically adapted to match with the parameter $k$.
\end{proof}

\subsection{Large items}
In this section we prove that Algorithm \ref{alg-afptas} is a dynamic robust \ac{afptas} for the \BP problem if all items have size at least $\nicefrac{\epsilon}{14}$. The treatment of small items is described in Section \ref{sec:small} and the general case is described in Section \ref{sec:general}. 

We will prove that the migration between packings $B_t$ and $B_{t+1}$ is bounded by $\mathcal{O}(\nicefrac{1}{\epsilon^3}\log(\nicefrac{1}{\epsilon}))$ and that we can guarantee an asymptotic approximation ratio such that $\max B_{t}(i) \leq (1+2\Delta) \OPT(I(t),s) + \text{poly}(\nicefrac{1}{\Delta})$ for a parameter $\Delta = \mathcal{O}(\epsilon)$ and for every $t \in \mathbb{N}$.
The Algorithm \textsc{improve} was developed in \cite{jansen2013binpacking} to improve the objective value of an \ac{lp} with integral solution $y$ and corresponding fractional solution $x$. For a vector $z \in \mathbb{R}^n$ let $V(z)$ be the set of all integral vectors $v = (v_1, \ldots v_n)^T$ such that $0 \leq  v_i \leq  z_i$.

Let $x$ be an approximate solution of the \ac{lp} $\min \mengest{\nor{x}_1}{Ax \geq b, x \geq 0 }$ with $m$ inequalities and let $\nor{x}_1 \leq (1+ \delta) \LIN$ and $\nor{x}_1 \geq 2 \alpha (1/ \delta +1)$, where $\LIN$ denotes the fractional optimum of the \ac{lp} and $\alpha\in \mathbb{N}$ is part of the input of the algorithm (see Jansen and Klein \cite{jansen2013binpacking}). 
Let $y$ be an approximate integer solution of the \ac{lp} with $\nor{y}_1 \leq \LIN +2C$ for some value 
$C \geq \delta \LIN$ and with $\nor{y}_1 \geq (m+2)(1/\delta +2)$. 
Suppose that both $x$ and $y$ have only $\leq C$ non-zero components. For every component $i$ we suppose that $y_i \geq x_i$. Furthermore we are given indices $a_1, \ldots ,a_K$, such that the non-zero components $y_{a_j}$ are sorted in non-decreasing order, i.\,e., $y_{a_1} \leq \ldots \leq y_{a_K}$.
\begin{algo}[\textsc{improve}]\label{improve}
\ 
  \begin{enumerate}
   \item Set $x^{var} := 2 \frac{ \alpha(1 / \delta +1)}{\nor{x}}x$, $x^{fix} := x - x^{var}$ and 
   $b^{var} = b - A(x^{fix})$
    \item Compute an approximate solution $\hat{x}$ of the \ac{lp} $\min \mengest{\nor{x}_1}{Ax \geq b^{var}, x\geq 0 }$
	with ratio $(1+ \delta/2)$
	\item If $\nor{x^{fix} + \hat{x}}_1 \geq \nor{x}_1$ then set $x' = x$, 
	$\hat{y} = y$ and goto step 9
  \item Choose the largest $\ell$ such that the sum of the smallest components $y_1, \ldots , y_{\ell}$ is bounded by
  $\sum_{1\leq i \leq \ell} y_{a_i} \leq (m+2)(1/ \delta +2)$
	\item For all $i $ set $\bar{x}^{fix}_{i} = 
	\begin{cases} 0 & \text{if }i= a_j, j \leq \ell \\
	x^{fix}_i & \text{else}
	\end{cases}$ 
	and $\bar{y}_i = \begin{cases} 0 & \text{if }i= a_j, j \leq \ell \\
	y_i & \text{else}
	\end{cases}$
	\item Set $\bar{x} = \hat{x} + x_{\ell}$ where $x_{\ell}$ is a
          vector consisting of the components 
	$x_{a_1}, \ldots ,x_{a_{\ell}}$. Reduce the number of non-zero components to at most $m+1$.
  \item $x' = \bar{x}^{fix} + \bar{x}$
  \item For all non-zero components $i$ set $\hat{y}_i = \max \{\lceil x'_i \rceil , \bar{y}_i \}$
	\item If possible choose $d \in V(\hat{y}-x')$ such that $\nor{d}_1 = \alpha (1/ \delta +1)$ otherwise
  choose $d \in V(\hat{y}-x')$ such that $\nor{d}_1 < \alpha (1/ \delta +1)$ is maximal.
  \item Return $y' = \hat{y} -d$
  \end{enumerate}
\end{algo}

In the following we prove that the algorithm \textsc{improve} applied to the \BP \ac{ilp} actually generates a new improved packing $B'$ from the packing $B$ with corresponding \ac{lp} and \ac{ilp} solutions $x'$ and $y'$. We therefore use Theorem \ref{thm-improve} and Corollary \ref{cor-improve} that were proven in \cite{jansen2013binpacking}.

\begin{theorem}\label{thm-improve}
    Let $x$ be a solution of the \ac{lp} with $\nor{x}_1 \leq (1+\delta)
    \LIN$ and furthermore $\nor{x}_1 \geq 
	2 \alpha (1/ \delta +1)$. Let $y$ be an integral
	 solution of the \ac{lp} with  $\nor{y'}_1 \leq \LIN +2C$ for some value $C \geq \delta \LIN$
	 and with $\nor{y}_1 \geq (m+2)(1/\delta +2)$.
	 Solutions $x$ and $y$ have the same number of non-zero components and for each component we have 
	 $x_i \leq y_i$.
	The Algorithm $\textsc{improve}(\alpha)$ then returns a fractional solution $x'$ with $\nor{x'}_1 \leq (1+ \delta)\LIN -\alpha$ and an integral solution
	$y''$ where one of the two properties hold:
	 $\nor{y'}_1 = \nor{y}_1 - \alpha$ or $\nor{y'}_1 = \nor{x'}_1 + C$. 
	 Both, $x'$ and $y'$ have at most $C$
	non-zero components and the distance between $y'$ and $y$ is bounded by $\nor{y'-y}_1 
	= \mathcal{O}(\frac{m + \alpha}{\delta})$.
\end{theorem}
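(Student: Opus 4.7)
The plan is to track three quantities across the ten steps of Algorithm \ref{improve}: the $\ell_1$-norm of the evolving fractional vector, the number of its non-zero components, and the accumulated modification to $y$. Steps~1--3 perform reoptimization on a small residual, steps~4--8 control sparsity and round up to an integer vector, and steps~9--10 reduce the integer norm while constraining the migration. I will verify the four claimed bounds in this order and combine them at the end.

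First I would handle the fractional side. The decomposition in step~1 splits $x$ into $x^{fix}+x^{var}$ with $\nor{x^{var}}_1 = 2\alpha(\nicefrac{1}{\delta}+1)$, and $x^{var}$ itself is feasible for $b^{var}=b-Ax^{fix}$, hence $\LP(b^{var}) \leq \nor{x^{var}}_1$. Conversely, adding $x^{fix}$ to any feasible solution of $b^{var}$ yields a feasible solution of $b$, so $\LIN \leq \nor{x^{fix}}_1 + \LP(b^{var})$. Combining these with the ratio-$(1+\delta/2)$ approximation $\hat x$ gives
\begin{align*}
\nor{x^{fix}+\hat x}_1 \leq \nor{x^{fix}}_1 + (1+\tfrac{\delta}{2})\bigl(\LIN-\nor{x^{fix}}_1\bigr) \leq (1+\tfrac{\delta}{2})\LIN.
\end{align*}
Using $\nor{x}_1 \geq 2\alpha(\nicefrac{1}{\delta}+1)$ together with $\LIN \geq \nor{x}_1/(1+\delta)$ yields $\LIN \geq 2\alpha/\delta$, so the improvement from $(1+\delta)\LIN$ to $(1+\delta/2)\LIN$ is at least $(\delta/2)\LIN \geq \alpha$. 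This gives $\nor{x^{fix}+\hat x}_1 \leq \nor{x}_1-\alpha$; the guard in step~3 therefore fails only when $x$ is already good enough that $\nor{x'}_1 \leq (1+\delta)\LIN - \alpha$ holds trivially, and after step~6 the modifications only decrease the norm further, so the fractional bound follows.

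Next I would analyze the integer reconstruction. Steps~4--6 identify the smallest components $y_{a_1},\ldots,y_{a_\ell}$ of $y$ whose cumulative mass is at most $(m+2)(\nicefrac{1}{\delta}+2)$ and replace them with the reoptimized $\hat x$; by Caratheodory applied to a basic-feasible solution of the residual \ac{lp}, $\hat x$ can be chosen with at most $m$ non-zero components, so $\bar x^{fix}+\bar x$ has at most $m + 1$ extra non-zero components beyond those kept from $y$, and one checks via $C \geq \delta\LIN$ that the total is at most $C$. Rounding up in step~8 preserves sparsity and produces an integer $\hat y$ majorizing $x'$. In step~9, if a vector $d \in V(\hat y - x')$ of norm exactly $\alpha(\nicefrac{1}{\delta}+1)$ is available, then $\nor{y'}_1 = \nor{\hat y}_1 - \alpha(\nicefrac{1}{\delta}+1)$; accounting for the $y$-to-$\hat y$ bookkeeping this produces the first alternative $\nor{y'}_1 = \nor{y}_1 - \alpha$. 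Otherwise the total slack $\nor{\hat y - x'}_1$ is strictly less than $\alpha(\nicefrac{1}{\delta}+1) \leq C$, and $y' = \hat y - d$ lies within $C$ of $\lceil x'\rceil$ componentwise, giving the second alternative $\nor{y'}_1 \leq \nor{x'}_1 + C$.

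Finally I would bound the migration. Only components touched by steps~5, 6, 8, or 10 can differ between $y$ and $y'$, so at most $O(m + \alpha)$ indices are modified. Each modification is of size $O(\nicefrac{1}{\delta})$: the removed $y_{a_i}$ are smallest by construction so each is bounded by the average $(m+2)(\nicefrac{1}{\delta}+2)/\ell$, the entries of $\hat x$ and of $d$ are each bounded by $O(\nicefrac{\alpha}{\delta})$ because $\nor{\hat x}_1 = O(\alpha/\delta)$ and $\nor{d}_1 \leq \alpha(\nicefrac{1}{\delta}+1)$, and $\lceil x'_i\rceil$ adds at most one per component. Summing yields $\nor{y'-y}_1 = \mathcal{O}((m+\alpha)/\delta)$. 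The hard part will be making the second case of the dichotomy for $\nor{y'}_1$ rigorous: one must argue that whenever $V(\hat y-x')$ does not contain a vector of norm $\alpha(\nicefrac{1}{\delta}+1)$, the sparsity bound on $\hat y$ together with the choice of $\ell$ simultaneously forces $y'$ to be close enough to $\lceil x'\rceil$ that $\nor{y'}_1 \leq \nor{x'}_1 + C$ holds with the prescribed constant, which is where the two competing budgets (mass of the removed block and slack between $\hat y$ and $x'$) must be balanced carefully.
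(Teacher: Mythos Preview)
The paper does not actually prove Theorem~\ref{thm-improve}: it is imported verbatim from \cite{jansen2013binpacking} and used as a black box (see the sentence just before the theorem statement). So there is no in-paper argument to compare against, and your proposal is an independent reconstruction of the original proof.

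Your fractional analysis has a genuine error of direction. From ``adding $x^{fix}$ to any feasible solution of $b^{var}$ yields a feasible solution of $b$'' you correctly deduce $\LIN \leq \nor{x^{fix}}_1 + \LP(b^{var})$, i.e.\ $\LP(b^{var}) \geq \LIN - \nor{x^{fix}}_1$. But the displayed line
\[
\nor{x^{fix}+\hat x}_1 \;\leq\; \nor{x^{fix}}_1 + (1+\tfrac{\delta}{2})\bigl(\LIN-\nor{x^{fix}}_1\bigr)
\]
requires the \emph{opposite} inequality $\LP(b^{var}) \leq \LIN - \nor{x^{fix}}_1$, which does not follow (and is false in general; $\nor{x^{fix}}_1$ can exceed $\LIN$, making the right-hand side negative). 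The argument that works uses the proportionality built into step~1: with $\lambda = 2\alpha(\nicefrac{1}{\delta}+1)/\nor{x}_1$ one has $x^{fix}=(1-\lambda)x$, so $Ax^{fix}\geq (1-\lambda)b$ and hence $b^{var}=b-Ax^{fix}\leq \lambda b$. Thus $\lambda z^*$ is feasible for $b^{var}$ whenever $z^*$ is optimal for $b$, giving $\LP(b^{var})\leq \lambda\,\LIN$. Then
\[
\nor{x^{fix}+\hat x}_1 \leq (1-\lambda)\nor{x}_1 + (1+\tfrac{\delta}{2})\lambda\,\LIN \leq (1+\delta)\LIN - \tfrac{\delta}{2}\lambda\,\LIN,
\]
and $\tfrac{\delta}{2}\lambda\,\LIN \geq \alpha$ follows from $\nor{x}_1 \leq (1+\delta)\LIN$. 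This scaling observation is the missing ingredient; without it steps~1--3 do not yield the claimed decrease.

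The rest of your outline (sparsity via the choice of $\ell$ and basic feasible $\bar x$, the dichotomy for $\nor{y'}_1$, and the migration count) is along the right lines, and you correctly flag the delicate point: in the branch where no $d$ of full norm exists, one must show that the residual $\nor{\hat y - x'}_1$ is small enough that rounding plus the at-most-$C$ non-zero components force $\nor{y'}_1 \leq \nor{x'}_1 + C$. That bookkeeping, together with the exact cancellation needed for the $\nor{y'}_1=\nor{y}_1-\alpha$ branch, is where the choice of threshold $(m+2)(\nicefrac{1}{\delta}+2)$ in step~4 is actually consumed; your sketch gestures at this but does not carry it out.
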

\begin{corollary}\label{cor-improve}
    Let $\nor{x}_1 = (1+ \delta')\LIN$ for some $\delta' \geq \delta$ and $\nor{x}_1 \geq 2 \alpha (1/ \delta +1)$
	and let $\nor{y}_1 \leq \LIN + 2C$ for some $C \geq \delta'\LIN$ and $\nor{y}_1 \geq (m+2)(1/\delta +2)$. 
	Solutions $x$ and $y$ have the same number of non-zero components and for each component we have 
	 $x_i \leq y_i$.
	 Then Algorithm $\textsc{improve}(\alpha)$ returns a fractional solution
	$x'$ with $\nor{x'}_1 \leq \nor{x}_1 - \alpha = (1+ \delta')\LIN - \alpha$ and integral solution $y'$ where one of the two properties hold:
	 $\nor{y'}_1 = \nor{y}_1 - \alpha$ or $\nor{y'}_1 = \nor{x}_1 - \alpha + C$. 
	Both, $x'$ and $y'$ have at most $C$
	non-zero components and the distance between $y'$ and $y$ is bounded by $\nor{y'-y}_1 
	\in \mathcal{O}(\frac{m + \alpha}{\delta})$.
\end{corollary}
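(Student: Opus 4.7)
The plan is to reduce the corollary to Theorem \ref{thm-improve} by re-running Algorithm \ref{improve} with the parameter $\delta'$ in place of $\delta$, and then translating each conclusion back. The intuition is that whenever $\nor{x}_1 = (1+\delta')\LIN$ with $\delta' \geq \delta$, the solution $x$ is ``farther'' from the fractional optimum than the theorem's weakest allowed input, so there is more room to decrease $\nor{x'}_1$; in fact the decrease should be exactly $\alpha$ rather than only down to $(1+\delta)\LIN-\alpha$.

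First I would verify that every precondition of Theorem \ref{thm-improve} still holds when the algorithm is run with working parameter $\delta'$. Since $\delta' \geq \delta$ we have $2\alpha(1/\delta'+1) \leq 2\alpha(1/\delta+1) \leq \nor{x}_1$ and $(m+2)(1/\delta'+2) \leq (m+2)(1/\delta+2) \leq \nor{y}_1$, so the size lower bounds carry over. Likewise $C \geq \delta'\LIN$ replaces the theorem's hypothesis $C \geq \delta\LIN$ cleanly. The equal-support condition $x_i \leq y_i$ on non-zero components is assumed identically in both statements.

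Next I would transfer the conclusions. Applying Theorem \ref{thm-improve} with parameter $\delta'$ yields a fractional solution $x'$ with $\nor{x'}_1 \leq (1+\delta')\LIN - \alpha = \nor{x}_1 - \alpha$, which is exactly the bound claimed by the corollary. The bound on the number of non-zero components ($\leq C$) and on the migration $\nor{y'-y}_1 = \mathcal{O}((m+\alpha)/\delta)$ also transfer: using $1/\delta' \leq 1/\delta$, the theorem's migration bound $\mathcal{O}((m+\alpha)/\delta')$ is absorbed into $\mathcal{O}((m+\alpha)/\delta)$ as stated.

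Finally, for the dichotomy on $\nor{y'}_1$, the case $\nor{y'}_1 = \nor{y}_1 - \alpha$ is inherited verbatim. For the other case, Theorem \ref{thm-improve} gives $\nor{y'}_1 = \nor{x'}_1 + C$; combining this with the sharper identity $\nor{x'}_1 = \nor{x}_1 - \alpha$ (which holds on the branch where the improvement step is actually executed, since then Algorithm \ref{improve} picks $d$ with $\nor{d}_1 = \alpha(1/\delta'+1)$ and the construction of $\hat{x},\bar{x}$ forces the decrease to be exactly $\alpha$) yields $\nor{y'}_1 = \nor{x}_1 - \alpha + C$. The only delicate point, and the place I would expect to spend the most care, is justifying that the equality $\nor{x'}_1 = \nor{x}_1 - \alpha$ (rather than just $\leq$) holds in this branch; this is what upgrades the theorem's inequality statement to the corollary's equality and is the reason a separate corollary is stated at all.
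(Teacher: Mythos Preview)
The paper does not give its own proof of this corollary; both Theorem~\ref{thm-improve} and Corollary~\ref{cor-improve} are imported verbatim from \cite{jansen2013binpacking}, so there is no in-paper argument to compare against. I will therefore assess your proposal on its own merits.

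Your reduction has a genuine gap. You propose to ``re-run Algorithm~\ref{improve} with the parameter $\delta'$ in place of $\delta$'' and then quote Theorem~\ref{thm-improve}. But the corollary is a statement about the output of the \emph{unchanged} algorithm, i.e.\ the one whose steps use the fixed $\delta$: the scaling factor $2\alpha(1/\delta+1)/\nor{x}_1$ in Step~1, the approximation ratio $(1+\delta/2)$ in Step~2, the threshold $(m+2)(1/\delta+2)$ in Step~4, and the target $\nor{d}_1=\alpha(1/\delta+1)$ in Step~9 all depend on $\delta$, not on $\delta'$. Substituting $\delta'$ for $\delta$ therefore produces \emph{different} intermediate objects $x^{var},\hat x,\ell,d$ and hence different outputs $x',y'$; conclusions about the $\delta'$-run say nothing a priori about the $\delta$-run that the corollary and its later use in Theorem~\ref{thm-packing} are actually about. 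Your own check that $2\alpha(1/\delta'+1)\le\nor{x}_1$ etc.\ shows only that the \emph{hypotheses} of the theorem become weaker under the substitution, not that the algorithm's \emph{execution} coincides.

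What is actually needed is to re-do the analysis of Theorem~\ref{thm-improve} for the same execution (parameter $\delta$) under the stronger input hypothesis $\nor{x}_1=(1+\delta')\LIN$, and to observe that in Step~3 one always lands in the branch $\nor{x^{fix}+\hat x}_1<\nor{x}_1$ (because $(1+\delta/2)\LIN^{var}<\nor{x^{var}}_1$ already when $\delta'\ge\delta$), so that the decrease in Step~9 is exactly $\alpha$; this is where the equality $\nor{x'}_1=\nor{x}_1-\alpha$ you flag as ``delicate'' comes from. That argument requires opening the proof of the theorem rather than applying it as a black box with a different parameter.
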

Let $\Delta = \epsilon + \delta + \epsilon \delta$ and $C = \Delta \OPT(I,s) + m$.

\begin{theorem}\label{thm-packing}
    Given a rounding function $R$ and an \ac{lp} defined for $(I,s^{R})$, let $x$ be a fractional solution of the \ac{lp} with
	$\nor{x}_1 \leq (1+ \Delta) \OPT(I,s)$, $\nor{x}_1 \geq 2\alpha(1/\delta +1)$ and $\nor{x}_1 = (1+\delta')\LIN(I,s^{R})$ for some $\delta'>0$. Let $y$ be an integral solution of the \ac{lp} with $\nor{y}_1 \geq (m+2)(1/\delta +2)$ and corresponding packing $B$ such that $\max_i B (i) = \nor{y}_1 \leq (1+ 2\Delta) \OPT(I,s)+m$.
	Suppose $x$ and $y$ have the same number $\leq C$ of non-zero components and for all components $i$ we have
	$y_i \geq x_i$.	Then Algorithm $\textsc{improve}(\alpha)$ on $x$ and $y$ returns a new fractional solution $x'$ with $\nor{x'}_1 \leq (1+ \Delta) 
	\OPT(I,s) - \alpha$ and also a new integral solution $y'$ with corresponding packing $B'$ such that
	\begin{align*}
    \max_i B' (i) =\nor{y'}_1 \leq (1+ 2 \Delta) \OPT(I,s) +m- \alpha.
    \end{align*}
	Further, both solutions $x'$ and $y'$ have the same number $\leq C$ of non-zero components and for each component we have
	$x'_i \leq y'_i$. The number of changed bins from the packing
        $B$ to the packing $B'$ is bounded by $\mathcal{O}(\frac{m}{\delta})$.
    \end{theorem}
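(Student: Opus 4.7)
The plan is to invoke Corollary \ref{cor-improve} (or Theorem \ref{thm-improve} in a degenerate case) and then translate its output from an \textsc{lp}/\textsc{ilp} guarantee into a bin packing guarantee via the inequalities $\nor{x}_1\le (1+\Delta)\OPT(I,s)$ and $\nor{y}_1\le (1+2\Delta)\OPT(I,s)+m$.

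First, I would verify the preconditions of Corollary \ref{cor-improve}. Three of them—$\nor{x}_1\ge 2\alpha(1/\delta+1)$, $\nor{y}_1\ge (m+2)(1/\delta+2)$, and the structural requirement that $x$ and $y$ have the same number of non-zero components with $x_i\le y_i$—are assumed directly. For the condition $\nor{y}_1\le \LIN+2C$ with $C\ge\delta'\LIN$, I would argue as follows: using $C=\Delta\OPT(I,s)+m$, Lemma~\ref{lem2} together with the fact that $\LIN(I,s^{R})$ is within an additive $O(m)$ term of $\OPT(I,s^{R})$ for the Karmarkar--Karp style LP yields both $C\ge\delta'\LIN$ (from $\delta'\LIN=\nor{x}_1-\LIN\le(1+\Delta)\OPT(I,s)-\LIN$) and $\LIN+2C\ge(1+2\Delta)\OPT(I,s)+m\ge\nor{y}_1$. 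In the boundary case where $\delta'<\delta$, Corollary~\ref{cor-improve} does not directly apply, but Theorem~\ref{thm-improve} does since $\nor{x}_1=(1+\delta')\LIN\le(1+\delta)\LIN$, and it yields the same qualitative conclusion.

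Given the corollary's output, the fractional bound is immediate: $\nor{x'}_1\le\nor{x}_1-\alpha\le(1+\Delta)\OPT(I,s)-\alpha$. For the integral solution, the corollary provides a dichotomy. If $\nor{y'}_1=\nor{y}_1-\alpha$, then the hypothesis on $\nor{y}_1$ gives
\begin{equation*}
\nor{y'}_1\le (1+2\Delta)\OPT(I,s)+m-\alpha.
\end{equation*}
Otherwise $\nor{y'}_1=\nor{x}_1-\alpha+C$, and substituting the bound on $\nor{x}_1$ and the value of $C$ yields
\begin{equation*}
\nor{y'}_1\le (1+\Delta)\OPT(I,s)-\alpha+\Delta\OPT(I,s)+m=(1+2\Delta)\OPT(I,s)+m-\alpha.
\end{equation*}
Since $\max_i B'(i)=\nor{y'}_1$, the desired packing bound follows in both cases.

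The bound on the number of changed bins is read off from $\nor{y'-y}_1=\mathcal{O}((m+\alpha)/\delta)$; in our applications $\alpha$ is an absolute constant (at most $4$), so this is $\mathcal{O}(m/\delta)$, which corresponds to the number of configurations whose multiplicity changed, i.e., the number of bins that had to be repacked. Preservation of the ``same number of non-zero components, both at most $C$'' and the componentwise inequality $x'_i\le y'_i$ are carried over directly from the corollary's conclusion. The main obstacle is the precondition check $C\ge\delta'\LIN$: it hinges on the fact that the rounded Karmarkar--Karp LP has integrality gap $O(m)$, which is precisely where the offline additive term $m$ in $C$ is needed.
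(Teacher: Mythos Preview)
Your proposal is correct and follows essentially the same route as the paper: verify the preconditions of Theorem~\ref{thm-improve}/Corollary~\ref{cor-improve} (including $\nor{y}_1\le \LIN+2C$ and $C\ge \delta\LIN$ resp.\ $C\ge \delta'\LIN$) via Lemma~\ref{lem2} and the additive $O(m)$ integrality gap of the configuration LP, split into the cases $\delta'\le\delta$ and $\delta'>\delta$, and then convert the two possible outputs for $\nor{y'}_1$ into the packing bound using $C=\Delta\OPT(I,s)+m$. The paper's proof is just a more explicit rendering of exactly this argument; the only point you glossed over is the separate check that $C\ge\delta\LIN$ in the $\delta'\le\delta$ case (Theorem~\ref{thm-improve}'s precondition), which the paper derives from $\Delta\ge\delta(1+\epsilon)$ together with Lemma~\ref{lem2}.
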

\begin{proof}

    To use Theorem \ref{thm-improve} and Corollary \ref{cor-improve} we have to prove that certain conditions follow from the requisites of Theorem \ref{thm-packing}.
    We have $\max_i B (i) = \nor{y}_1 \leq (1+ 2\Delta) \OPT(I,s)+m$ by condition. Since
	$ \OPT(I,s) \leq  \OPT(I,s^{R})$ we obtain for the integral solution $y$ that
	$\nor{y}_1 \leq 2\Delta \OPT(I,s)+m + \OPT(I,s^{R}) \leq  2 \Delta \OPT(I,s)+ m + \LIN(I,s^{R}) +m$.
	Hence by definition of $C$ we get $\nor{y}_1 \leq  \LIN(I,s^{R}) + 2C$. This is one requirement to use Theorem \ref{thm-improve}
	or Corollary \ref{cor-improve}.
	We distinguish the cases where  $\delta' \leq \delta$ and $\delta' > \delta$ and look at them separately.
	
	Case 1: $\delta' \leq \delta$.
    For the parameter $C$ we give a lower bound by the inequality $C > \Delta \OPT(I,s) = (\delta + \epsilon + \delta \epsilon)\OPT(I,s)$. Lemma \ref{lem2} shows that $\OPT(I,s^R) \leq (1+\epsilon)\OPT(I,s)$ and therefore yields  
    \begin{align*}
    &\frac{\delta + \epsilon + \delta \epsilon}{1+ \epsilon} \OPT(I,s^R)
    = \frac{(1+\delta)(1+\epsilon)-1}{1+\epsilon} \OPT(I,s^R)\\
    &= (1+\delta)\OPT(I,s^R) - \frac{1}{1+\epsilon}\OPT(I,s^R)\\
        &\geq \delta \OPT(I,s^R) \geq \delta LIN(I,s^R)
    \end{align*}
    and hence $C > \delta \LIN(I,s^R)$. We can therefore use Theorem \ref{thm-improve}.
    
	Algorithm \textsc{improve} returns by Theorem \ref{thm-improve} a $x'$ with $\nor{x'}_1 \leq (1+\delta)\LIN(I,s^{R})-\alpha \leq (1+\delta)\OPT(I,s^{R})-\alpha$ and an integral solution $y'$ with	$\nor{y'}_1 \leq \nor{x'}_1 + C$ or $\nor{y'}_1 \leq \nor{y}_1 - \alpha$. Using that $\OPT(I,s^R) \leq (1+\epsilon)\OPT(I,s)$ we can conclude $\nor{x'}_1 \leq (1+ \delta)(1+ \epsilon)\OPT(I,s) - \alpha = (1+\Delta)\OPT(I,s) - \alpha$. In the case where $\nor{y'}_1 \leq \nor{x'}_1 + C$ we can bound the number of bins of the new packing $B'$ by	$\max_i B' (i) = \nor{y'}_1 \leq \nor{x'}_1 + C \leq (1 + 2 \Delta) \OPT(I,s)+ m - \alpha$.
	In the case that $\nor{y'}_1 \leq \nor{y}_1 - \alpha$ we obtain $\max_i B' (i) = \nor{y'}_1 \leq \nor{y}_1 - \alpha \leq (1+ 2\Delta) \OPT(I,s) +m- \alpha$. Furthermore we know by Theorem \ref{thm-improve} that $x'$ and $y'$ have at most $C$ non-zero components.
	
	Case 2: $\delta' > \delta$.
    First we prove that $C$ is bounded from below. Since $\nor{x}_1 = (1+\delta') \LIN(I,s^R) \leq (1+ \Delta) \OPT(I,s)\leq (1+ \Delta) \OPT(I,s^R) \leq (1+ \Delta) \OPT(I,s^R) \leq (1+ \Delta) (\LIN(I,s^R) + \frac{m}{2}) \leq \LIN(I,s^R) +C$ we obtain that $C\geq \delta' \LIN(I,s^R)$, which is a requirement to use Corollary \ref{cor-improve}.
	By using Algorithm \textsc{improve} on solutions $x$ with $\nor{x}_1 = (1+\delta')\LIN(I,s^{R})$ and $y$ with $\nor{y}_1 \leq  \LIN(I,s^{R}) + 2C$ we obtain by Corollary \ref{cor-improve} a fractional solution $x'$ with
	$\nor{x'}_1 \leq \nor{x}_1 - \alpha \leq (1+\Delta)\OPT(I,s) - \alpha$ and an integral solution $y'$ with either $\nor{y'}_1 \leq \nor{y}_1 - \alpha$ or $\nor{y'}_1 \leq \nor{x}_1 + C - \alpha$.
	So for the new packing $B'$ we can guarantee that $\max_i B' (i) = \nor{y'}_1 \leq \nor{y}_1 - \alpha =  \max_i B (i) - \alpha	\leq (1+ 2\Delta) \OPT(I,s) +m - \alpha$ if $\nor{y'}_1 \leq \nor{y}_1 - \alpha$. In the case that $\nor{y'}_1 \leq \nor{x}_1 + C - \alpha$, we can guarantee that $\max_i B' (i) = \nor{y'}_1 \leq \nor{x}_1 + C - \alpha \leq (1+ \Delta) \OPT(I,s)+ C - \alpha \leq (1+ 2\Delta) \OPT(I,s) +m - \alpha$. Furthermore we know by Corollary \ref{thm-improve} that $x'$ and $y'$ have at most $C$ non-zero components.
    
Theorem \ref{thm-improve} as well as Corollary \ref{cor-improve} state that the distance $\nor{y'-y}_1$ is bounded by $\mathcal{O}(\nicefrac{m}{\delta})$. Since $y$ corresponds directly to the packing $B$ and the new integral solution $y'$ corresponds to the new packing $B'$, we know that only $\mathcal{O}(\nicefrac{m}{\delta})$ bins of $B$ need to be changed to obtain packing $B'$.
\end{proof}

In order to prove correctness of Algorithm \ref{alg-afptas}, we will make use of the auxiliary Algorithm \ref{reducecomponents} (\textsc{ReduceComponents}). Due to a delete-operation, the value of the optimal solution $\OPT(I,s)$ might decrease. Since the number of non-zero components has to be bounded by $C = \Delta \OPT(I,s) + m$, the number of non-zero components might have to be adjusted down. The following algorithm describes how a fractional solution $x'$ and an integral solution $y'$ with reduced number of non-zero components can be computed such that $\nor{y-y'}_1$ is bounded. The idea behind the algorithm is also used in the \textsc{Improve} algorithm. The smallest $m+2$ components are reduced to $m+1$ components using a standard technique presented for example in \cite{beling1998}. Arbitrary many components of $x'$ can thus be reduced to $m+1$ components without making the approximation guarantee worse.
\begin{algo}[\textsc{ReduceComponents}]\label{reducecomponents}
\ 
  \begin{enumerate}
  \item Choose the smallest non-zero components $y_{a_1}, \ldots , y_{a_{m+2}}$.
  \item If $\sum_{1\leq i \leq m+2} y_{a_i} \geq (1/ \Delta +2)(m+2)$ then return $x=x'$ and $y=y'$
  
  \item Reduce the components $x_{a_1}, \ldots , x_{a_{m+2}}$ to $m+1$ components $\hat{x}_{b_1}, \ldots , \hat{x}_{b_{m+1}}$ with $\sum_{j=1}^{m+2}x_{a_{j}}=\sum_{j=1}^{m+1}\hat{x}_{b_{j}}$.
  \item For all $i $ set $x'_i = 
	\begin{cases} \hat{x}_i +x_i & \text{if $i= b_j$ for some } j \leq m \\
	0 & \text{if $i= a_j$ for some } j \leq m+1 \\
	x_i & \text{else}
	\end{cases}$

    and $\hat{y}_i = 	\begin{cases} \lceil \hat{x}_i + x'_i \rceil & \text{if $i= b_j$ for some } j \leq m \\
 	0 & \text{if $i= a_j$ for some } j \leq m+1 \\
	y_i & \text{else}
	\end{cases}$
  \item If possible choose $d \in V(\hat{y}-x')$ such that $\nor{d}_1 = m+1$ otherwise
  choose $d \in V(\hat{y}-x')$ such that $\nor{d}_1 < m+1$ is maximal.
  \item Return $y' = \hat{y} -d$
  \end{enumerate}
\end{algo}
The following theorem shows that the algorithm above yields a new fractional solution $x'$ and a new integral solution $y'$ with a reduced number of non-zero components.
\begin{theorem}
\label{thm:reduce}
    Let $x$ be a fractional solution of the \ac{lp} with	$\nor{x}_1 \leq (1+ \Delta) \OPT(I,s)$. Let $y$ be an integral solution of the \ac{lp} with $\nor{y}_1 \leq (1+ 2\Delta) \OPT(I,s)+m$. Suppose $x$ and $y$ have the same number $\leq C+1$ of non-zero components and for all components $i$ we have
	$y_i \geq x_i$.	Using the Algorithm $\textsc{ReduceComponents}$ on $x$ and $y$ returns a new fractional solution $x'$ with $\nor{x'}_1 \leq (1+ \Delta) \OPT(I,s)$ and a new integral solution $y'$ with $\nor{y'}_1 \leq (1+ 2 \Delta) \OPT(I,s) +m$. Further, both solutions $x'$ and $y'$ have the same number of non-zero components and for each component we have $x'_i \leq y'_i$. The number of non-zero components can now be bounded by $\leq C$. Furthermore, we have that $\nor{y-y'}_1 \leq 2\cdot (1/\Delta +3) (m+2)$.
\end{theorem}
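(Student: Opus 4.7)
The plan is to split the analysis along the two exit paths of \textsc{ReduceComponents} and verify separately in each path the four required guarantees: (i) $\nor{x'}_1\leq(1+\Delta)\OPT(I,s)$, (ii) $\nor{y'}_1\leq(1+2\Delta)\OPT(I,s)+m$, (iii) $y'_i\geq x'_i$ componentwise with both solutions sharing the same support of size at most $C$, and (iv) the migration bound $\nor{y-y'}_1\leq 2(1/\Delta+3)(m+2)$. The norm inequalities will reduce to a careful mass-accounting across the modified coordinates, whereas the component bound and the migration bound make critical use of the test in step~2 in opposite directions.

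In the easy case the algorithm returns in step~2, so $x'=x$ and $y'=y$; items (i), (iii) and (iv) are immediate (with $\nor{y-y'}_1=0$), and the only thing to check is the component bound. The test $\sum_{j=1}^{m+2}y_{a_j}\geq (1/\Delta+2)(m+2)$ forces the largest of these smallest components to exceed their average, i.e., $y_{a_{m+2}}\geq 1/\Delta+2$. Since every remaining non-zero component is at least $y_{a_{m+2}}$, each non-zero component of $y$ is at least $1/\Delta+2$. Hence if $N$ denotes the number of non-zero components, $\nor{y}_1\geq(1/\Delta+2)\,N$, and combined with $\nor{y}_1\leq(1+2\Delta)\OPT(I,s)+m$ a short algebraic rearrangement yields $N\leq \frac{\Delta}{1+2\Delta}\bigl((1+2\Delta)\OPT(I,s)+m\bigr)\leq \Delta\,\OPT(I,s)+m = C$.

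If the algorithm instead proceeds past step~2, I would handle step~3 first using the classical basic-feasible-solution reduction of~\cite{beling1998}: the $m+2$ fractional columns at positions $a_1,\ldots,a_{m+2}$, together with the sum-preservation constraint, give only $m+1$ linear constraints, so we can replace them by at most $m+1$ columns $\hat{x}_{b_1},\ldots,\hat{x}_{b_{m+1}}$ that are LP-feasible and satisfy $\sum_j\hat{x}_{b_j}=\sum_j x_{a_j}$. This immediately gives $\nor{x'}_1=\nor{x}_1\leq(1+\Delta)\OPT(I,s)$, and the non-zero count drops by at least one, from at most $C+1$ to at most $C$. For the integer solution, step~4 rounds the new entries up (increasing the norm by at most $m+1$) and step~5 subtracts a maximal $d\in V(\hat{y}-x')$ with $\nor{d}_1\leq m+1$. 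Using $y_{a_j}\geq x_{a_j}$ and $\sum_j\hat{x}_{b_j}=\sum_j x_{a_j}$ one obtains $\nor{\hat y}_1\leq\nor{y}_1+(m+1)$; hence either $\nor{d}_1=m+1$ and $\nor{y'}_1\leq\nor{y}_1\leq(1+2\Delta)\OPT(I,s)+m$, or else the componentwise slack satisfies $\sum_i(\hat{y}_i-x'_i)<m+1$, in which case $\nor{y'}_1\leq\nor{x'}_1+(m+1)\leq(1+\Delta)\OPT(I,s)+m+1$, still within the target bound. The inequality $y'_i\geq x'_i$ holds because $d_i\leq\hat{y}_i-x'_i$ by construction.

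Finally, for the migration bound, $y$ and $y'$ differ only at the $a_j$- and $b_j$-positions. The mass removed is $\sum_j y_{a_j}<(1/\Delta+2)(m+2)$ by negation of the step~2 test, while the mass inserted is $\sum_j \hat{y}_{b_j}\leq\sum_j\hat{x}_{b_j}+(m+1)=\sum_j x_{a_j}+(m+1)\leq(1/\Delta+2)(m+2)+(m+2)=(1/\Delta+3)(m+2)$; summing gives $\nor{y-y'}_1\leq 2(1/\Delta+3)(m+2)$. I expect the main obstacle to be the careful bookkeeping in step~3 when some $b_j$ coincides with a pre-existing non-zero position of $x$ or $y$ (so that the definitions $x'_i=\hat{x}_i+x_i$ and $\hat{y}_i$ overlap with entries that were previously present); this will be managed exactly as in~\cite{jansen2013binpacking}, by choosing the $b_j$ disjoint from the $a_j$ and letting the polyhedral reduction itself certify feasibility of the combined solution, so that all the accounting above collapses into clean sums.
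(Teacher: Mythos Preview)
Your approach mirrors the paper's proof almost exactly: the same two-branch split on the step~2 test, the same counting argument for Case~1, the same mass-preservation argument for $\nor{x'}_1$, and the same two sub-cases on whether $\nor{d}_1=m+1$. There is, however, one genuine slip.

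In the sub-case $\nor{d}_1<m+1$ you assert that ``the componentwise slack satisfies $\sum_i(\hat{y}_i-x'_i)<m+1$''. That does not follow. Since $d$ is constrained to be \emph{integral} with $0\le d_i\le \hat{y}_i-x'_i$, the maximal attainable $\nor{d}_1$ is $\sum_i\lfloor \hat{y}_i-x'_i\rfloor$, and this being below $m+1$ says nothing about the unfloored sum $\sum_i(\hat{y}_i-x'_i)$, which can be as large as the number of non-zero components. (Take, e.g., every slack equal to $\tfrac12$.) Consequently your conclusion $\nor{y'}_1\le\nor{x'}_1+(m+1)$ is unjustified, and even if it held, the bound $(1+\Delta)\OPT+m+1\le(1+2\Delta)\OPT+m$ requires $\Delta\cdot\OPT\ge 1$, which the theorem does not assume. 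The correct argument (as in the paper) is componentwise: maximality of $d$ forces $\hat{y}_i-d_i-x'_i<1$ for every non-zero component, so $\nor{y'}_1<\nor{x'}_1+C\le(1+2\Delta)\OPT(I,s)+m$.

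A second, minor point: your migration bound tallies only changes at the $a_j$- and $b_j$-positions, but $d$ may be non-zero on other coordinates of $\hat{y}$ as well, contributing up to $\nor{d}_1\le m+1$ additional migration. The paper accounts for this explicitly; fortunately your intermediate estimate $(2/\Delta+5)(m+2)$ has enough slack to absorb it and still land below $2(1/\Delta+3)(m+2)$, so the final inequality survives once the extra $m+1$ is inserted.
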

\begin{proof}
    Case 1: $\sum_{1\leq i \leq m+2} y_{a_i} \geq (1/ \Delta +2)(m+2)$. We will show that in this case, $x$ and $y$ already have $\leq C$ non-zero components. In this case the algorithm returns $x' = x$ and $y' =y$. Since $\sum_{1\leq i \leq m+2} y_{a_i} \geq (1/ \Delta +2)(m+2)$ the components $y_{a_1}, \ldots , y_{a_{m+2}}$ have an average size of at least $(1/ \Delta +2)$ and since $y_{a_1}, \ldots , y_{a_{m+2}}$ are the smallest components, all components of $y$ have average size at least $(1/ \Delta +2)$. The size $\nor{y}_1$ is bounded by $(1+ 2\Delta) \OPT(I,s)+m$. Hence the number of non-zero components can be bounded by $\frac{(1+ 2\Delta) \OPT(I,s)+m}{\nicefrac{1}{\Delta}+2} \leq \Delta \OPT(I,s)+ \Delta m \leq C$.

Case 2: $\sum_{1\leq i \leq m+1} y_{a_i} < (1/ \Delta +2)(m+2)$. We have to prove different properties for the new fractional solution $x'$ and the new integral solution $y'$.

\textbf{Number of non-zero components}: The only change in the number of non-zero components is in step 3 of the algorithm, where the number of non-zero components is reduced by $1$. As $x,y$ have at most $C+1$ non-zero components, $x',y'$ have at most $C$ non-zero components. In step 4 of the algorithm, $\hat{y}$ is defined such that $\hat{y}_i \geq x'_i$. In step 5 of the algorithm $d$ is chosen such that $\hat{y}_i -d \geq x'_i$. Hence we obtain that $y'_i = \hat{y}_i -d \geq x'_i$.

\textbf{Distance between $y$ and $y'$}: The only steps where components of $y$ changes are in step 4 and 5. The distance between $y$ and $\hat{y}$ is bounded by the sum of the components that are set to $0$, i.\,e., $\sum_{j=1}^{m+2}y_{a_{j}}$ and the sum of the increase of the increased components $\sum_{j=1}^{m+1}\lceil \hat{x}_{b_{j}}\rceil \leq \sum_{j=1}^{m+1}\hat{x}_{b_{j}} +m+1 = \sum_{j=1}^{m+2}x_{a_{j}} +m+1$. As $\sum_{j=1}^{m+2}x_{a_{j}}\leq \sum_{j=1}^{m+2}y_{a_{j}} < (1/ \Delta +2)(m+2)$, we obtain that the distance between $y$ and $\hat{y}$ is bounded by $2\cdot (1/\Delta +2)(m+2)+m+1$. Using that $\nor{d}_1 \leq m+1$, the distance between $y$ and $y'$ is bounded by $\nor{y'-y}_1 < 2\cdot (1/\Delta +3) (m+2)$.

\textbf{Approximation guarantee}: The fractional solution $x$ is modified by condition of step 3 such that the sum of the components does not change. Hence $\nor{x'}_1=\nor{x}_1\leq (1+\Delta)\OPT(I,s)$.\\
Case 2a: $\nor{d}_1 < m+1$. Since $d$ is chosen maximally we have for every non-zero component that $y'_i-x'_i <1$. Since there are at most $C=\Delta\OPT(I,s)+m$ non-zero components we obtain that $\nor{y'}_1 \leq \nor{x'}_1 + C \leq (1+ 2\Delta)\OPT(I,s)+m$.
Case 2b: $\nor{d}_1 = m+1$.
By definition of $\hat{y}$ we have $\nor{\hat{y}}_1 \leq \nor{y}_1 + \sum_{j=1}^{m+1} \lceil \hat{x_{b_j}}+x_{b_j}\rceil  - \sum_{j=1}^{m+2}x_{a_j}  \leq \nor{y}_1 + m+1$. We obtain for $y'$ that $\nor{y'}_1 = \nor{\hat{y}}_1 - \nor{d}_1 \leq \nor{y}_1 + m+1 - (m+1) =\nor{y}_1 \leq (1+ 2\Delta) \OPT(I,s)+m$.
\end{proof}

\begin{theorem}
\label{thm-main}
Algorithm \ref{alg-afptas} is an \ac{afptas} with migration factor at most $\mathcal{O}(\frac{1}{\epsilon^3}\cdot \log(\nicefrac{1}{\epsilon}))$ for the fully dynamic \BP problem with respect to large items.
\end{theorem}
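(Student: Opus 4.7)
The plan is to establish two invariants maintained by Algorithm~\ref{alg-afptas}: at every time $t$ the rounding $R_t$ satisfies properties \nameref{prop:a}--\nameref{prop:d} with parameter $k(t)$, and the packing $B_t$ with corresponding \ac{ilp} solution $y_t$ obeys
$\max_i B_t(i) = \nor{y_t}_1 \leq (1+2\Delta)\OPT(I(t),s) + m$.
The first invariant is supplied by Lemma~\ref{lem:binpackingalg}, so Lemma~\ref{lem2} yields a rounded optimum of at most $(1+\epsilon)\OPT(I(t),s)$. I would prove the second invariant by induction on $t$: each call to \textsc{improve} is analyzed via Theorem~\ref{thm-packing}, while each insert, delete, shiftA, and shiftB operation is handled by the explicit bin bookkeeping from Subsection~\ref{subsec:operations}. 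The small-size case $\SIZE(I(t)) < (m+2)(\nicefrac{1}{\delta}+2)$ is handled by offline bin packing and contributes only an additive constant. Outside this case, the preconditions required by Theorem~\ref{thm-packing} (lower bounds on $\nor{x}_1$ and $\nor{y}_1$, $x_i\leq y_i$, matching supports) are precisely what the \texttt{if}-guard of Algorithm~\ref{alg-afptas} enforces, combined with Lemmas~\ref{lem4} and~\ref{lem5} guaranteeing that the \ac{ilp} solution stays consistent with the rounding.

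The per-step accounting goes as follows. For an insertion, $\textsc{improve}(2)$ reduces $\nor{y}_1$ by $2$, the insert operation increases it by at most $2$, and $\OPT$ cannot decrease, so the invariant survives. Each of the at most $11$ subsequent shiftA operations (Lemma~\ref{lem:disbounded}) is preceded by $\textsc{improve}(1)$, whose $-1$ contribution cancels the $+1$ of shiftA. For a deletion, $\OPT$ may drop by one, costing up to $1+2\Delta$ units of slack; this is absorbed by the $\textsc{improve}(4)$ call that precedes the delete (which itself adds a single bin), and \textsc{ReduceComponents} then restores the support to at most $C$ non-zero components while preserving the bin count via Theorem~\ref{thm:reduce}. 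Each of the up to $11$ shiftB operations is paired with $\textsc{improve}(3)$, over-compensating the $+1$ of shiftB, so the invariant holds throughout the step.

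For the migration bound I would count bin changes per step: each \textsc{improve} call alters at most $\mathcal{O}(\nicefrac{m}{\delta}) = \mathcal{O}(\nicefrac{1}{\epsilon^2}\log(\nicefrac{1}{\epsilon}))$ bins by Theorem~\ref{thm-packing}; each single insert, delete, or shift operation physically touches only $\mathcal{O}(1)$ bins (a shiftA or shiftB packs $2^\ell \leq \nicefrac{14}{\epsilon}$ items of total size at most $1$ into a single fresh bin); and there are $\mathcal{O}(1)$ such operations per step by Lemma~\ref{lem:disbounded}. Summing over the $\mathcal{O}(1)$ sub-steps, the total size of items repacked at time $t$ is bounded by $\mathcal{O}(\nicefrac{1}{\epsilon^2}\log(\nicefrac{1}{\epsilon}))$. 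Since $s(i_t) \geq \nicefrac{\epsilon}{14}$ for large items, the migration factor $\repack(t)/s(i_t)$ is bounded by $\mathcal{O}(\nicefrac{1}{\epsilon^3}\log(\nicefrac{1}{\epsilon}))$, matching the claim; polynomial running time follows from solving an \ac{lp} with $m = \mathcal{O}(\nicefrac{1}{\epsilon}\log(\nicefrac{1}{\epsilon}))$ constraints at each step. The main obstacle I anticipate is the careful accounting needed to verify that the preconditions of Theorem~\ref{thm-packing} remain valid across the full chain of \textsc{improve} calls in a single step, and in particular that the $A/B$-block renaming triggered when $k(t)$ changes does not disrupt the invariant; this is precisely what the interval partition $J_0,\ldots,J_{A(t)+B(t)-1}$ together with the tracking of $\Delta(t)$ is designed to handle, so the verification should reduce to checking that whenever $k$ changes value, the corresponding block has already been emptied by the preceding shiftA/shiftB calls.
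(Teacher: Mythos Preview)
Your proposal is correct and follows essentially the same approach as the paper: induction on $t$, pairing each \textsc{improve}$(\alpha)$ call with the following operation so that the net change in $\nor{y}_1$ is non-positive relative to $(1+2\Delta)\OPT(I(t+1),s)+m$, invoking Lemma~\ref{lem:disbounded} to bound the number of shift steps, and dividing the $\mathcal{O}(\nicefrac{m}{\delta})$ bins touched per \textsc{improve} by $s(i_t)\geq\nicefrac{\epsilon}{14}$ for the migration bound. The one place where the paper is more explicit than your sketch is that it tracks \emph{four} coupled invariants---not just $\nor{y}_1\leq(1+2\Delta)\OPT+m$ but also $\nor{x}_1\leq(1+\Delta)\OPT$, $x_i\leq y_i$ componentwise, and the common support bound $\leq\Delta\OPT+m$---and verifies each after every \textsc{improve}/operation pair; you allude to this (``matching supports'', ``preconditions\ldots remain valid across the full chain'') and correctly flag it as the main verification burden, which is exactly what the paper carries out.
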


\begin{proof}
Set $\delta = \epsilon$. Then $\Delta = 2 \epsilon + \epsilon^2 = \mathcal{O}(\epsilon)$. We assume in the following that $\Delta \leq 1$ (which holds for $\epsilon\leq \sqrt{2}-1$).

We prove by induction that four properties hold for any packing $B_t$ and the corresponding \ac{lp} solutions. Let $x$ be a fractional solution of the \ac{lp} defined by the instance $(I_t,s^{R_{t}})$ and $y$ be an integral solution of this \ac{lp}.
    The properties $(2)$ to $(4)$ are necessary to apply Theorem \ref{thm-packing} and property $(1)$ provides the wished approximation ratio for the \BP problem.
	\begin{enumerate}
		\item[(1)\label{prop:1}] $\max_i B_t(i) = \nor{y}_1 \leq (1+ 2\Delta)\OPT(I(t),s) +m$ (the number of bins is bounded)
		\item[(2)\label{prop:2}] $\nor{x}_1  \leq (1+ \Delta) \OPT(I(t),s)$
		\item[(3)\label{prop:3}] for every configuration $i$ we have $x_i \leq y_i$
		\item[(4)\label{prop:4}] $x$ and $y$ have the same number of non-zero components and that number is bounded by $\Delta \OPT(I(t),s) +m$
	\end{enumerate}
	To apply Theorem \ref{thm-packing} we furthermore need a guaranteed minimal size for $\nor{x}_1$ and $\nor{y}_1$.
	According to Theorem \ref{thm-packing} the integral solution $y$ needs $\nor{y}_1 \geq (m+2)(\nicefrac{1}{\delta} +2)$ and 
	$\nor{x}_1 \geq 8 (\nicefrac{1}{\delta} +1)$ as we set $\alpha \leq 4$.
	By condition of the while-loop the call of \textsc{improve} is made iff $SIZE(I_t,s) \geq 8 (\nicefrac{1}{\delta} +1)$ and $SIZE(I_t,s) \geq (m+2)(\nicefrac{1}{\delta} +2)$. Since $\nor{y}_1 \geq \nor{x}_1 \geq SIZE(I_t,s)$ the requirements for the minimum size are fulfilled. As long as the instance is smaller than $8 (\nicefrac{1}{\delta} +1)$ or $(m+2)(\nicefrac{1}{\delta} +2)$ an offline algorithm for \BP is used. Note that there is an offline algorithm which fulfills properties $(1)$ to $(4)$ as shown by Jansen and Klein \cite{jansen2013binpacking}.
	
    Now let $B_t$ be a packing with $SIZE(I_t,s) \geq 8 (\nicefrac{1}{\delta} +1)$ and $SIZE(I_t,s) \geq (m+2)(\nicefrac{1}{\delta} +2)$ for instance $I_t$ with solutions $x$ and $y$ of the \ac{lp} defined by $(I(t),s^{R_t})$. Suppose by induction that the properties $(1)$ to $(4)$ hold for the instance $I_t$. We have to prove that these properties also hold for the instance $I(t+1)$ and the corresponding solutions $x''$ and $y''$. The packing $B_{t+1}$ is created by the repeated use of an
	call of \textsc{improve} for $x$ and $y$ followed by an operation (insert, delete, shiftA or shiftB).
    We will prove that the properties $(1)$ to $(4)$ hold after a call of \textsc{improve} followed by an operation.
	\\{\bf improve:} Let $x'$ be the resulting fractional solution of Theorem \ref{thm-packing}, let $y'$ be the resulting integral solution of Theorem \ref{thm-packing} and let $B'_t$ be the corresponding packing. Properties $(1)$ to $(4)$ are fulfilled 
	for $x$, $y$ and $B_t$ by induction hypothesis. Hence all conditions are fulfilled to use Theorem \ref{thm-packing}. 
	By Theorem \ref{thm-packing} the properties $(1)$ to $(4)$ are still fulfilled for $x'$, $y'$ and $B'_t$ and moreover we get
	$\nor{x'}_1 \leq (1+ \Delta) \OPT(I(t),s)- \alpha$ and $\nor{y'}_1 = \max_i B'_t (i) \leq (1+ 2 \Delta) \OPT(I(t),s) + m - \alpha$ for chosen parameter $\alpha$. Let $x''$ and $y''$ be the fractional and integral solution after an operation is applied to $x'$ and $y'$. We have to prove that the properties $(1)$ to $(4)$ are also fulfilled for $x''$ and $y''$.
	\\{\bf operations:} First we take a look at how the operations modify $\nor{x'}_1$ and $\nor{y'}_1 =\max_i B'_t (i)$. By construction of the insertion operation, $\nor{x'}_1$ and $\nor{y'}$ are increased at most by $2$. By construction of the delete operation, $\nor{x'}_1$ and $\nor{y'}_1$ are increased by $1$. By construction of the shiftA and shiftB operation, $\nor{x'}_1$ and $\nor{y'}_1$ are increased by $1$.
	An \textsc{improve}(2) call followed by an insertion operation therefore yields $\nor{y''} = \nor{y'}_1 +2 = (1+ 2\Delta)\OPT(I(t),s) +m -2 +2 = (1+ 2\Delta)\OPT(I(t+1),s) +m$ since $\OPT(I(t),s) \leq \OPT(I(t+1),s)$.
    An \textsc{improve}(4) call followed by a delete operation yields $\nor{y''} = \nor{y'}_1 + 1 = (1+ 2\Delta)\OPT(I(t),s) + m -3 \leq (1+ 2\Delta)\OPT(I(t+1),s) + (1+2\Delta) +m - 3 \leq (1+ 2\Delta)\OPT(I(t+1),s)$ since $\OPT(I(t),s) \leq \OPT(I(t+1),s) +1$ (an item is removed) and $\Delta \leq 1$. In the same way we obtain that $\nor{y''}_1 \leq \nor{y'}_1 +1 \leq (1+ 2\Delta)\OPT(I(t+1),s) +m$ for an \textsc{improve}(1)/\textsc{improve}(3) call followed by a shiftA/shiftB operation. This concludes the proof that property $(1)$ is fulfilled for $I(t+1)$.
	The proof that property $(2)$ holds is analog since $\nor{x'}_1$ increases in the same way as $\nor{y'}_1$ and $\nor{x'}_1   \leq (1+ \Delta) \OPT(I(t),s) - \alpha$.
	For property $(3)$ note that in the operations a configuration $x_i$ of the fractional solution is increased by $1$ if and only if a configuration $y_i$ is increased by $1$. Therefore the property that for all configurations $x''_i \leq y''_i$ retains from $x'$ and $y'$. By Theorem \ref{thm-packing} the number of non-zero components of $x'$ and $y'$ is bounded by $\Delta \OPT(I(t),s) +m \leq \Delta \OPT(I(t+1),s) +m$ in case of an insert operation. If an item is removed, the number of non-zero components of $x'$ and $y'$ is bounded by $\Delta \OPT(I(t),s) +m \leq \Delta \OPT(I(t+1),s) +m +1 = C+1$. By Theorem \ref{thm:reduce} the algorithm \textsc{ReduceComponents} guarantees that there are at most $C=\Delta \OPT(I(t+1),s) +m$ non-zero components. By construction of the shift-operation, $x''$ and $y''$ might have two additional non-zero components. But since these are being reduced by Algorithm \ref{alg-afptas} (note that we increased the number of components being reduced in step 6 by $2$ to- see \cite{jansen2013binpacking} for details), the \ac{lp} solutions $x''$ and $y''$ have at most $\Delta \OPT(I(t+1),s) +m$ non-zero components which proves property $(4)$. Algorithm \ref{alg-afptas} therefore has an asymptotic approximation ratio of $1+\epsilon$.
    
    We still need to examine the migration factor of Algorithm \ref{alg-afptas}. In the case that the offline algorithm is used, the size of the instance is smaller than $8 (\nicefrac{1}{\delta} +1) = \mathcal{O}(\nicefrac{1}{\epsilon})$ or smaller than $(m+2)(\nicefrac{1}{\delta} +2) = \mathcal{O}(\frac{1}{\epsilon^2} \log(\nicefrac{1}{\epsilon}))$. Hence the migration factor in that case is bounded by $\mathcal{O}(\frac{1}{\epsilon^3} \log(\nicefrac{1}{\epsilon}))$. If the instance is bigger the call of \textsc{improve} repacks at most $\mathcal{O}(\nicefrac{m}{\epsilon})$ bins by Theorem \ref{thm-packing}. Since every large arriving item has size $> \nicefrac{\epsilon}{14}$ and $m = \mathcal{O}(\frac{1}{\epsilon} \log (\nicefrac{1}{\epsilon}))$ we obtain a migration factor of $\mathcal{O}(\frac{1}{\epsilon^3} \log (\nicefrac{1}{\epsilon}))$ for the Algorithm \textsc{improve}. Since the migration factor of each operation is also bounded by $\mathcal{O}(\frac{1}{\epsilon^2} \log (\nicefrac{1}{\epsilon}))$, we obtain an overall migration factor of $\mathcal{O}(\frac{1}{\epsilon^3} \log (\nicefrac{1}{\epsilon}))$. 
    
    The main complexity of Algorithm \ref{alg-afptas} lies in the use of
    Algorithm \textsc{improve}. As described by Jansen and Klein
    \cite{jansen2013binpacking} the running time of \textsc{improve} is
    bounded by $\mathcal{O}(M(\nicefrac{1}{\epsilon}
    \log(\nicefrac{1}{\epsilon}))\cdot \nicefrac{1}{\epsilon^3}
    \log(\nicefrac{1}{\epsilon}))$, where $M(n)$ is the time needed to
    solve a system of $n$ linear equations. By using heap structures to
    store the items, each operation can be performed in time
    $\mathcal{O}(\nicefrac{1}{\epsilon}
    \log(\nicefrac{1}{\epsilon})\cdot \log(\epsilon^2\cdot n(t)))$ at
    time $t$, where $n(t)$ denotes the number of items in the instance
    at time $t$. As the number of non-zero components is bounded by
    $\mathcal{O}(\epsilon\cdot n(t))$, the total running time of the
    algorithm is bounded by $\mathcal{O}(M(\nicefrac{1}{\epsilon}
    \log(\nicefrac{1}{\epsilon}))\cdot \nicefrac{1}{\epsilon^3}
    \log(\nicefrac{1}{\epsilon})+\nicefrac{1}{\epsilon}
    \log(\nicefrac{1}{\epsilon}) \log(\epsilon^2\cdot n(t))+\epsilon
    n(t))$. The best known running time for the dynamic \BP problem
    \emph{without} removals was
    $\mathcal{O}(M(\nicefrac{1}{\epsilon^2})\cdot
    \nicefrac{1}{\epsilon^4}+\epsilon
    n(t)+\frac{1}{\epsilon^2}\log(\epsilon^2 n(t)))$ and is due to
    Jansen and Klein \cite{jansen2013binpacking}.  As this is polynomial
    in $n(t)$ and in $\nicefrac{1}{\epsilon}$ we can conclude that
    Algorithm \ref{alg-afptas} is an \ac{afptas}.
\end{proof}

If no deletions are present, we can use a simple FirstFit algorithm (as described by Jansen and Klein \cite{jansen2013binpacking}) to pack the small items into the bins. This does not change the migration factor or the running time of the algorithm and we obtain a robust \ac{afptas} with $\mathcal{O}(\frac{1}{\epsilon^3}\cdot \log(\nicefrac{1}{\epsilon}))$ migration for the case that no items is removed. This improves the best known migration factor of $\mathcal{O}(\frac{1}{\epsilon^4})$ \cite{jansen2013binpacking}.

\section{Handling Small Items}
\label{sec:small}
In this section we present methods for dealing with arbitrary small items in a dynamic online setting. First, we present a robust \ac{afptas} with migration factor of $\mathcal{O}(\nicefrac{1}{\epsilon})$ for the case that only small items arrive and depart. In Section \ref{sec:final} we generalize these techniques to a setting where small items arrive into a packing where large items are already packed and can not be rearranged. Finally we state the \ac{afptas} for the general fully dynamic \BP problem. In a robust setting without departing items, small items can easily be treated by packing them greedily via the classical FirstFit algorithm of Johnson et al. \cite{johnson1974packing} (see Epstein and Levin \cite{epstein2006robust} or Jansen and Klein \cite{jansen2013binpacking}). However, in a setting where items may also depart, small items need to be treated much more carefully. We show that the FirstFit algorithm does not work in this dynamic setting.

\begin{lemma}
\label{lem:firstfitdoesnotwork}
Using the FirstFit algorithm to pack small items may lead to an arbitrarily bad approximation.
\end{lemma}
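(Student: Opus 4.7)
The plan is to refute any constant asymptotic competitive ratio for FirstFit by exhibiting, for every target constant $c>0$, an instance of small items on which FirstFit uses at least $c\cdot \OPT-o(\OPT)$ bins. Since the lemma only needs ``arbitrarily bad'' and not a specific growth rate in $c$, the construction can be kept elementary and does not require mixing several item sizes.

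I will fix an integer $N>\max\{c,\,14/\epsilon\}$ so that items of size $1/N$ count as small (because $1/N<\epsilon/14$). The instance is parametrised by a second integer $M$ which will later be sent to infinity in order to defeat the additive $o(\OPT)$ slack in the asymptotic-ratio definition. In the insertion phase I present $NM$ items of size $1/N$ one after the other; FirstFit places $N$ of them into bin~$1$ until it is full, then $N$ into bin~$2$, and so on, producing $M$ bins each filled exactly to capacity. In the deletion phase I remove, from each of the $M$ bins, exactly $N-1$ of its items. Because FirstFit never repacks, the algorithm's packing still occupies $M$ bins, each now holding a single item of size $1/N$, while the surviving instance has total volume $M/N$ and therefore $\OPT=\lceil M/N\rceil$.

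Writing $A$ for FirstFit, the ratio $A(I)/\OPT(I)=M/\lceil M/N\rceil$ tends to $N$ as $M\to\infty$. Any admissible asymptotic ratio $\alpha$ together with an additive function $f\in o(\OPT)$ would have to satisfy $M\le\alpha\lceil M/N\rceil+f(I)$, and sending $M\to\infty$ (so that $\OPT\to\infty$) forces $\alpha\ge N$; since $N>c$ was arbitrary, no constant $\alpha$ works. The essential point is that the deletion pattern leaves one tiny item per bin whereas a clairvoyant algorithm can consolidate all survivors into $\lceil M/N\rceil$ bins, and the only technicality, the $o(\OPT)$ slack, is killed automatically by letting $M\to\infty$ with $N$ fixed. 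I do not anticipate any deeper obstacle.
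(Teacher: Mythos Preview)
Your argument is correct for the literal statement and is pleasantly elementary: with a single item size $1/N$ and no repacking, deleting all but one item from each of $M$ full bins leaves $M$ bins against $\OPT=\lceil M/N\rceil$, and $N$ can be taken as large as you like. This is a genuinely different construction from the paper's.

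The paper, however, proves a stronger statement that your construction does not reach. It considers any algorithm $\mathcal{A}$ with an arbitrary but fixed migration factor $c$ that uses FirstFit on the small items; the point is that even the repacking budget granted by migration cannot rescue FirstFit. Your instance collapses under that stronger hypothesis: when an item of size $1/N$ departs, the algorithm may move total size $c/N$, which for $c\ge 1$ is enough to relocate one surviving item into an earlier bin, so the survivors can be consolidated and the bad ratio disappears. The paper avoids this by using two sizes, a ``large small'' size $b\approx\epsilon/14$ and a ``tiny small'' size $a$ chosen so that $c\cdot a<b$. FirstFit packs one $b$-item and then fills the rest of each bin with $a$-items; when the $a$-items depart, the migration budget $c\cdot a$ is too small to dislodge any $b$-item, so the $b$-items remain stranded one per bin. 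What the paper's approach buys is robustness against any bounded migration; what your approach buys is simplicity and an unbounded ratio even at fixed $\epsilon$. In the context of the paper, where the whole framework assumes a nonzero migration factor, the former is the relevant property, so you should at least note that your instance only defeats migration factor $0$ and sketch how the two-size trick extends the lower bound to arbitrary $c$.
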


\begin{proof}
Suppose, that there is an algorithm $\mathcal{A}$ with migration factor $c$ which uses FirstFit on items with size $< \nicefrac{\epsilon}{14}$. We will now construct an instance where $\mathcal{A}$ yields an arbitrary bad approximation ratio. Let $b=\nicefrac{\epsilon}{14}-\delta$ and $a=\nicefrac{\epsilon}{14c}-(\nicefrac{(\delta+c\delta)}{c})$ for a small $\delta$ such that $\nicefrac{(1-b)}{a}$ is integral. Note that $ac<b$ by definition. Furthermore, let $M\in \mathbb{N}$ be an arbitrary integer and consider the instance
\begin{align*}
I_{M}=[\underbrace{A,A,\ldots,A}_M,\underbrace{B,B,\ldots,B}_M]
\end{align*}
with
\begin{align*}
&A=(b,\text{Insert}),\underbrace{(a,\text{Insert}),(a,\text{Insert}),\ldots,(a,\text{Insert})}_{\nicefrac{(1-b)}{a}}\\
&B=\underbrace{(a,\text{Delete}),(a,\text{Delete}),\ldots,(a,\text{Delete})}_{\nicefrac{(1-b)}{a}}.
\end{align*}

After the insertion of all items, there are $M$ bins containing an item of size $b$ and $\nicefrac{1-b}{a}$ items of size $a$ (see Figure \ref{fig:counter}). As $ac<b$, the deletion of the items of size $a$ can not move the items of size $b$. The remaining $M$ bins thus only contain a single item of size $b$ (see Figure \ref{fig:counter1}), while $\lceil M\cdot b\rceil$ bins would be sufficient to pack all of the remaining items. The approximation ratio is thus at least $\nicefrac{M}{M\cdot b}=\nicefrac{1}{b}\approx \frac{1}{\epsilon}$ and thus grows as $\epsilon$ shrinks. In order to avoid this problem, we design an algorithm which groups items of similar size together. Using such a mechanism would therefore put the second item of size $b$ into the first bin by shifting out an appropriate number of items of size $a$ and so on. Our algorithms achieves this grouping of small items by enumerating the bins and maintaining the property, that larger small items are always left of smaller small items. 

%Given items of size $b<\nicefrac{\epsilon}{2}$ and items of size $a\ll b$ such that repacking an item of size $b$ as an item of size $a$ is inserted or deleted gives huge migration. Items of these types are inserted via FirstFit such that each bin contains an item of size $b$ and $\nicefrac{1-b}{a}$ items of size $a$ (see Figure \ref{fig:counter}). A sequence that departs all items of size $a$ would lead to a packing where each bin contains only one item of size $b$. A packing that contains only one item of size $b<\nicefrac{\epsilon}{2}$ is far from being optimal.

\begin{figure}[ht]
\begin{subfigure}{.45\textwidth}
 \begin{tikzpicture}[node distance=0.1cm]
 \draw (0,0) to (0,3);
 \draw (1,0) to (1,3);
 \draw (0,0) to (1,0);
 \draw (0,3) to (1,3);
 
 \node (b) at (0.5,0.4) {$b$};
 
 \node (a) at (2,1.5) {$a$};
 
 \draw (0,0.8) to (1,0.8);
 
 \foreach \x in {1.0,1.2,...,2.8}{
 \draw (0,\x) to (1,\x);
 \draw[->] (a) to (0.5,\x-0.1);
 }
 \draw[->] (a) to (0.5,2.9);
 
 \end{tikzpicture}

\caption{A single bin after the insertion} \label{fig:counter}
\end{subfigure}
\begin{subfigure}{.45\textwidth}

 \begin{tikzpicture}[node distance=1cm]
 \draw (0,0) to (0,3);
 \draw (1,0) to (1,3);
 \draw (0,0) to (1,0);
 \draw (0,3) to (1,3);
 
 \node (b) at (0.5,0.4) {$b$};

 \draw (0,0.8) to (1,0.8);

 \end{tikzpicture}

\caption{A single bin after the deletion}
\label{fig:counter1}

\end{subfigure}

\caption{Construction in the proof of Lemma \ref{lem:firstfitdoesnotwork}}
\end{figure}
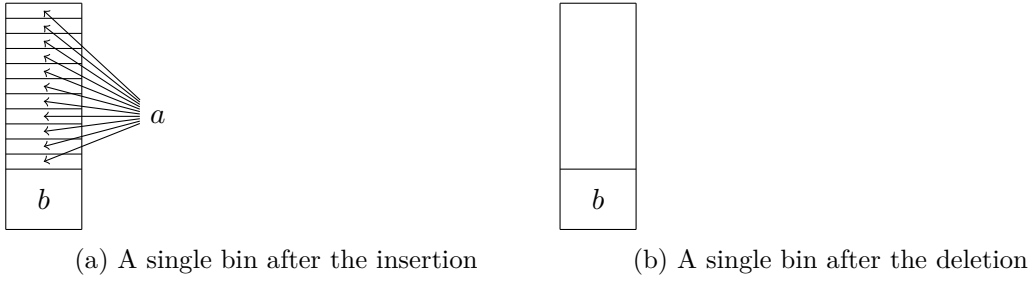

\end{proof}

\subsection{Only Small Items}
\label{sec:smallitems}
We consider a setting where only small items exist, i.\,e., items with a size less than $ \nicefrac{\epsilon}{14}$.
First, we divide the set of small items into different size intervals $S_j$ where $S_j=\left[\frac{\epsilon}{2^{j+1}},\frac{\epsilon}{2^j}\right)$ for $j \geq 1$. Let $b_1,\ldots, b_m$ be the used bins of our packing. We say a size category $S_j$ is bigger than a size category $S_k$ if $j<k$, i.\,e., the item sizes contained in $S_j$ are larger (note that a size category $S_j$ with large index $j$ is called small).
We say a bin $b_i$ is filled completely if it has less than $\frac{\epsilon}{2^j}$ remaining space, where $S_j$ is the biggest size category appearing in $b_i$.
Furthermore we label bins $b_i$ as \emph{normal} or as \emph{buffer bins} and partition all bins $b_1,\ldots, b_m$ into \emph{queues} $Q_1, \ldots, Q_{d}$ for $|Q| \leq m$. A queue is a subsequence of bins $b_i, b_{i+1} \ldots, b_{i+c}$ where bins $b_i, \ldots, b_{i+c-1}$ are normal bins and bin $b_{i+c}$ is a buffer bin. We denote the $i$-th queue by~$Q_i$ and the number of bins in $Q_i$ by $|Q_i|$. The buffer bin of queue $Q_i$ is denoted by $bb_i$.

We will maintain a special form for the packing of small items such that the following properties are always fulfilled. For the sake of simplicity, we assume that $\nicefrac{1}{\epsilon}$ is integral. 
\begin{compactenum}
\item[(1)] For every item $i\in b_d$ with size $s(i)\in S_j $ for some $j,d \in \mathbb{N}$, there is no item $i' \in b_{d'}$ with size $s(i') \in S_{j'}$ such that $d'>d$ and $j' > j$. This means: Items are ordered from left to right by their size intervals.
\item[(2)] Every normal bin is filled completely.
\item[(3)] The length of each queue is at least $\nicefrac{1}{\epsilon}$ and at most $\nicefrac{2}{\epsilon}$ except for the last queue $Q_d$. 
%\item[(3)] If a bin containing only small items exists, all bins containing large items are normal. I.e. bins containing large items have to be filled with small items first.
\end{compactenum}
Note that property (1) implies that all items in the same size interval $S_j$ are packed into bins $b_x, b_{x+1}, \ldots , b_{x+c}$ for constants $x$ and $c$. Items in the next smaller size category $S_{j+1}$ are then packed into bins $b_{x+c}, b_{x+c+1}, \ldots$ and so on. We denote by $b_{S(\ell)}$ the last bin in which an item of size interval $S_\ell$ appears. We denote by $S_{>\ell}$ the set of smaller size categories $S_{\ell'}$ with $\ell' > \ell$. Note that items in size category $S_{>\ell}$ are smaller than items in size category $S_\ell$.

\tikzset{   brace/.style={
     decoration={brace, mirror},
     decorate
   },
      position label/.style={
       below = 3pt,
       text height = 1.5ex,
       text depth = 1ex
    }
}

\begin{figure}[ht]
\centering
\resizebox{\textwidth}{!}{
\begin{tikzpicture}
\bin{1}{0}{$b_1$}
\node[xshift=1.5cm] (ld1) {$\ldots$};
\bin{2}{3cm}{$b_{|Q_1|-1}$}
\bin{3}{4cm}{$bb_1$}
\draw [brace] (1.south) -- (3.south) node [position label, pos=0.5] {$Q_1$};

\bin{4}{6cm}{$b_{|Q_1|+1}$}
\node[xshift=7.5cm] (ld2) {$\ldots$};
\bin{5}{9cm}{$b_{|Q_2|-1}$}
\bin{6}{10cm}{$bb_2$}
\draw [brace] (4.south) -- (6.south) node [position label, pos=0.5] {$Q_2$};
\node[xshift=11.5cm] (ld3) {$\ldots$};

\bin{7}{13cm}{$b_{|Q_{d-1}|+1}$}
\node[xshift=14.5cm] (ld2) {$\ldots$};
\bin{8}{16cm}{$b_{|Q_d|-1}$}
\bin{9}{17cm}{$bb_d$}
\draw [brace] (7.south) -- (9.south) node [position label, pos=0.5] {$Q_d$};
\end{tikzpicture}}
\caption{Distribution of bins with small items into queues}
\end{figure}
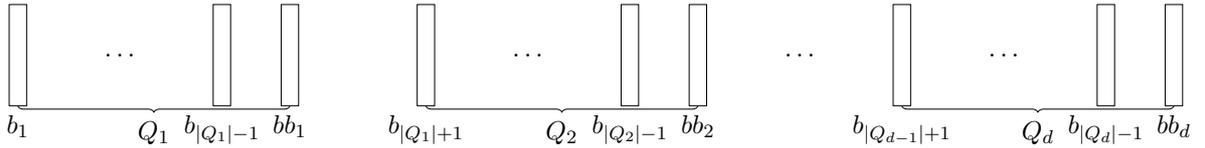

The following lemma guarantees that a packing that fulfills properties $(1)$ to $(3)$ is close to the optimum solution.
\begin{lemma} \label{lem:small_approx}
If properties $(1)$ to $(3)$ hold, then at most $(1+\mathcal{O}(\epsilon))\OPT(I,s)+2$ bins are used in the packing for every $\epsilon \leq \nicefrac{1}{3}$.
\end{lemma}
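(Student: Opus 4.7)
The plan is to bound the total number of used bins $m$ by separately controlling the buffer bins (one per queue) and the normal bins, then closing with a simple volume comparison. Property~$(1)$ plays no role in this estimate; the work is done by $(2)$ and $(3)$ together with the smallness of the items.

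First I would count buffer bins. Let $d$ denote the number of queues, so the packing has exactly $d$ buffer bins. Property~$(3)$ guarantees that every queue except possibly the last contains at least $\lceil 1/\epsilon\rceil$ bins, hence $(d-1)/\epsilon \le m$ and therefore $d \le \epsilon m + 1$.

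Next I would argue that each normal bin is almost full. Since every item has size strictly less than $\epsilon/14$, any nonempty size category $S_j = [\epsilon/2^{j+1}, \epsilon/2^j)$ must satisfy $\epsilon/2^{j+1} < \epsilon/14$, which forces $j \ge 3$. Consequently, in every normal bin the biggest category $S_j$ occurring in it has $j \ge 3$, so by property~$(2)$ its free space is strictly less than $\epsilon/2^j \le \epsilon/8$. Summing the contents of the $m - d$ normal bins yields $\SIZE(I,s) > (m-d)(1 - \epsilon/8)$, and since any feasible packing has total item volume at most the number of bins it uses, $\SIZE(I,s) \le \OPT(I,s)$. This gives $m - d \le \OPT(I,s)/(1 - \epsilon/8)$.

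Finally I would combine the two estimates. Using the elementary inequality $1/(1-\epsilon/8) \le 1 + \epsilon/4$ (valid for $\epsilon \le 1/3$) together with $d \le \epsilon m + 1$ produces $m(1-\epsilon) \le 1 + (1 + \epsilon/4)\OPT(I,s)$. Since $1/(1-\epsilon) \le 3/2 \le 2$ for $\epsilon \le 1/3$, dividing through gives $m \le (1 + \mathcal{O}(\epsilon))\OPT(I,s) + 2$, as required. The main difficulty is not conceptual but book-keeping: the translation of the small-items assumption $s(i) < \epsilon/14$ into the uniform lower bound $j \ge 3$ is what turns property~$(2)$ from a size-dependent guarantee into the uniform waste bound $\epsilon/8$ that drives the volume argument; one then only has to solve a small linear inequality while tracking constants carefully enough to keep the additive term at most $2$ under the hypothesis $\epsilon \le 1/3$.
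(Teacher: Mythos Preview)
Your proof is correct and follows essentially the same approach as the paper: bound the number of buffer bins via property~(3), bound the waste in normal bins via property~(2), and close with the volume inequality $\OPT \ge \SIZE$. The one place where you are actually more careful than the paper is the waste bound: you derive that the biggest occupied category has index $j\ge 3$ and hence free space $<\epsilon/8$, whereas the paper simply asserts free space $<\epsilon/14$ (which, strictly from the definition of ``filled completely'', does not quite follow---only $\epsilon/2^j\le\epsilon/8$ does). Either constant gives the $1+\mathcal{O}(\epsilon)$ conclusion, so this is a cosmetic difference.
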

\begin{proof}
    Let $C$ be the number of used bins in our packing. By property (2) we know that all normal bins have less than $\nicefrac{\epsilon}{14}$ free space. Property (3) implies that there are at most $\epsilon\cdot C +1$ buffer bins and hence possibly empty. The number of normal bins is thus at least $(1-\epsilon)\cdot C-1$. Therefore we can bound the total size of all items by $\geq (1-\nicefrac{\epsilon}{14})\cdot ((1-\epsilon)\cdot C-1)$. As $\OPT(I,s)\geq SIZE(I,s) \geq (1-\nicefrac{\epsilon}{14})\cdot ((1-\epsilon)\cdot C-1)$ and $\frac{1}{(1-\nicefrac{\epsilon}{14})(1-\epsilon)} \leq 1+ 2\epsilon$ for $\epsilon \leq \nicefrac{1}{3}$ we get $C \leq (1+2\epsilon) \OPT(I,s)  +2.$ 
\end{proof}

We will now describe the operations that are applied whenever a small item has to be inserted or removed from the packing. The operations are designed such that properties $(1)$ to $(3)$ are never violated and hence a good approximation ratio can be guaranteed by Lemma \ref{lem:small_approx} at every step of the algorithm. 
The operations are applied recursively such that some items from each size interval are shifted from left to right (insert) or right to left (delete). The recursion halts if the first buffer bin is reached. Therefore, the free space in the buffer bins will change over time. Since the recursion always halts at the buffer bin, the algorithm is applied on a single queue $Q_k$.

The following Insert/Delete operation is defined for a whole set $J = \{i_1, \ldots , i_n \}$ of items. If an item $i$ of size interval $S_{\ell}$ has to be inserted or deleted, the algorithm is called with Insert$(\{ i \},b_{S(\ell)}, Q_k)$ respectively Delete$(\{i\},b_x, Q_k)$, where $b_x$ is the bin containing item $i$ and $Q_k$ is the queue containing bin $b_{S(\ell)}$ or $b_x$. Recall that $S_{j}=\left[\frac{\epsilon}{2^{j+1}},\frac{\epsilon}{2^{j}}\right)$ is a fixed interval for every $j\geq 1$ and $S_{\leq j}=\bigcup_{i=1}^{j}S_{i}$ and $S_{>j}=\bigcup_{i>j}S_{i}$.
\begin{algo}[Insert or Delete for only small items]
\label{alg-insertsmall}
\ 
\begin{compactitem}
\item {\bf Insert$(J,b_x,Q_k)$:} 

\begin{compactitem}
\item Insert the set of small items $J = \{i_1, \ldots , i_n \}$ with size $s(i_j) \in S_{\leq \ell}$ into bin $b_x$. (By Lemma \ref{onlysmallitems} the total size of $J$ is bounded by $\mathcal{O}(\nicefrac{1}{\epsilon})$ times the size of the item which triggered the first Insert operation.)
\item Remove just as many items $J' = \{ i'_1, \ldots , i'_m \}$ of the smaller size interval $S_{>\ell}$ appearing in bin $b_x$ (starting by the smallest) such that the items $i_1, \ldots, i_n$ fit into the bin $b_x$. If there are not enough items of smaller size categories to insert all items from $I$, insert the remaining items from $I$ into bin $b_{x+1}$.
\item Let $J'_{\ell'}\subseteq J'$ be the items in the respective size
  interval $S_{\ell'}$ with $\ell'>\ell$. Put the items $J'_{\ell'}$ recursively into bin $b_{S(\ell')}$ (i.\,e., call Insert$(J'_{\ell'},b_{S(\ell')},Q_k)$ for each $\ell'>\ell$). If the buffer bin $bb_k$ is left of $b_{S(\ell')}$ call Insert$(J'_{\ell'},bb_k,Q_k)$ instead.
\end{compactitem}

\item {\bf Delete$(J,b_x,Q_k)$:} 
\begin{compactitem}
\item Remove the set of items $J= \{ i_1, \ldots , i_n \}$ with size $s(i_j) \in S_{\leq \ell}$ from bin $b_x$ (By Lemma \ref{onlysmallitems} the total size of $J$ is bounded  by $\mathcal{O}(\nicefrac{1}{\epsilon})$ times the size of the item which triggered the first Delete operation.)
\item Insert as many small items $J'=\{i'_1, \ldots , i'_m\}$ from $b_{S(\ell')}$, where $S_{\ell'}$ is the smallest size interval appearing in $b_x$ such that $b_x$ is filled completely. If there are not enough items from the size category $S_{\ell'}$, choose items from size category $S_{\geq \ell'+1}$ in bin $b_{x+1}$.
\item Let $J'_{\ell'}\subseteq J'$ be the items in the respective size interval $S_{\ell'}$ with $\ell'>\ell$. Remove items $J'_{\ell'}$ from bin $b_{S(\ell')}$ recursively (i.\,e., call Delete$(J'_{\ell'},b_{S(\ell')},Q_k)$ for each $\ell'>\ell$). If the buffer bin $bb_k$ is left of $b_{S(\ell')}$, call Delete$(J'_{\ell'},bb_k,Q_k)$ instead.
\end{compactitem}
\end{compactitem}
\end{algo}
Using the above operations maintains the property of normal bins to be filled completely. However, the size of items in buffer bins changes. In the following we describe how to handle buffer bins that are being emptied or filled completely. 
\begin{algo}[Handle filled or emptied buffer bins]
\label{alg-bb}
\
\begin{compactitem}
\item {\bf Case 1: The buffer bin of $Q_i$ is filled completely by an insert operation.}
\begin{compactitem}
\item Label the filled bin as a normal bin and add a new empty buffer bin to the end of $Q_i$. 
\item If $|Q_i|>\nicefrac{2}{\epsilon}$, split $Q_i$ into two new queues $Q'_i,Q''_i$ with $|Q''_i|=|Q'_i|+1$. The buffer bin of $Q''_i$ is the newly added buffer bin. Add an empty bin labeled as the buffer bin to $Q'_i$ such that $|Q'_i|=|Q''_i|$. 
\end{compactitem}
\item {\bf Case 2: The buffer bin of $Q_i$ is being emptied due to a delete operation.}
\begin{compactitem}
\item Remove the now empty bin. 
\item If $|Q_{i}| \geq |Q_{i+1}|$ and $|Q_{i}|> \nicefrac{1}{\epsilon}$, choose the last bin of $Q_{i}$ and label it as new buffer bin of $Q_i$. 
\item If $|Q_{i+1}| > |Q_i|$ and $|Q_{i+1}|> \nicefrac{1}{\epsilon}$, choose the first bin of $Q_{i+1}$ and move the bin to $Q_i$ and label it as buffer bin. 
\item If $|Q_{i+1}| = |Q_i| = \nicefrac{1}{\epsilon}$, merge the two queues $Q_i$ and $Q_{i+1}$. As $Q_{i+1}$ already contains a buffer bin, there is no need to label another bin as buffer bin for the merged queue.
\end{compactitem}
\end{compactitem}
\end{algo}
Creating and deleting buffer bins this way guarantees that property (3) is never violated since queues never exceed the length of $\nicefrac{2}{\epsilon}$ and never fall below $\nicefrac{1}{\epsilon}$.

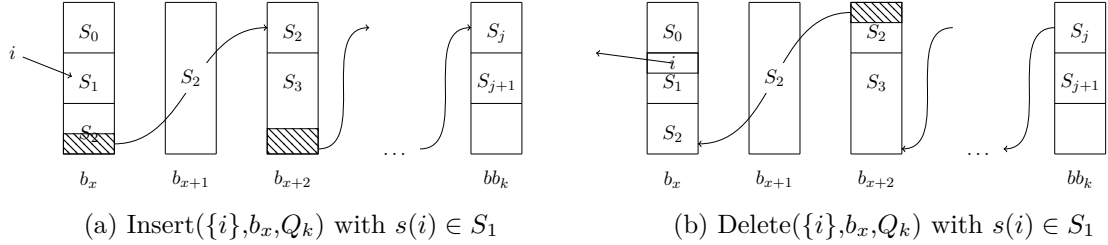
\begin{figure}[ht]
  \begin{subfigure}{.49\textwidth}
 \scalebox{0.67}{
 \begin{tikzpicture}[node distance=0.1cm]
 \draw (0,0) to (0,3);
 \draw (1,0) to (1,3);
 \draw (0,0) to (1,0);
 \draw (0,3) to (1,3);
 
 \node (s0) at (0.5,2.4) {$S_0$};
 \node (s1) at (0.5,1.4) {$S_1$};
 \node (s2) at (0.5,0.4) {$S_2$};
 \node (bx) at (0.5,-0.5) {$b_x$};
 
 \draw[pattern=north west lines, pattern color=black] (0,0) rectangle (1,0.4);
 
 \node (i)  at (-1,2) {$i$};
 
 \draw[->] (i) to (s1);

 \foreach \x in {1.0,2.0}{
 \draw (0,\x) to (1,\x);
  }

\draw[->] (1,0.2) to[out = 0, in = 180, looseness = 1] (4,2.5);

 \draw (2,0) to (2,3);
 \draw (3,0) to (3,3);
 \draw (2,0) to (3,0);
 \draw (2,3) to (3,3);
 
  \node[fill=white] (s2) at (2.5,1.5) {$S_2$};
  
 \node (bx) at (2.5,-0.5) {$b_{x+1}$};
 
 \draw (4,0) to (4,3);
 \draw (5,0) to (5,3);
 \draw (4,0) to (5,0);
 \draw (4,3) to (5,3);
 
 \node (s0) at (4.5,2.4) {$S_2$};
 \node (s1) at (4.5,1.4) {$S_3$};
 \draw (4,2) to (5,2);
 \draw[pattern=north west lines, pattern color=black] (4,0) rectangle (5,0.5);
 \node (bx) at (4.5,-0.5) {$b_{x+2}$};

\node (ld) at (6.5,0) {$\ldots$};

\draw[->] (5,0.1) to[out = 0, in = 180, looseness = 1] (6,2.5);

 \draw (8,0) to (8,3);
 \draw (9,0) to (9,3);
 \draw (8,0) to (9,0);
 \draw (8,3) to (9,3);
 
 \node (s0) at (8.5,2.4) {$S_{j}$};
 \node (s1) at (8.5,1.4) {$S_{j+1}$};
 \node (s2) at (8.5,0.4) {};
 \node (bx) at (8.5,-0.5) {$bb_{k}$};

\foreach \x in {1.0,2.0}{
 \draw (8,\x) to (9,\x);
  }
  
  \draw[->] (7,0.1) to[out = 0, in = 180, looseness = 1] (8,2.5);

 \end{tikzpicture}}

    \caption{Insert($\{i \}$,$b_x$,$Q_k$) with $s(i)\in S_{1}$}
  \label{fig:insert}
  \end{subfigure}%
  \begin{subfigure}{.49\textwidth}
 \scalebox{0.67}{
   \begin{tikzpicture}[node distance=0.1cm]
 \draw (0,0) to (0,3);
 \draw (1,0) to (1,3);
 \draw (0,0) to (1,0);
 \draw (0,3) to (1,3);
 
 \node (s0) at (0.5,2.4) {$S_0$};
 \node (s1) at (0.5,1.4) {$S_1$};
 \node (s2) at (0.5,0.4) {$S_2$};
 \node (bx) at (0.5,-0.5) {$b_x$};
 
 \draw (0,1.6) rectangle (1,2.0);
 
 \node (i)  at (0.5,1.8) {$i$};
 
 \draw[->] (0.5,1.8) to (-1,2);

 \foreach \x in {1.0,2.0}{
 \draw (0,\x) to (1,\x);
  }

\draw[<-] (1,0.2) to[out = 0, in = 180, looseness = 1] (4,2.8);

 \draw (2,0) to (2,3);
 \draw (3,0) to (3,3);
 \draw (2,0) to (3,0);
 \draw (2,3) to (3,3);
 
  \node[fill=white] (s2) at (2.5,1.5) {$S_2$};
  
 \node (bx) at (2.5,-0.5) {$b_{x+1}$};
 
 \draw (4,0) to (4,3);
 \draw (5,0) to (5,3);
 \draw (4,0) to (5,0);
 \draw (4,3) to (5,3);
 
 \node (s0) at (4.5,2.4) {$S_2$};
 \node (s1) at (4.5,1.4) {$S_3$};
 \draw (4,2) to (5,2);
 \draw[pattern=north west lines, pattern color=black] (4,2.6) rectangle (5,3.0);
 \node (bx) at (4.5,-0.5) {$b_{x+2}$};

\node (ld) at (6.5,0) {$\ldots$};

\draw[<-] (5,0.1) to[out = 0, in = 180, looseness = 1] (6,2.5);

 \draw (8,0) to (8,3);
 \draw (9,0) to (9,3);
 \draw (8,0) to (9,0);
 \draw (8,3) to (9,3);
 
 \node (s0) at (8.5,2.4) {$S_{j}$};
 \node (s1) at (8.5,1.4) {$S_{j+1}$};
 \node (s2) at (8.5,0.4) {};
 \node (bx) at (8.5,-0.5) {$bb_{k}$};

\foreach \x in {1.0,2.0}{
 \draw (8,\x) to (9,\x);
  }
  
  \draw[<-] (7,0.1) to[out = 0, in = 180, looseness = 1] (8,2.5);

 \end{tikzpicture}}

  \caption{Delete($\{i \}$,$b_x$,$Q_k$) with $s(i)\in S_{1}$}
  \label{fig:delete}
  \end{subfigure}

\caption{Example calls of Insert and Delete.}
\end{figure}

Figure \ref{fig:insert} shows an example call of Insert($\{ i \}$,$b_x$,$Q_k$). Item $i$ with $s(i)\in S_1$ is put into the corresponding bin $b_x$ into the size interval $S_1$. As $b_x$ now contains too many items, some items from the smallest size interval $S_2$ (marked by the dashed lines) are put into the last bin $b_{x+2}$ containing items from $S_2$. Those items in turn push items from the smallest size interval $S_3$ into the last bin containing items of this size and so on. This process terminates if either no items need to be shifted to the next bin or the buffer bin $bb_k$ is reached.

It remains to prove that the migration of the operations is bounded and that the properties are invariant under those operations.

\begin{lemma}
\label{onlysmallitems}
\ 
    \begin{enumerate}
        \item[(i)] Let $I$ be an instance that fulfills properties $(1)$ to $(3)$. Applying operations insert/delete on $I$ yields an instance $I'$ that also fulfills properties $(1)$ to $(3)$. 
        \item[(ii)] The migration factor of a single insert/delete operation is bounded by $\mathcal{O}(\nicefrac{1}{\epsilon})$ for all $\epsilon\leq \nicefrac{2}{7}$.
    \end{enumerate}
\end{lemma}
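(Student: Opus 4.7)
The plan is to treat (i) and (ii) separately, both by induction on the recursion depth of the Insert/Delete operation defined in Algorithm \ref{alg-insertsmall}.

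For (i), I would verify the three invariants one at a time. Property~$(1)$ (left-to-right size ordering) is preserved because every recursive call pushes items of a smaller size interval $S_{\ell'}$ with $\ell'>\ell$ into either $b_{S(\ell')}$ or the buffer bin of $Q_k$; both sit weakly to the right of any bin currently holding $S_{\ell'}$ items, so no item of a large category ever ends up to the right of an item of a smaller category. Delete is symmetric: items are only pulled from a bin strictly to the right of their target. Property~$(2)$ (normal bins filled) holds by construction of the eviction/refill step: in Insert, we evict just enough items from $b_x$ so that the remaining free space is at most the size of a single item in $S_{>\ell}$, which is $<\epsilon/2^{\ell+1}\le \epsilon/2^{j}$ for the largest category $S_{j}$ appearing in $b_x$; thus $b_x$ stays filled completely. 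If $b_x$ does not contain enough small items to make room, the overflow is placed into $b_{x+1}$, which itself was filled completely beforehand and therefore becomes the ``new buffer-like'' bin that cascades on. Only the buffer bin of $Q_k$ ends up partially filled, which is allowed. Property~$(3)$ (queue length) is enforced directly by Algorithm \ref{alg-bb}: whenever the buffer fills we add a fresh empty buffer and split the queue if the length would exceed $2/\epsilon$; whenever the buffer empties we either relabel the last normal bin, borrow from $Q_{i+1}$, or merge $Q_i$ and $Q_{i+1}$, so the length stays in $[\nicefrac1\epsilon,\nicefrac2\epsilon]$.

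For (ii), I would bound the migration by the product of the per-level cost and the recursion depth. Each recursive Insert$(J'_{\ell'},b_{S(\ell')},Q_k)$ or Delete$(J'_{\ell'},b_{S(\ell')},Q_k)$ targets a strictly smaller size category, hence a bin strictly to the right within the same queue $Q_k$, and the recursion terminates once the buffer bin $bb_k$ is reached. By property~$(3)$, $|Q_k|\le 2/\epsilon$, so the recursion depth is $O(\nicefrac1\epsilon)$. At each level, the total size of items moved into or out of the touched bin is at most the size of the previous-level push plus the granularity of one evicted item, and the latter is bounded by $\epsilon/2^{\ell'+1}\le\epsilon/2^{\ell+2}\le s(i)$ since the initial item satisfies $s(i)\ge \epsilon/2^{\ell+1}$. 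A short induction then shows that the total size moved at every level is $O(s(i))$, so the overall migration is $O(\nicefrac1\epsilon)\cdot O(s(i))=O(s(i)/\epsilon)$, giving migration factor $O(\nicefrac1\epsilon)$.

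The main obstacle will be the bookkeeping at the per-level step: since Insert evicts whole items, the evicted total size can exceed the incoming size by up to one item of the smaller category, and this slack must not accumulate geometrically across the $O(\nicefrac1\epsilon)$ levels. I expect this is exactly where the hypothesis $\epsilon\le\nicefrac27$ is used, to guarantee that the geometric shrinkage of category sizes dominates the integer-rounding slack and, together with the minimum queue length $\nicefrac1\epsilon\ge \nicefrac72$, that the buffer bin can absorb the final residual. A secondary delicate point is the case where $b_x$ already lies to the right of the buffer bin (so a recursive call is redirected to $bb_k$); the plan is to treat this case as a base case of the recursion, since touching $bb_k$ ends the cascade and only triggers the (already-analyzed) management of Algorithm \ref{alg-bb}, which itself does not move items.
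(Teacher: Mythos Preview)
Your plan for (i) is essentially the paper's argument: property~(1) by the shift targets, property~(2) by the eviction/refill rule, and property~(3) by Algorithm~\ref{alg-bb}. Nothing to add there.

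For (ii), your high-level decomposition---a per-bin bound of $O(s(i))$ times at most $2/\epsilon$ bins in the queue---is exactly the paper's approach. The gap is in your per-bin argument. You model the cascade as a linear chain (``previous-level push plus one evicted item''), but the recursion is not a chain: a bin $b_{x+1}$ can receive items from \emph{two} predecessor bins---overflow of same-category items from $b_x$ (when $b_x$ cannot make room) and, separately, evicted items of category $S_{\ell+\bar y}$ from an earlier bin $b_{\hat x}$ whose eviction set happens to land at $b_{x+1}=b_{S(\ell+\bar y)}$. Your additive recurrence does not control this fan-in, and without care the slack could accumulate. The paper resolves this by defining $\INSERT(S_{\le \ell+y},b_x)$ and proving, by a three-case analysis (overflow; single-source; two-source), the invariant
\[
\INSERT(S_{\le \ell+y},b_x)\;\le\; s(i)+3\sum_{j=1}^{y}\frac{\epsilon}{2^{\ell+j}}\;<\;s(i)+\frac{3\epsilon}{2^{\ell}}\;\le\;7\,s(i),
\]
where the crucial point is that the index is the size-category offset $y$, not the bin index. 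This is what makes the geometric series close up even when two predecessors contribute.

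Two smaller remarks. First, the role of $\epsilon\le\nicefrac27$ is not what you guessed: it is used only to ensure $\INSERT\le 7s(i)\le 7\cdot\nicefrac{\epsilon}{14}=\nicefrac{\epsilon}{2}\le 1$, so that overflow always fits into the next bin; it has nothing to do with queue length or ``absorbing residuals.'' Second, your claim that each recursive call targets a strictly smaller size category is false in the overflow branch (the leftover items of category $S_{\le\ell}$ go to $b_{x+1}$ at the same category); this is precisely why the bookkeeping must be indexed by category rather than by recursion depth.
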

\begin{proof}
Proof for (i): Suppose the insert/delete operation is applied to a packing which fulfills properties $(1)$ to $(3)$. By construction of the insert operation, items from a size category $S_\ell$ in bin $b_x$ are shifted to a bin $b_y$. The bin $b_y$ is either $b_{S(\ell)}$ or the a buffer bin left of $b_{S(\ell)}$. By definition $b_y$ contains items of size category $S_{\ell}$. Therefore property $(1)$ is not violated. Symmetrically, by construction of the delete operation, items from a size category $S_\ell$ in bin $b_{S(\ell)}$ are shifted to a bin $b_x$. By definition $b_x$ contains items of size category $S_{\ell}$ and property $(1)$ is therefore not violated.
For property $(2)$: Let $b_x$ be a normal bin, where items $i_1, \ldots , i_n$ of size category $S_{\leq \ell}$ are inserted. We have to prove that the free space in $b_x$ remains smaller than $\nicefrac{\epsilon}{2^j}$, where $S_j$ is the smallest size category appearing in bin $b_x$. By construction of the insert operation, just as many items of size categories $S_{>\ell}$ are shifted out of bin $b_x$ such that $i_1,\ldots,i_n$ fit into $b_x$. Hence the remaining free space is less than $\frac{\epsilon}{2^{\ell}}$ and bin $b_x$ is filled completely. The same argumentation holds for the delete operation.
Property $(3)$ is always fulfilled by definition of Algorithm \ref{alg-bb}.

    Proof for (ii): According to the insert operation, in every recursion step of the algorithm, it tries to insert a set of items into a bin $b_{x'}$,  starting with an Insert$(\{ i \},b_{x'},Q_k)$ operation. Let $\INSERT(S_{\leq \ell+y}, b_x)$ ($x\geq x'$) be the size of all items in size categories $S_{j}$ with $j \leq \ell+y$ that the algorithm tries to insert into $b_x$ as a result of an Insert$(\{ i \},b_{x'},Q_k)$ call. Let $\PACK(b_x)$ be the size of items that are actually packed into bin $b_x$. 
    %By definition of the insert operation, items in $S_{{\ell+y}}$ have to be shifted in order to make room for items of a smaller size interval $S_{\ell+y'}$ with $y' < y$. 
    We have to distinguish between two cases. In the case that $\INSERT(S_{\leq \ell+y}, b_x) = \PACK(b_x)$ there are enough items of smaller size categories $S_{>\ell+y}$ that can be shifted out, such that items $I$ fit into bin $b_x$. In the case that $\INSERT(S_{\leq \ell+y}, b_x) > \PACK(b_x)$ there are not enough items of smaller size category that can be shifted out and the remaining size of $\INSERT(S_{\leq \ell+y}, b_x) - \PACK(b_x)$ has to be shifted to the following bin $b_{x+1}$. Under the assumption that each $\INSERT(S_{\leq \ell}, b_x)\leq 1$ for all $x$ and $\ell$ (which is shown in the following) all items fit into $b_{x+1}$. Note that no items from bins left of $b_{x}$ can be shifted into $b_{x+1}$ since $b_x = b_{S(\ell+y)}$ is the last bin where items of size category $S_{\leq \ell+y}$ appear. Hence all items shifted out from bins left of $b_{x}$ are of size categories $S_{\leq \ell +y}$ (property $(1)$) and they are inserted into bins left of $b_{x+1}$.
    We prove by induction that for each $\INSERT(S_{\leq \ell+y}, b_x)$ the total size of moved items is at most 
    \begin{align*} \INSERT(S_{\leq \ell+y}, b_x) \leq s(i) + 3 \sum_{j=1}^{y} \frac{\epsilon}{2^{\ell+j}} \end{align*}
The claim holds obviously for $\INSERT(S_{\leq \ell}, b_{x'})$ since $b_{x'}=b_{S(\ell)}$ is the bin where only item $i$ is inserted. 

\begin{figure}[ht]
  \begin{subfigure}{0.3\textwidth}
    
\scalebox{0.55}{
 \begin{tikzpicture}[node distance=0.1cm]
 \draw (0,0) to (0,3);
 \draw (1,0) to (1,3);
 \draw (0,0) to (1,0);
 \draw (0,3) to (1,3);
 
 \node (bx) at (0.5,-0.5) {$b_x$};
 
 \node (s1) at (0.5,2.5) {$S_1$};
 \draw (0,1) to (1,1);
 \node (s2) at (0.5,0.5) {$S_2$};

 \node (i)  at (-2,2) {Insert $[S_1,S_2]$};

 \draw (2,0) to (2,3);
 \draw (3,0) to (3,3);
 \draw (2,0) to (3,0);
 \draw (2,3) to (3,3);

 \node (bx) at (2.5,-0.5) {$b_{x+1}$};
 
 \node (s2p) at (2.5,2.5) {$S_2$};
 \draw (2,1.5) to (3,1.5);
 \node (s3) at (2.5,0.5) {$S_3$};
 
\draw[->] (s2) to (s2p);
 \draw[->] (i.315) to (0.5,1.5);
\draw[->] (i.20) to[out=90] (s2p);

 \end{tikzpicture}}

    \caption{Case 1}
  \end{subfigure}
  \begin{subfigure}{0.3\textwidth}
\scalebox{0.55}{
 \begin{tikzpicture}[node distance=0.1cm]
 \draw (0,0) to (0,3);
 \draw (1,0) to (1,3);
 \draw (0,0) to (1,0);
 \draw (0,3) to (1,3);
 
 \node (bx) at (0.5,-0.5) {$b_{\hat{x}}$};
 
 \node (s1) at (0.5,2.5) {$S_1$};
 \draw (0,1) to (1,1);
 \node (s2) at (0.5,0.5) {$S_2$};

 \node (i)  at (-2,2) {Insert $[S_1]$};

\node at (2.5,0) {$\ldots$};

 \draw (4,0) to (4,3);
 \draw (5,0) to (5,3);
 \draw (4,0) to (5,0);
 \draw (4,3) to (5,3);

 \node (bx) at (4.5,-0.5) {$b_{x+1}$};
 
 \node (s2p) at (4.5,2.5) {$S_2$};
 \draw (4,1.5) to (5,1.5);
 \node (s3) at (4.5,0.5) {$S_3$};
 
\draw[->] (s2) to (s2p);
 \draw[->] (i.315) to (0.5,1.5);

 \end{tikzpicture}}
    
    \caption{Case 2a}
  \end{subfigure}
  \begin{subfigure}{0.3\textwidth}

\scalebox{0.55}{
 \begin{tikzpicture}[node distance=0.1cm]
 \draw (0,0) to (0,3);
 \draw (1,0) to (1,3);
 \draw (0,0) to (1,0);
 \draw (0,3) to (1,3);
 
 \node (bx) at (0.5,-0.5) {$b_{\hat{x}}$};
 
 \node (s1) at (0.5,2.5) {$S_1$};
 \draw (0,1) to (1,1);
 \node (s2) at (0.5,0.5) {$S_2$};

 \node (i)  at (-2,2) {Insert $[S_1, S_{2}]$};

 \draw (2,0) to (2,3);
 \draw (3,0) to (3,3);
 \draw (2,0) to (3,0);
 \draw (2,3) to (3,3);
 
\node (s2pp) at (2.5,2.2) {$S_{2}$};
  
 \node (bx) at (2.5,-0.5) {$b_{\hat{x}+1}$};

\node at (4.5,0) {$\ldots$};

 \draw (6,0) to (6,3);
 \draw (7,0) to (7,3);
 \draw (6,0) to (7,0);
 \draw (6,3) to (7,3);

 \node (bx) at (6.5,-0.5) {$b_{x+1}$};
 
 \node (s2p) at (6.5,2.5) {$S_2$};
 \draw (6,1.5) to (7,1.5);
 \node (s3) at (6.5,0.5) {$S_3$};
 
\draw[->] (s2) to (s2p);
 \draw[->] (i.315) to (0.5,1.5);
\draw[->] (i.20) to[out=90] (s2pp);

\draw[->] (2.5,0.5) to (s2p);
 \end{tikzpicture}}
    
    \caption{Case 2b}
  \end{subfigure}

\caption{All cases to consider in Lemma \ref{onlysmallitems}}
\end{figure}
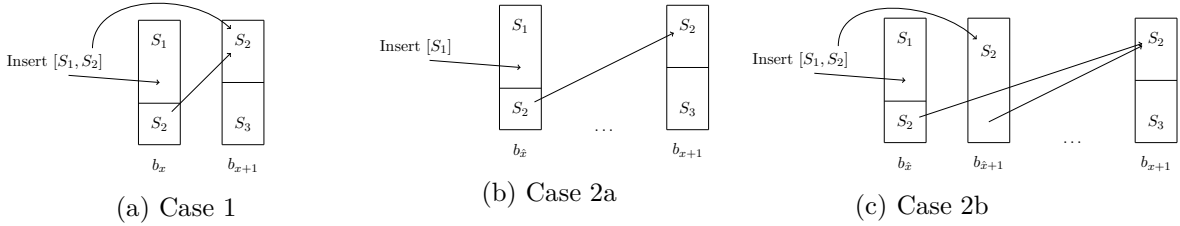

Case 1: $\INSERT(S_{\leq \ell + y}, b_x) > \PACK(b_x)$\\ 
In this case, the size of all items that have to be inserted into $b_{x+1}$ can be bounded by the size of items that did not fit into bin $b_x$ plus the size of items that were removed from bin $b_x$. We can bound $\INSERT(S_{\leq \ell+\bar{y}}, b_{x+1})$ where $\bar{y} > y$ is the largest index $S_{\ell+\bar{y}}$ appearing in bin $b_{x}$ by
\begin{align*}
 \INSERT(S_{\leq \ell + y}, b_x) + \frac{\epsilon}{2^{\ell+y}} \leq s(i) + 3 \sum_{j=1}^{y} \frac{\epsilon}{2^{\ell+j}} + 2 \frac{\epsilon}{2^{\ell+y +1}} < s(i) + 3 \sum_{j=1}^{y+1} \frac{\epsilon}{2^{\ell+j}}
\end{align*}

Case 2: $\INSERT(S_{\leq \ell+y}, b_x) = \PACK(b_x)$\\
Suppose that the algorithm tries to insert a set of items $I$ of size categories $S_{\leq \ell+\bar{y}}$  into the bin $b_{x+1} = b_{S(\ell+\bar{y})}$. The items $I$ can only be shifted from previous bins where items of size category $S_{\leq \ell+\bar{y}}$ appear. 
There are only two possibilities remaining. Either all items $I$ are shifted from a single bin $b_{\hat{x}}$ ($\hat{x} \leq x$) or from two consecutive bins $b_{\hat{x}},b_{\hat{x}+1}$ with $\INSERT(S_{\leq \ell+y}, b_{\hat{x}}) > \PACK(b_{\hat{x}})$. 

Note that $b_{x+1}$ can only receive items from more than one bin if there are two bins $b_{\hat{x}},b_{\hat{x}+1}$ with $\INSERT(S_{\leq \ell+y}, b_{\hat{x}}) > \PACK(b_{\hat{x}})$ such that $b_{x+1}=b_{S(\ell+\bar{y})}$ and all items shifted out of $b_{\hat{x}},b_{\hat{x}+1}$ and into $b_{x+1}$ are of size category $S_{\ell+\bar{y}}$. Hence bins left of $b_{\hat{x}}$ or right of $b_{\hat{x}+1}$ can not shift items into $b_{x+1}$.

Case 2a: All items $I$ are shifted from a single bin $b_{\hat{x}}$ with $\hat{x} \leq x$ (note that $\hat{x}<x$ is possible since $\PACK(b_x) = \INSERT(S_{\leq \ell+y}, b_x)$ can be zero). The total size of items that are shifted out of $b_{\hat{x}}$ can be bounded by $\INSERT(S_{\leq \ell+y},b_{\hat{x}})+\frac{\epsilon}{2^{\ell+y}}$. By induction hypothesis $\INSERT(S_{\leq \ell+y}, b_{\hat{x}})$ is bounded by $s(i)+3\sum_{j=1}^{y}\frac{\epsilon}{2^{\ell+j}}$. Since all items that are inserted into $b_{x+1}$ come from $b_{\hat{x}}$, the value $\INSERT(S_{\leq \ell+\bar{y}},b_{x+1})$ ($\bar{y}>y$) can be bounded by $\INSERT(S_{\leq \ell+y}, b_{\hat{x}}) + \frac{\epsilon}{2^{\ell+y}} \leq s(i)+3\sum_{j=1}^{y}\frac{\epsilon}{2^{\ell+j}}+\frac{\epsilon}{2^{\ell+y}}< s(i) + 3 \sum_{j=1}^{\bar{y}} \frac{\epsilon}{2^{\ell+j}}$ where $S_{\ell+\bar{y}}$ is the smallest size category inserted into $b_{x+1}$. Note that the items $I$ belong to only one size category $S_{\ell+\bar{y}}$ if $\hat{x}<x$ since all items that are in size intervals $S_{<\ell+\bar{y}}$ are inserted into bin $b_{\hat{x}+1}$.

Case 2b: Items $I$ are shifted from bins $b_{\hat{x}}$ and $b_{\hat{x}+1}$ ($\hat{x}+1 \leq x$) with $\INSERT(S_{\leq \ell+y}, b_{\hat{x}}) > \PACK(b_{\hat{x}})$. In this case, all items $I$ belong to the size category $S_{\ell+\bar{y}}$ since $b_{\hat{x}}$ is left of $b_x$. Hence all items which are inserted into $b_{\hat{x}+1}$ are from $I$, i.\,e., $\INSERT(S_{\leq \ell+y},b_{\hat{x}}) = \PACK(b_{\hat{x}}) + \PACK(b_{\hat{x}+1})$ as all items in $I$ belong to the same size category $S_{\ell+\bar{y}}$. We can bound $\INSERT(S_{\ell+\bar{y}},b_{x+1})$ by the size of items that are shifted out of $b_{\hat{x}}$ plus the size of items that are shifted out of $b_{\hat{x}+1}$. We obtain
\begin{align*}
&\INSERT(S_{\leq \ell+\bar{y}},b_{x+1})
\leq \PACK(b_{\hat{x}}) + \frac{\epsilon}{2^{\ell+ y}} + \PACK(b_{\hat{x}+1}) +\frac{\epsilon}{2^{\ell+ \bar{y}}} \\
&= \INSERT(S_{\leq \ell+y},b_{\hat{x}}))+\frac{\epsilon}{2^{\ell+ y}}+\frac{\epsilon}{2^{\ell+ \bar{y}}}\\
&\leq s(i) + 3 \sum_{j=1}^{y} \frac{\epsilon}{2^{\ell+j}}+\frac{\epsilon}{2^{\ell+ y}}+\frac{\epsilon}{2^{\ell+ \bar{y}}}\\
&\leq s(i) + 3\sum_{j=1}^{y} \frac{\epsilon}{2^{\ell+j}}+3\frac{\epsilon}{2^{\ell+ \bar{y}}} 
\leq s(i) + 3 \sum_{j=1}^{\bar{y}} \frac{\epsilon}{2^{\ell+j}}
\end{align*}

This yields that $\INSERT(S_{\leq \ell+y},b_x)$ is bounded by $s(i) + 3 \sum_{j=1}^{\bar{y}} \frac{\epsilon}{2^{\ell+j}}$ for all bins $b_x$ in $Q_k$. 
Now, we can bound the migration factor for every bin $b_x$ of $Q_k$ for any $y\in \mathbb{N}$ by $\PACK(b_x) + \frac{\epsilon}{2^{\ell+y}} \leq \INSERT(S_{\leq \ell+y},b_x) + \frac{\epsilon}{2^{\ell+y}}$. Using the above claim, we get:
\begin{align*}
    &\INSERT(S_{\leq \ell+y},b_x) + \frac{\epsilon}{2^{\ell+y}} 
    \leq s(i) + 3 \sum_{j=1}^{y} \frac{\epsilon}{2^{\ell+j}} + 2 \frac{\epsilon}{2^{\ell+y+1}}\\ 
    &< s(i) + 3 \sum_{j=1}^{\infty} \frac{\epsilon}{2^{\ell+j}} 
    = s(i) + 3 \frac{\epsilon}{2^\ell}\sum_{j=1}^{\infty} \frac{1}{2^{j}} 
    = s(i) + 3 \cdot \frac{\epsilon}{2^\ell} 
    \leq 7 s(i)
\end{align*}
    Since there are at most $\nicefrac{2}{\epsilon}$ bins per queue, we
    can bound the total migration of Insert$(\{ i \},b_{S(\ell)}, Q_k)$
    by $7 \cdot \nicefrac{2}{\epsilon} \in
    \mathcal{O}(\nicefrac{1}{\epsilon})$. Note also that $s(i) \leq
    \nicefrac{\epsilon}{14}$ for every $i$ implies that $\INSERT(S_{\leq
      \ell}, b_x)$ is bounded by $\nicefrac{\epsilon}{2}$ for all $x$ and $\ell$ .

    Suppose that items $i_1, \ldots , i_n$ of size interval $S_{\ell+y}$ have to be removed from bin $b_x$. In order to fill the emerging free space, items from the same size category are moved out of $b_{S(\ell)}$ into the free space. As the bin $b_x$ may already have additional free space, we need to move at most a size of $\SIZE(i_1,\ldots,i_n)+\nicefrac{\epsilon}{2^{\ell+y}}$. Using a symmetric proof as above yields a migration factor of $\mathcal{O}(\frac{1}{\epsilon})$.
\end{proof}

\subsection{Handling small items in the general setting}
\label{sec:general}
In the scenario that there are mixed item types (small and large items), we need to be more careful in the creation and the deletion of buffer bins. To maintain the approximation guarantee, we have to make sure that as long as there are bins containing only small items, the remaining free space of all bins can be bounded. Packing small items into empty bins and leaving bins with large items untouched does not lead to a good approximation guarantee as the free space of the bins containing only large items is not used. In this section we consider the case where a sequence of small items is inserted or deleted. We assume that the packing of large items does not change. Therefore the number of bins containing large items equals a fixed constant $\Lambda(B)$. In the previous section, the bins $b_1,\ldots, b_{m(B)}$ all had a capacity of $1$. In order to handle a mixed setting, we will treat a bin $b_i$ containing large items as having capacity of $c(b_i) = 1-S$, where $S$ is the total size of the large items in $b_i$. The bins containing small items are enumerated by $b_1, \ldots ,b_{L(B)}, b_{L(B)+1}, \ldots , b_{m(B)}$ for some $L(B)\leq m(B)$ where $c(b_1),\ldots,c(b_{L(B)})< 1$ and $c(b_{L(B)+1})=\ldots = c(b_{m(B)})=1$. Additionally we have a separate set of bins, called the \emph{heap bins}, which contain only large items. This set of bins is enumerated by $h_1, \ldots h_{h(B)}$. Note that $L(B)+h(B)=\Lambda(B)$. In general we may consider only bins $b_i$ and $h_i$ with capacity $c(b_i) \geq \nicefrac{\epsilon}{14}$ and $c(h_i) \geq \nicefrac{\epsilon}{14}$ since bins with less capacity are already packed well enough for our approximation guarantee as shown by Lemma \ref{onlysmallitems}. Therefore, full bins are not considered in the following.

\tikzset{   brace/.style={
     decoration={brace, mirror},
     decorate
   },
      position label/.style={
       below = 3pt,
       text height = 1.5ex,
       text depth = 1ex
    }
}

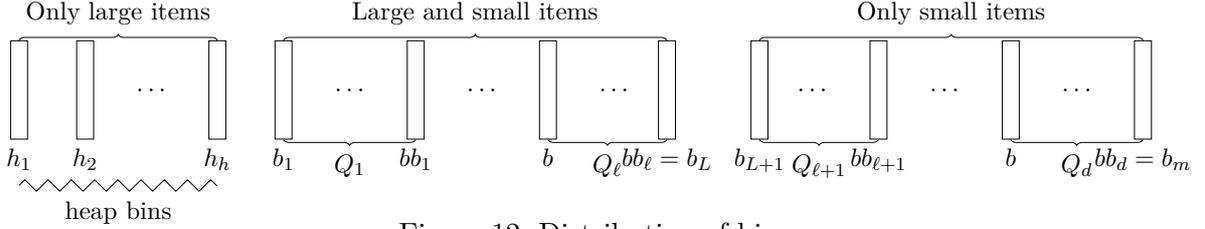
\begin{figure}[ht]
\centering
\resizebox{\textwidth}{!}{
\begin{tikzpicture}
\bin{h1}{0}{$h_1$}
\bin{h2}{1cm}{$h_2$}
\node[xshift=2cm] {$\ldots$};
\bin{h3}{3cm}{$h_{h}$};
\draw [decorate, decoration={brace}] (h1.north) -- (h3.north) node [above=3pt, pos=0.5] {Only large items};
\draw [decorate, decoration={zigzag},transform canvas={yshift=-0.7cm}] (h1.south) -- (h3.south) node [below=3pt, pos=0.5] {heap bins};

\bin{1}{4cm}{$b_1$}
\node[xshift=5cm] (ld1) {$\ldots$};
\bin{2}{6cm}{$bb_1$}
\draw [brace] (1.south) -- (2.south) node [position label, pos=0.5] {$Q_1$};
\node[xshift=7cm] (ld3) {$\ldots$};
\bin{3}{8cm}{$b$}
\node[xshift=9cm] (ld2) {$\ldots$};
\bin{4}{9.8cm}{$bb_{\ell}=b_L$}
\draw [brace] (3.south) -- (4.south) node [position label, pos=0.5] {$Q_{\ell}$};
\draw [decorate, decoration={brace}] (1.north) -- (4.north) node [above=3pt, pos=0.5] {Large and small items};

\bin{s1}{11.2cm}{$b_{L+1}$}
\node[xshift=12cm] (ld1) {$\ldots$};
\bin{s2}{13cm}{$bb_{\ell+1}$}
\draw [brace] (s1.south) -- (s2.south) node [position label, pos=0.5] {$Q_{\ell+1}$};
\node[xshift=14cm] (ld3) {$\ldots$};
\bin{s3}{15cm}{$b$}
\node[xshift=16cm] (ld2) {$\ldots$};
\bin{s4}{17cm}{$bb_{d}=b_m$}
\draw [brace] (s3.south) -- (s4.south) node [position label, pos=0.5] {$Q_d$};
\draw [decorate, decoration={brace}] (s1.north) -- (s4.north) node [above=3pt, pos=0.5] {Only small items};

\end{tikzpicture}}

\caption{Distribution of bins}
\label{figmixedsetting}

\end{figure}

As before, we partition the bins $b_1,\ldots,b_{L(B)}, b_{L(B)+1}, \ldots ,b_{m(B)}$ into several different queues $Q_1, \ldots , Q_{\ell(B)} , Q_{\ell(B) +1}, \ldots ,Q_{d(B)}$ such that $b_1, \ldots b_{L(B)} = Q_1, \ldots Q_{\ell(B)}$ and $b_{L(B)+1}, \ldots b_{m(B)} = Q_{\ell(B)+1}, \ldots , Q_{d(B)}$. If the corresponding packing $B$ is clear from the context, we will simply write $h,L,\ell,d,m,\Lambda$ instead of $h(B),L(B),\ell(B),d(B),m(B),\Lambda(B)$.
We denote the last bin of queue $Q_i$ by $bb_i$ which is a buffer bin. The buffer bin $bb_{\ell}$ is special and will be treated differently in the insert and delete operation.
Note that the bins containing large items $b_1,\ldots,b_{L(B)}$ are enumerated first. This guarantees that the free space in the bins containing large items is used before new empty bins are opened to pack the small items. However, enumerating bins containing large items first, leads to a problem if according to Algorithm \ref{alg-bb} when a buffer bin is being filled and a new bin has to be inserted right to the filled bin. Instead of inserting a new empty bin, we insert a heap bin at this position. Since the heap bin contains only large items, we do not violate the order of the small items (see Figure 
\ref{figmixedsetting}). As the inserted heap bin has remaining free space (is not filled completely) for small items, it can be used as a buffer bin.
In order to get an idea of how many heap bins we have to reserve for Algorithm \ref{alg-bb} where new bins are inserted or deleted, we define a potential function. As a buffer bin is being filled or emptied completely the Algorithm \ref{alg-bb} is executed and inserts or deletes buffer bins. The potential function $\Phi(B)$ thus bounds the number of buffer bins in $Q_1, \ldots , Q_{\ell(B)}$ that are about to get filled or emptied.
The potential $\Phi(B)$ is defined by 
\begin{align*}
\Phi(B) =  \sum_{i=1}^{\ell-1} r_i + \lceil \epsilon \Lambda \rceil - \ell
\end{align*}
where the \emph{fill ratio} $r_i$ is defined by $r_i=\frac{s(bb_i)}{c(bb_i)}$ and $s(bb_i)$ is the total size of all small items in $bb_i$ . Note that the potential only depends on the queues $Q_1, \ldots , Q_{\ell(B)}$ and the bins which contain small and large items. The term $r_i$ intends to measure the number of buffer bins that become full. According to Case 1 of the previous section a new buffer bin is opened when $bb_i$ is filled i.\,e., $r_i \approx 1$. Hence the sum $\sum_{i=1}^{\ell-1} r_i$ bounds the number of buffer bins getting filled. The term $\epsilon \Lambda$ in the potential measures the number of bins that need to be inserted due to the length of a queue exceeding $\nicefrac{2}{\epsilon}$, as we need to split the queue $Q_i$ into two queues of length $\nicefrac{1}{\epsilon}$ according to Case 1. Each of those queues needs a buffer bin, hence we need to insert a new buffer bin out of the heap bins. Therefore the potential $\Phi(B)$ bounds the number of bins which will be inserted as new buffer bins according to Case 1.

Just like in the previous section we propose the following properties to bound the approximation ratio and the migration factor. The first three properties remain the same as in Section \ref{sec:smallitems} and the last property gives the desired connection between the potential function and the heap bins. 
\begin{enumerate}
\item[(1)] For every item $i\in b_d$ with size $s(i)\in S_j $ for some $j,d \in \mathbb{N}$, there is no item $i' \in b_{d'}$ with size $s(i') \in s_{j'}$ such that $d'>d$ and $j' > j$. This means: Items are ordered from left to right by their size intervals.
\item[(2)] Every normal bin of $b_1,\ldots,b_m$ is filled completely
\item[(3)] The length of each queue is at least $\nicefrac{1}{\epsilon}$ and at most $\nicefrac{2}{\epsilon}$ except for $Q_{\ell}$ and $Q_{d}$. The length of $Q_{\ell}$ and $Q_{d}$ is only limited by $1\leq |Q_{\ell}|,|Q_d|\leq \nicefrac{1}{\epsilon}$. Furthermore, $|Q_{\ell+1}| = 1$ and $1 \leq |Q_{\ell+2}| \leq \nicefrac{2}{\epsilon}$.
\item[(4)] The number of heap bins $H_1, \ldots , H_{h}$ is exactly $h = \lfloor \Phi(B) \rfloor$
\end{enumerate}
Since bins containing large items are enumerated first, property $(1)$ implies in this setting that bins with large items are filled before bins that contain no large items. Note also that property (3) implies that $\Phi(B) \geq 0$ for arbitrary packings $B$ since $\epsilon\Lambda \geq \ell-1+\epsilon$ and thus $\lceil \epsilon \Lambda \rceil \geq \ell$. The following lemma proves that a packing which fulfills properties $(1)$ to $(4)$ provides a solution that is close to the optimum.
\begin{lemma}
\label{lem-smallitems-approximation}
    Let $M = m + h$ be the number of used bins and $\epsilon \leq \nicefrac{1}{4}$.
    If properties $(1)$ to $(4)$ hold, then at most $\max \{ \Lambda , (1+\mathcal{O}(\epsilon))\OPT(I,s)+\mathcal{O}(1) \}$ bins are used in the packing.
\end{lemma}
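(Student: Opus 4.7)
The plan is to split on whether any used bin contains only small items, i.e., on whether $m > L$. If $m = L$, then every one of the $m+h$ used bins contains at least one large item, so $M = m+h = L+h = \Lambda$, which matches the first term of the maximum. From here on I assume $m > L$, and the goal becomes showing $M \leq (1+\mathcal{O}(\epsilon))\OPT(I,s) + \mathcal{O}(1)$.

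First I would bound the heap contribution using property $(4)$. Since every fill ratio satisfies $r_i \leq 1$, the potential function is bounded by $\Phi(B) \leq (\ell-1) + \lceil \epsilon\Lambda\rceil - \ell \leq \epsilon\Lambda$. Combining $h \leq \Phi(B) \leq \epsilon\Lambda = \epsilon(L+h)$ with $L \leq m$ gives $h \leq \epsilon m/(1-\epsilon)$ and hence $M = m+h \leq m/(1-\epsilon)$. The remaining task is to bound $m$ in terms of $\OPT(I,s)$ by the same style of volume argument used in Lemma \ref{lem:small_approx}.

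Next I would count buffer bins: by property $(3)$, at most four queues can have length below $\nicefrac{1}{\epsilon}$, namely $Q_\ell$, $Q_{\ell+1}$, $Q_{\ell+2}$, and $Q_d$, while every other queue contains at least $\nicefrac{1}{\epsilon}$ bins. Therefore $m \geq (d-4)/\epsilon + 4$, i.e., $d \leq \epsilon m + 4$. Since each queue owns exactly one buffer bin, at least $(1-\epsilon)m - 4$ of the bins $b_1,\ldots,b_m$ are normal. By property $(2)$ each normal bin has remaining free space strictly less than $\epsilon/2^j \leq \epsilon/2$ (because the biggest appearing size category $S_j$ has $j \geq 1$), so the items packed into it have combined size greater than $1 - \epsilon/2$. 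Summing over normal bins gives $\SIZE(I,s) \geq ((1-\epsilon)m - 4)(1-\epsilon/2)$, and since $\OPT(I,s) \geq \SIZE(I,s)$ this yields $m \leq (\OPT(I,s) + 4)/[(1-\epsilon)(1-\epsilon/2)]$.

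Multiplying the two bounds, $M \leq (\OPT(I,s)+4)/[(1-\epsilon)^2(1-\epsilon/2)]$, and for $\epsilon \leq \nicefrac{1}{4}$ the prefactor equals $1 + \mathcal{O}(\epsilon)$, producing the desired $(1+\mathcal{O}(\epsilon))\OPT(I,s) + \mathcal{O}(1)$ estimate. The only genuinely new ingredient compared to Lemma \ref{lem:small_approx} is extracting the right bound on $h$ from property $(4)$; once the identity $\Phi(B) \leq \epsilon\Lambda$ is in hand, the rest is careful book-keeping that has to accommodate the short queues prescribed in property $(3)$ as well as the heap bins lying outside the queue structure.
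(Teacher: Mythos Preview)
Your proposal is correct and follows essentially the same argument as the paper: case split on whether $L=m$, bound $h$ via $\Phi(B)\leq \epsilon\Lambda$ from property~(4), bound the number of buffer bins by $d\leq \epsilon m+4$ from property~(3), and then run a volume argument on the normal bins using property~(2). The only cosmetic differences are that the paper uses $\Lambda<M$ directly (rather than your detour $h\leq \epsilon m/(1-\epsilon)$) to count normal bins inside $M$ in one step, and it takes the tighter free-space bound $\epsilon/14$ instead of your $\epsilon/2$; neither affects the $(1+\mathcal{O}(\epsilon))\OPT+\mathcal{O}(1)$ conclusion.
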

\begin{proof}
 Case 1: There is no bin containing only small items, i.\,e., $L=m$. Hence all items are packed into $M=L+h = \Lambda$ bins.
 
 Case 2: There are bins containing only small items, i.\,e., $L<m$. Property (3) implies that the number of queues $d$ is bounded by $d\leq \epsilon m+4$. Hence the number of buffer bins is bounded by $\epsilon m+4$ and the number of heap bins $\Phi(B)$ (property (4)) is bounded by $\Phi(B) = \sum_{i=1}^{\ell-1} r_i + \lceil \epsilon \Lambda \rceil - \ell \leq \ell -1 + \epsilon \Lambda +1 - \ell = \epsilon \Lambda$ as $r_i\leq 1$. Since $\Lambda < M$, we can bound $\Phi (B)$ by $\Phi(B) < \epsilon M$. 
 The number of normal bins is thus at least $M - (\epsilon m + 5) - (\epsilon M - 1) \geq M - 2 \epsilon M - 4 = (1-2\epsilon)M- 4$. By property (2) every normal bin has less than $\nicefrac{\epsilon}{14}$ free space and the total size $S$ of all items is thus at least $S \geq (1-\nicefrac{\epsilon}{14})(1-2\epsilon)M-4$. Since $\OPT(I,s)\geq S$, we have $\OPT(I,s)\geq (1-\nicefrac{\epsilon}{14}(1-2\epsilon)M-4$. 
 A simple calculation shows that $\frac{1}{(1-\nicefrac{\epsilon}{14})(1-2\epsilon)}\leq (1+5\epsilon)$ for $\epsilon\leq \nicefrac{1}{4}$.
 Therefore we can bound the number of used bins by $(1+5\epsilon)\OPT(I,s)+4$. 
\end{proof}

According to property (4) we have to guarantee, that if the rounded potential $\lfloor \Phi(B) \rfloor$ changes, the number of heap bins has to be adjusted accordingly. The potential $\lfloor \Phi(B) \rfloor$ might increases by $1$ due to an insert operation. Therefore the number of heap bins has to be incremented. If the potential $\lfloor \Phi(B) \rfloor$ decreases due to a delete operation, the number of heap bins has to be decremented. In order to maintain property $(4)$ we have to make sure, that the number of heap bins can be adjusted whenever $\lfloor \Phi(B) \rfloor$ changes. Therefore we define the fractional part $\{ \Phi(B) \}=\Phi(B)-\lfloor\Phi(B)\rfloor$ of $\Phi(B)$ and put it in relation to the fill ratio $r_{\ell}$ of $bb_{\ell}$ (the last bin containing large items) through the following equation:
\begin{align*}
\tag{Heap Equation}
    | (1 - r_{\ell}) - \{ \Phi(B) \} | \leq \frac{s}{c(bb_\ell)}
\end{align*}
where $s$ is the biggest size of a small item appearing in $bb_{\ell}$. The Heap Equation ensures that the potential $\Phi(B)$ is correlated to $1-r_{\ell}$. The values may only differ by the small term $\frac{s}{c(bb_{\ell})}$. Note that the Heap Equation can always be fulfilled by shifting items from $bb_{\ell}$ to queue $Q_{\ell+1}$ or vice versa.

Assuming the Heap Equation holds and the potential $\lfloor \Phi(B)\rfloor$ increases by $1$, we can guarantee that buffer bin $bb_{\ell}$ is nearly empty. Hence the remaining items can be shifted to $Q_{\ell+1}$ and $bb_\ell$ can be moved to the heap bins. The bin left of $bb_\ell$ becomes the new buffer bin of $Q_\ell$. Vice versa, if $\lfloor \Phi(B)\rfloor$ decreases, we know by the Heap Equation that $bb_{\ell}$ is nearly full, hence we can label $bb_{\ell}$ as a normal bin and open a new buffer bin from the heap at the end of queue $Q_{\ell}$.
Our goal is to ensure that the Heap Equation is fulfilled at every step of the algorithm along with properties $(1)$ to $(4)$. Therefore we enhance the delete and insert operations from the previous section. 
Whenever a small item $i$ is inserted or removed, we will perform the operations described in Algorithm \ref{alg-insertsmall} (which can be applied to bins of different capacities) in the previous section. This will maintain properties $(1)$ to $(3)$. If items are inserted or deleted from queue $Q_{\ell}$ (the last queue containing large and small items) the recursion does not halt at $bb_{\ell}$. Instead the recursion goes further and halts at $bb_{\ell+1}$. So, when items are inserted into bin $bb_\ell$ according to Algorithm \ref{alg-insertsmall} the bin $bb_\ell$ is treated as a normal bin. Items are shifted from $bb_{\ell}$ to queue $Q_{\ell+1}$ until the Heap Equation is fulfilled. This way we can make sure that the Heap Equation maintains fulfilled whenever an item is inserted or removed from $Q_{\ell}$. 
 
\begin{algo}[Insert or Delete small items for the mixed setting]
\label{algo-small-items-general}
\ 

{\bf Insert$(i,b_x,Q_j)$:} 
\begin{compactitem}
\item Use Algorithm \ref{alg-insertsmall} to insert item $i$ into $Q_j$ with $j < \ell$.
\item Let $i_1, \ldots , i_m$ be the items that are inserted at the last step of Algorithm \ref{alg-insertsmall} into $bb_j$.
\item For $k = 1, \ldots , m$ do
    \begin{compactenum}
    \item Insert item $i_k$ into bin $bb_j$.
    \item If $bb_j$ is completely filled use Algorithm \ref{alg-bb}.
    \item If the potential $\lfloor \Phi(B) \rfloor$ increases use Algorithm \ref{alg-potential} (see below) to adjust the number of heap bins (property (4)).
    \item Decrease the fill ratio $r_\ell$ of $bb_\ell$ by shifting the smallest items in $bb_\ell$ to $Q_{\ell+1}$ until $(1 - r_{\ell}) \leq \{ \Phi(B) \}$ to fulfill the Heap Equation.
    \end{compactenum}
\end{compactitem}

%In addition to the actions from the Insert of the previous section, we decrease the fill ratio $r_{\ell}$ of $bb_{\ell}$. Let $B'$ be the instance after inserting item $i$. Remove the smallest item from $bb_{\ell}$ until $(1 - r_{\ell}) \leq \{ \Phi(B') \}$.

{\bf Delete$(i,b_x,Q_j)$:} 
\begin{compactitem}
\item Use Algorithm \ref{alg-insertsmall} to remove item $i$ from bin $b_x$ in queue $Q_j$ with $j < \ell$.
\item Let $i_1, \ldots , i_m$ be the items that are removed at the last step of Algorithm \ref{alg-insertsmall} from $bb_j$.
\item For $k = 1, \ldots , m$ do
    \begin{compactenum}
    \item If $bb_j$ is empty use Algorithm \ref{alg-bb}.
    \item Remove item $i_k$ from bin $bb_j$.
    \item If the potential $\lfloor \Phi(B) \rfloor$ decreases use Algorithm \ref{alg-potential}.
    \item Increase the fill ratio $r_\ell$ of $bb_\ell$ by shifting the smallest items in $bb_\ell$ to $Q_{\ell+1}$ until $(1 - r_{\ell}) \geq \{ \Phi(B) \}$ to fulfill the Heap Equation.
    \end{compactenum}
\end{compactitem}
\end{algo}
For the correctness of step 4 (the adjustment to $r_{\ell}$) note the following:
In case of the insert operation, the potential $\Phi(B)$ increases and we have $\Phi(B) \geq 1-r_\ell$. As items are being shifted from $bb_\ell$ to $Q_{\ell+1}$, the first time that $(1 - r_{\ell}) \leq \{ \Phi(B) \}$ is fulfilled, the Heap Equation is also fulfilled. Since the fill ratio of $bb_\ell$ changes at most by $\frac{s}{c(bb_\ell)}$ as an item (which has size at most $s$) is shifted to $Q_{\ell+1}$ we know that $| (1 - r_{\ell}) - \{ \Phi(B) \} | \leq \frac{s}{c(bb_\ell)}$. Correctness of step 4 in the delete operation follows symmetrically.

The potential $\Phi(B)$ changes if items are inserted or deleted into queues $Q_1, \ldots , Q_{\ell-1}$.
Due to these insert or delete operations it might happen that the potential $\lfloor \Phi(B) \rfloor$ increases or that a buffer bin is being filled or emptied. The following operation is applied as soon as an item is inserted or deleted into a buffer bin and the potential $\lfloor \Phi(B) \rfloor$ increases or decreases.

\begin{algo}[Change in the potential]
\label{alg-potential}
\ 
\begin{compactitem}
\item {\bf Case 1: The potential $\lfloor \Phi(B) \rfloor$ increases by $1$.} 
\begin{compactitem}
\item According to the Heap Equation the remaining size of small items in $bb_{\ell}$ can be bounded. Shift all small items from $bb_{\ell}$ to $Q_{\ell+1}$. 
\item If $|Q_{\ell}| >1$ then label the now empty buffer bin $bb_{\ell}$ as a heap bin and the last bin in $Q_{\ell}$ is labeled as a buffer bin.
\item If $Q_{\ell}$ only consists of the buffer bin (i.\,e., $|Q_{\ell}| = 1$)
shift items from $bb_{\ell-1}$ to $Q_{\ell+1}$ until the heap equation is fulfilled.
If $bb_{\ell-1}$ becomes empty remove $bb_{\ell-1}$ and $bb_{\ell}$. The bin left to $bb_{\ell-1}$ becomes the new buffer bin of $Q_{\ell-1}$. The queue $Q_{\ell}$ is deleted and $Q_{\ell-1}$ becomes the new last queue containing large items.
\end{compactitem}
\item {\bf Case 2: The potential $\lfloor \Phi(B) \rfloor$ decreases by $1$.}
\begin{compactitem}
\item According to the Heap Equation the remaining free space in $bb_{\ell}$ can be bounded. Shift items from $bb_{\ell+1}$ to $bb_{\ell}$ such that the buffer bin $bb_{\ell}$ is filled completely. 
\item Add the new buffer bin from the heap to $Q_{\ell}$. 
\item If $|Q_{\ell}| = \nicefrac{1}{\epsilon}$ label an additional heap bin as a buffer bin to create a new queue $Q_{\ell+1}$ with $|Q_{\ell+1}| = 1$.
\end{compactitem}
\end{compactitem}
\end{algo}

Like in the last section we also have to describe how to handle buffer bins that are being emptied or filled completely. We apply the same algorithm when a buffer bin is being emptied or filled but have to distinguish now between buffer bins of $Q_1, \ldots , Q_{\ell}$ and buffer bins of $Q_{\ell+1}, \ldots, Q_{d}$. 
Since the buffer bins in $Q_{\ell+1},\ldots,Q_{d}$ all have capacity $1$, we will use the same technique as in the last section. If a buffer bin in $Q_{1},\ldots,Q_{\ell}$ is emptied or filled we will also use similar technique. But instead of inserting a new empty bin as a new buffer bin, we take an existing bin out of the heap. And if a buffer bin from $Q_1, \ldots Q_{\ell}$ is being emptied (it still contains large items), it is put into the heap. This way we make sure that there are always sufficiently many bins containing large items which are filled completely.

\begin{lemma} \label{lem-algorithm}
    Let $B$ be an packing which fulfills the properties $(1)$ to $(4)$ and the Heap Equation. Applying Algorithm \ref{alg-potential} or Algorithm \ref{alg-bb} on $B$ during an insert/delete operation yields an packing $B'$ which also fulfills properties $(1)$ to $(4)$. The migration to fulfill the Heap Equation is bounded by $\mathcal{O}(\nicefrac{1}{\epsilon})$.
\end{lemma}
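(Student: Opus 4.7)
The plan is to verify each of the four properties in turn, case by case over the three possible algorithms applied (Algorithm~\ref{alg-bb} on queues without large items, Algorithm~\ref{alg-bb} on queues with large items, and Algorithm~\ref{alg-potential} in its two cases), and finally to bound the migration of the Heap-Equation-adjustment step.

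First, I would handle properties (1)--(3), which are largely inherited from the single-capacity analysis in Section~\ref{sec:smallitems}. Property (1) is immediate because every shift in Algorithm~\ref{alg-potential} and Algorithm~\ref{alg-bb} only moves small items between $bb_\ell$ and $Q_{\ell+1}$ (or between adjacent bins inside a queue) in the same left-to-right direction as the size ordering, and heap bins contain no small items, so reinserting them at position $bb_\ell$ never violates the ordering. Property (2) follows because each time a normal bin is created -- by relabeling a filled buffer bin or by using Case~2 of Algorithm~\ref{alg-potential} to pour items from $bb_{\ell+1}$ into $bb_\ell$ up to completeness -- the bin involved has been filled up to less than $\nicefrac{\epsilon}{2^j}$ free space, where $S_j$ is its smallest size category. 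Property (3) follows by inspection of the queue-length modifications: Algorithm~\ref{alg-bb} splits or merges only when the bounds $\nicefrac{1}{\epsilon}$ and $\nicefrac{2}{\epsilon}$ are threatened, and Algorithm~\ref{alg-potential} explicitly maintains $|Q_{\ell+1}|=1$ and $1 \le |Q_\ell| \le \nicefrac{1}{\epsilon}$ by the bin relabeling and the handling of the special case $|Q_\ell|=1$.

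The main obstacle is property~(4): we must verify that whenever $\lfloor \Phi(B) \rfloor$ changes by $\pm 1$, the number of heap bins changes by exactly $\pm 1$, and that the Heap Equation is re-established so that this correspondence can be maintained in the next step. For Algorithm~\ref{alg-potential} Case~1 (the potential increases), the Heap Equation prior to the increase guarantees $1-r_\ell \leq \{\Phi(B)\} + \nicefrac{s}{c(bb_\ell)}$, so the small items still sitting in $bb_\ell$ have total size at most a constant times the largest small-item size. After shifting these to $Q_{\ell+1}$, the bin $bb_\ell$ is empty of small items, hence relabeling it as a heap bin increases $h$ by one, matching the increase of $\lfloor \Phi \rfloor$. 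The two subcases ($|Q_\ell|>1$ versus $|Q_\ell|=1$) are handled separately: in the latter the removal of both $bb_{\ell-1}$ and $bb_\ell$ is compensated by the index shift $\ell \mapsto \ell-1$ inside $\Phi$. Case~2 (potential decreases) is symmetric: the Heap Equation ensures $bb_\ell$ has only $\mathcal{O}(s)$ free space, so items from $bb_{\ell+1}$ can fill it completely, and the new buffer bin taken from the heap decreases $h$ by one in step with $\lfloor \Phi \rfloor$. For Algorithm~\ref{alg-bb} applied to queues $Q_{\ell+1},\ldots,Q_d$, the potential is unaffected and no heap bin is consumed, so (4) is trivially preserved; for queues $Q_1,\ldots,Q_{\ell-1}$ a buffer bin filling or emptying is exactly the event that flips one $r_i$ from near $1$ to $0$ (or conversely), which is already accounted for by the algorithm's handoff with Algorithm~\ref{alg-potential}.

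Finally, for the migration bound on the Heap-Equation adjustment, I would argue as follows. In the loop step~4 of Algorithm~\ref{algo-small-items-general}, the quantity $|(1-r_\ell)-\{\Phi(B)\}|$ changes by at most $\nicefrac{s}{c(bb_\ell)}$ per small item shifted out of $bb_\ell$, where $s \le \nicefrac{\epsilon}{14}$. Therefore the equation is re-established after shifting a single small item (of size $\leq \nicefrac{\epsilon}{14}$) from $bb_\ell$ into $Q_{\ell+1}$. This single insertion into $Q_{\ell+1}$ is handled by the recursive Insert of Algorithm~\ref{alg-insertsmall}, whose total migration is $\mathcal{O}(\nicefrac{1}{\epsilon})$ by Lemma~\ref{onlysmallitems}~(ii). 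Combined with the $\mathcal{O}(\nicefrac{1}{\epsilon})$ migration needed in Case~1 (shifting the $\mathcal{O}(s)$-worth of items out of $bb_\ell$) and Case~2 (filling $bb_\ell$ from $bb_{\ell+1}$), each of which again reduces to a single invocation of Algorithm~\ref{alg-insertsmall}, the total migration to re-establish the Heap Equation is $\mathcal{O}(\nicefrac{1}{\epsilon})$ as claimed.
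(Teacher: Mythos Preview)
Your overall plan mirrors the paper's: verify properties (1)--(3) quickly, then do a case analysis for property~(4) over Algorithm~\ref{alg-potential} and Algorithm~\ref{alg-bb}, and finally bound the migration. However, two of your steps contain genuine gaps.

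\textbf{Property~(4) for Algorithm~\ref{alg-bb} on $Q_1,\ldots,Q_{\ell-1}$.} You write that a buffer bin filling or emptying ``flips one $r_i$ from near $1$ to $0$ (or conversely), which is already accounted for by the algorithm's handoff with Algorithm~\ref{alg-potential}.'' This is not sufficient. When $bb_j$ is filled and a heap bin is consumed as the new buffer, the potential drops by $r_j$, but $r_j$ is only $\approx 1$, not exactly $1$. Depending on the fractional part $\{\Phi(B)\}$, the floor $\lfloor\Phi\rfloor$ may decrease by $1$ (in which case $h$ and $\lfloor\Phi\rfloor$ stay matched) or may stay the same (in which case $h$ has dropped but $\lfloor\Phi\rfloor$ has not). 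The paper handles this second subcase explicitly: it shows that in that situation $\{\Phi(B)\}\ge r_j\approx 1$, so the Heap Equation forces $r_\ell\approx 0$; the subsequent adjustment then empties $bb_\ell$ and returns it to the heap, restoring $h=\lfloor\Phi\rfloor$. The symmetric subcase occurs when $bb_j$ is emptied. Without this two-subcase analysis your verification of property~(4) is incomplete.

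\textbf{The migration bound.} Your argument ``the quantity $|(1-r_\ell)-\{\Phi(B)\}|$ changes by at most $s/c(bb_\ell)$ per small item shifted $\ldots$ therefore the equation is re-established after shifting a single small item'' is a non-sequitur: the per-step change says nothing about the \emph{initial} gap you have to close. What you actually need is a bound on $|(1-r'_\ell)-\{\Phi(B')\}|$ right after Algorithm~\ref{alg-potential} or Algorithm~\ref{alg-bb} has run. The paper obtains this bound case by case, showing in each case that both $1-r'_\ell$ and $\{\Phi(B')\}$ are within $\mathcal{O}(s(i^*)/\epsilon)$ of $0$ (or of $1$), where $i^*$ is the single item whose insertion into $bb_j$ triggered the event; this uses that $r^*=s(i^*)/c(bb_j)=\mathcal{O}(s(i^*)/\epsilon)$ and that the new $bb'_\ell$ was a completely filled normal bin. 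Only then does it follow that a total size of $\mathcal{O}(s(i^*)/\epsilon)$ must be shifted between $bb_\ell$ and $Q_{\ell+1}$, giving the $\mathcal{O}(1/\epsilon)$ migration factor. Your ``$\mathcal{O}(s)$-worth of items'' remark for Case~1 of Algorithm~\ref{alg-potential} also understates the quantity: the residual small-item size in $bb_\ell$ is bounded via the Heap Equation by $c(bb_\ell)\cdot\{\Phi(B)\}+s$, which is $\mathcal{O}(s(i^*)/\epsilon)$, not $\mathcal{O}(s)$.
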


\begin{proof}
{\bf Analysis of Algorithm \ref{alg-potential}}\\
Properties $(1)$ and $(2)$ are never violated by the algorithm because the items are only moved by shift operations. Property $(3)$ is never violated because no queue (except for $Q_{\ell}$) exceeds $\nicefrac{2}{\epsilon}$ or falls below $\nicefrac{1}{\epsilon}$ by construction.
Algorithm \ref{alg-potential} is called during an insert or delete operation. The Algorithm is executed as items are shifted into or out of buffer $bb_j$ such that $\lfloor \Phi(B)\rfloor$ changes.

In the following we prove property $(4)$ for the packing $B'$ assuming that $\lfloor \Phi(B) \rfloor = h(B)$  holds by induction. Furthermore we give a bound for the migration to fulfill the heap equation:
\begin{itemize}
\item Case 1: The potential $\lfloor \Phi(B) \rfloor$ increases during an insert operation, i.\,e., it holds $\lfloor \Phi(B') \rfloor = \lfloor \Phi(B) \rfloor +1$. Let item $i^*$ be the first item that is shifted into a bin $bb_j$ such that $\lfloor \Phi(B) + r^* \rfloor = \lfloor \Phi(B') \rfloor$, where $r^*$ is the fill ratio being added to $bb_j$ by item $i^*$. In this situation, the fractional part changes from $\{\Phi(B)\} \approx 1$ to $\{\Phi(B')\} \approx 0$. 

\begin{itemize}
\item In the case that $|Q_{\ell}|> 1$, the buffer bin $bb_{\ell}$ is being emptied and moved to the heap bins. The bin left of $bb_{\ell}$ becomes the new buffer bin $bb'_\ell$ of $Q_{\ell}$. Hence the number of heap bins increases and we have $h(B') = h(B) +1 = \lfloor \Phi(B) \rfloor +1 =  \lfloor \Phi(B') \rfloor$, which implies property (4).

To give a bound on the total size of items needed to be shifted out of (or into) bin $bb_\ell$ to fulfill the heap equation, we bound the term $|(1 - r'_{\ell}) - \{ \Phi(B') \}|$ by some term $C\leq \mathcal{O}(\nicefrac{s(i)}{\epsilon})$, where $r'_\ell$ is the fill ratio of $bb'_\ell$ and $s(i)$ is the size of the arriving or departing item. If the term  $|(1 - r'_{\ell}) - \{ \Phi(B') \}|$ can be bounded by $C$, the fill ratio of $bb'_\ell$ has to be adjusted to fulfill the heap equation according to the insert and delete operation. This can be done be shifting a total size of at most $C$ items out of (or into) $bb'_\ell$.

The bin $bb'_\ell$ is completely filled by property (3) and therefore has a fill ratio of $r'_\ell \geq \frac{c(bb_\ell) - s}{c(bb_\ell)} \geq 1- 2 \frac{s}{\epsilon}$, where $s \leq \frac{\epsilon}{2^k}$ is the largest size of a small item appearing in $bb_\ell$ and $S_k$ is the largest size category appearing in $bb'_\ell$. Let $k'$ be the largest size category appearing in bin $bb_j$. As the bin $bb'_\ell$ is right of $bb_j$ we know $k \leq k'$ (property $(1)$) and hence $s \leq 2 s(i^*)$. We get $r'_{\ell} \geq 1 - 4 \frac{s(i^*)}{\epsilon}$.
Using that $\{ \Phi(B') \} \leq r^* \leq 2 \nicefrac{s(i^*)}{\epsilon}$, we can bound $|(1 - r'_{\ell}) - \{ \Phi(B') \}|$ by $4 \frac{s(i^*)}{\epsilon} + 2 \nicefrac{s(i^*)}{\epsilon} = \mathcal{O}(\nicefrac{s(i^*)}{\epsilon})$. Hence the Heap Equation can be fulfilled by shifting items of total size $\mathcal{O}(\nicefrac{s(i^*)}{\epsilon})$ at the end of the insert operation. 

\item If $|Q_{\ell}| = 1$ a set of items in the buffer bin $bb_{\ell-1}$ is shifted to $Q_{\ell+1}$ to fulfill the Heap Equation. Since items are being removed from $bb_{\ell-1}$ the potential decreases. If $r_{\ell-1}  > \{ \Phi(B') \}$, there are enough items which can be shifted out of $bb_{\ell-1}$ such that we obtain a new potential $\Phi(B'') < \Phi(B') - \{\Phi(B')\}$. Hence $\lfloor \Phi(B'')\rfloor = \lfloor \Phi(B)\rfloor$ and the Heap Equation is fulfilled.

Note that the size of items that are shifted out of $bb_{\ell-1}$ is bounded by $r^*+s = \mathcal{O}(\nicefrac{s(i^*)}{\epsilon})$, where $s$ is the biggest size of an item appearing in $bb_{\ell-1}$.

If $r_{\ell-1}\leq \{ \Phi(B') \}$ all items are shifted out of $bb_{\ell-1}$. As the number of queues decreases, we obtain the new potential $\Phi(B'') = \Phi(B') - r_{\ell -1} + 1= \lfloor \Phi(B') \rfloor +\{\Phi(B')\}-r_{\ell-1} + 1 \geq \lfloor \Phi(B')\rfloor +1 $. Hence $\lfloor\Phi(B'')\rfloor = \lfloor \Phi(B)\rfloor +2$. The buffer bins $bb_{\ell-1}$ and $bb_{\ell}$ are moved to the heap and thus $h(B'')=h(B)+2=\lfloor \Phi(B)\rfloor + 2=\lfloor \Phi(B'')\rfloor$ (property (4)).

Note that if $r_{\ell-1} \leq \{ \Phi(B') \}$, item $i^*$ is not inserted into bin $bb_{\ell-1}$ as $r_{\ell-1} \geq r^* > \{ \Phi(B') \}$. Therefore the bin $bb_j$ is left of $bb_{\ell-1}$ and we can bound the fill ratio of the bin left of $bb_{\ell-1}$ called $r''_{\ell}$ by  $1 - 2 \frac{s(i^*)}{\epsilon}$. Using $\{\Phi(B'') \} \leq r^* = \mathcal{O}(\nicefrac{s(i^*)}{\epsilon})$ the heap equation can be fulfilled by shifting items of total size $\mathcal{O}(\nicefrac{s(i)}{\epsilon})$ at the end of the insert operation.
\end{itemize}

\item Case 2: The potential $\lfloor \Phi(B) \rfloor$ decreases during a delete operation, i.\,e., it holds $\lfloor \Phi(B') \rfloor = \lfloor \Phi(B) \rfloor -1$ = $\lfloor \Phi(B) - r^* \rfloor$, where $r^*$ is the fill ratio being removed from a buffer bin $bb_j$ due to the first shift of an item $i^*$ that decreases the potential.\\
According to Algorithm \ref{alg-potential}, buffer bin $bb_{\ell}$ is being filled completely and a new buffer bin for $Q_{\ell}$ is inserted from the heap. Hence the number of heap bins decreases and we have $\lfloor \Phi(B') \rfloor = h(B) - 1 = h(B')$.

As $\lfloor \Phi(B)\rfloor -1=\Phi(B)-\{\Phi(B)\}-1=\lfloor \Phi(B)-r^*\rfloor$, it holds that $\{\Phi(B)\}\leq r^*$ and by
the heap equation the fill ratio of $bb_\ell$ is $r_\ell \geq r^*+s$, where $s$ is the largest size of a small item in $bb_{\ell}$. As above, $r^*$ and $s$ can be bounded by $\mathcal{O}(\frac{s(i^*)}{\epsilon})$. Hence the total size that is shifted from $Q_{\ell+1}$ into bin $bb_\ell$ can be bounded by $\mathcal{O}(\frac{s(i^*)}{\epsilon})$.

Furthermore $\{ \Phi(B') \} \geq 1 - r^*$ (as $\Phi(B')=\Phi(B)-r^*$) and $r'_\ell = 0$, therefore we can bound $|(1 - r'_{\ell}) - \{ \Phi(B') \}|$ by $r^*\leq \mathcal{O}(\nicefrac{s(i^*)}{\epsilon})$ and the Heap Equation can be fulfilled by shifting a total size of at most $\mathcal{O}(\nicefrac{s(i^*)}{\epsilon})$ items.

In the case that $|Q_{\ell}|=\nicefrac{1}{\epsilon}$ a new queue $Q_{\ell+1}$ is created which consists of a single buffer bin (inserted from the heap), which does not contain small items, i.\,e., $h(B'') = h(B') -1 = h(B) -2$, where $B''$ is the packing after the insertion of item $i^*$. Let $\Phi(B'')$ be the potential after the queue $Q_{\ell+1}$ is created.
Then $\Phi(B'')= \sum_{i=1}^{\ell(B'')-1}r_i +\epsilon\Lambda -\ell(B'') = \sum_{i=1}^{\ell(B')-2}r_i+\epsilon\Lambda-\ell(B')-1=\Phi(B')-1$, as the buffer bin $bb_{\ell}$ is now counted in the potential, but does not contain any small items and thus $r''_\ell=0$. Hence $\Phi(B'') = \Phi(B') -1 = h(B') -1 = h(B'')$.
\end{itemize}

{\bf Analysis of Algorithm \ref{alg-bb}}\\
Algorithm \ref{alg-bb} is executed as an item $i^*$ is moved into a buffer bin $bb_{j}$ such that $bb_j$ is completely filled or Algorithm \ref{alg-bb} is executed if the buffer bin $bb_j$ is emptied by moving the last item $i^*$ out of the bin. As in the analysis of Algorithm \ref{alg-potential}, properties $(1)$ and $(2)$ are never violated by the algorithm because the items are only moved by shift operations. Property $(3)$ is never violated because no queue (except for $Q_{\ell}$) exceeds $\nicefrac{2}{\epsilon}$ or falls below $\nicefrac{1}{\epsilon}$ by construction. 

It remains to prove property $(4)$ and a bound for the migration to fulfill the heap equation:
\begin{itemize}
\item Case 1: An item $i^*$ is moved into the buffer bin $bb_j$ such that $bb_j$ is filled completely for some $j<\ell$. According to Algorithm \ref{alg-bb} a bin is taken out of the heap and labeled as the new buffer bin $bb'_j$ with fill ratio $r'_j= 0$ of queue $Q_j$, i.\,e., the number of heap bins decreases by $1$. Let $\Phi(B)$ be the potential before Algorithm \ref{alg-bb} is executed and let $\Phi(B')$ be the potential after Algorithm \ref{alg-bb} is executed. The potential changes as follows:
\begin{align*}
\Phi(B)-\Phi(B')= (r_j - r'_j) - (\ell(B) - \ell(B'))
\end{align*}
Since $r'_j = 0$ the new potential is $\Phi(B') = \Phi(B) - r_j \approx \Phi(B) -1$ (assuming $\ell(B) = \ell(B')$, as the splitting of queue is handled later on).
\begin{itemize}
\item If $\lfloor \Phi(B') \rfloor = \lfloor \Phi(B) \rfloor - 1$ property (4) is fulfilled since the number of heap bins decreases by $h(B') = h(B) - 1 = \lfloor \Phi(B) \rfloor  -1 = \lfloor \Phi(B') \rfloor$. As $r_j \geq \frac{c(bb_j)-s}{c(bb_j)}$, where $s$ is the biggest size category appearing in $bb_j$ and $s \leq 2 s(i^*)$, we obtain for the fractional part of the potential that $\{ \Phi(B) \} - \{ \Phi(B') \} \leq 2 \frac{s}{\epsilon} \leq 4  \frac{s(i^*)}{\epsilon}$. Hence the Heap Equation can be fulfilled by shifting items of total size $\mathcal{O}(\nicefrac{s(i^*)}{\epsilon})$ at the end of the insert operation as in the above proof.

\item In the case that $\lfloor \Phi(B') \rfloor = \lfloor \Phi(B) \rfloor = \lfloor \Phi(B) - r_j \rfloor$ we know that the fractional part changes by $\{ \Phi(B') \} = \{ \Phi(B) \} - r_j$. Since the bin $bb_j$ is filled completely we know that $r_j \geq \frac{c(bb_j)-s}{c(bb_j)} \approx 1$ and hence $\{ \Phi(B) \} \geq r_j \approx 1$ and $\{ \Phi(B') \} \leq 1-r_j \approx 0$. According to the Heap Equation, items have to be shifted out of $r_{\ell}$ such that the fill ratio $r_{\ell}$ changes from $r_{\ell} \leq 1-r_j$ to $r_{\ell} \approx 1$. Therefore we know that as items are shifted out of $bb_{\ell}$ to fulfill the Heap Equation, the buffer bin $bb_{\ell}$ is being emptied and moved to the heap (see Algorithm \ref{alg-potential}). We obtain for the number of heap bins that $h(B') = h(B) +1 -1 = h(B)$ and hence $h(B') = \lfloor \Phi(B') \rfloor$ (property (4)).

As $\{ \Phi(B) \} \geq r_j \geq 1- 4 \frac{s(i^*)}{\epsilon}$, the Heap Equation implies that $r_\ell \leq 4 \frac{s(i^*)}{\epsilon} + \frac{s}{c(bb_\ell)} = \mathcal{O}(\nicefrac{s(i^*)}{\epsilon})$.
The buffer bin $bb_\ell$ is thus emptied by moving a size of $\mathcal{O}(\nicefrac{s(i^*)}{\epsilon})$ items out of the bin. Let $bb'_\ell$ be the new buffer bin of $Q_\ell$ that was left of $bb_\ell$. The Heap Equation can be fulfilled by shifting at most $\mathcal{O}(\nicefrac{s(i)}{\epsilon})$ out of $bb'_\ell$ since $\{ \Phi(B') \}$ is bounded by $1 - r_j = \mathcal{O}(\nicefrac{s(i^*)}{\epsilon})$.

\item In the case that $|Q_j| > \nicefrac{2}{\epsilon}$ the queue is split into two queues and an additional heap bin is inserted, i.\,e., $h(B'') = h(B') -1$. As the potential changes by $\Phi(B'') = \Phi(B') + (\ell(B') - \ell(B'')) = \Phi(B') - 1$ we obtain again that $h(B'') = \lfloor \Phi(B'')\rfloor $.
\end{itemize}

\item Case 2: Algorithm \ref{alg-bb} is executed if bin $bb_j$ is emptied due to the removal of an item $i^*$ as a result of a Delete$(i,b_x,Q_j)$ call. According to Algorithm \ref{alg-bb}, the emptied bin is moved to the heap, i.\,e., the number of heap bins increases by $1$. Depending on the length of $Q_j$ and $Q_{j+1}$, the bin right of $bb_j$ or the bin left of $bb_{j}$ is chosen as the new buffer bin $bb'_j$. The potential changes by $\Phi(B') = \Phi(B) + r'_j$, where $r'_j$ is the fill ratio of $bb'_j$ as in case 1.
\begin{itemize}
\item If $\lfloor \Phi(B') \rfloor = \lfloor \Phi(B) \rfloor + 1$ property (4) is fulfilled since the number of heap bins increases by $h(B') = h(B) + 1$. 

As bin $bb'_j$ is completely filled, the fill ratio is bounded by $r'_j \geq 1-2 \frac{s}{\epsilon}$, where $s$ is the largest size appearing in $bb'_j$. Since the bin $b_x$ has to be left of $bb_j$ we know that $s \leq 2s(i)$. We obtain for the fractional part of the potential that $\{ \Phi(B) \}  \geq \{ \Phi(B') \} - 2 \frac{s}{\epsilon} \leq 4 \frac{s(i)}{\epsilon}$. Hence the Heap Equation can be fulfilled by shifting items of total size $\mathcal{O}(\nicefrac{s(i)}{\epsilon})$ at the end of the remove operation.

\item In the case that $\lfloor \Phi(B') \rfloor = \lfloor \Phi(B) \rfloor = \lfloor \Phi(B) + r'_j \rfloor$ we know that the fractional part changes similar to case 1 by $\{ \Phi(B') \} = \{ \Phi(B) \} + r'_j$. Since the bin $bb_j$ is filled completely we know that $r_j \geq \frac{c(bb_j)-s}{c(bb_j)} \approx 1$ and hence $\{ \Phi(B') \} \geq r_j \approx 1$ and $\{ \Phi(B) \} \leq 1-r_j \approx 0$. According to the Heap Equation items have to be shifted to $bb_{\ell}$ such that the fill ratio $r_{\ell}$ changes from $r_{\ell} \approx 0$ to $r_{\ell} \approx 1$. Therefore we know that as items are shifted into $bb_{\ell}$ to fulfill the Heap Equation, $bb_{\ell}$ is filled completely and a bin from the heap is labeled as the new buffer bin of $Q_{\ell}$ (see Algorithm \ref{alg-potential}). We obtain for the number of heap bins that $h(B') = h(B) -1 +1 = h(B)$ and hence $h(B') = \Phi(B')$ (property (4)). The Heap Equation can be fulfilled similarly to case 1 by shifting items of total size $\mathcal{O}(\nicefrac{s(i)}{\epsilon})$.
\end{itemize}
\end{itemize}
\end{proof}

Using the above lemma for, we can finally prove the following central theorem, which states that the migration of an insert/delete operation is bounded and that properties $(1)$ to $(4)$ are maintained. 

\begin{theorem}
\label{thm-main-small}
\ 
    \begin{enumerate}
            \item[(i)] Let $B$ be a packing which fulfills properties $(1)$ to $(4)$ and the Heap Equation. Applying operations insert$(i,b_x,Q_j)$ or delete$(i,b_x,Q_j)$ on a packing $B$ yields an instance $B'$ which also fulfills properties $(1)$ to $(4)$ and the Heap Equation.
        \item[(ii)] The migration factor of an insert/delete operation is bounded by $\mathcal{O}(\nicefrac{1}{\epsilon})$.
    \end{enumerate}
\end{theorem}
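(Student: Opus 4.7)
The plan is to reduce the theorem to the combination of Lemma \ref{onlysmallitems}, which handles the underlying Algorithm \ref{alg-insertsmall}, and Lemma \ref{lem-algorithm}, which handles every invocation of Algorithm \ref{alg-bb} and Algorithm \ref{alg-potential}. The operation \emph{insert}$(i,b_x,Q_j)$ (respectively \emph{delete}) in Algorithm \ref{algo-small-items-general} is a controlled interleaving of these subroutines: an outer call to Algorithm \ref{alg-insertsmall} produces a set of items $i_1,\ldots,i_m$ that are moved through $bb_j$, and each such $i_k$ is then processed by the inner loop that triggers the auxiliary algorithms and restores the Heap Equation. Both property preservation and migration can therefore be checked block by block.

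For part (i), properties (1) and (2) follow by the same argument as in Lemma \ref{onlysmallitems}(i): items are moved only by shift operations that preserve the left-to-right size ordering across bins, and every normal bin touched is refilled to within $\epsilon/2^j$ of its capacity. Property (3) is maintained by construction of Algorithm \ref{alg-bb}, since no queue (except the distinguished $Q_\ell$ and $Q_d$) is allowed to leave the window $[1/\epsilon,\,2/\epsilon]$. The substantive work is property (4) together with the Heap Equation, for which I would induct on the inner-loop index $k$: after processing $i_k$, either $\lfloor\Phi(B)\rfloor$ is unchanged and $h(B)$ stays correct trivially, or the change triggers Algorithm \ref{alg-potential}, which by Lemma \ref{lem-algorithm} updates $h(B)$ to match. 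Step 4 of the loop then explicitly shifts items between $bb_\ell$ and $Q_{\ell+1}$, changing $r_\ell$ by at most $s/c(bb_\ell)$ per shift, so the first time the relevant one-sided inequality is met the full Heap Equation holds again.

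For part (ii), I would bound the migration in two summands. Lemma \ref{onlysmallitems}(ii) already bounds the migration of the outer Algorithm \ref{alg-insertsmall} by $\mathcal{O}(s(i)/\epsilon)$. Every invocation of Algorithm \ref{alg-bb} or Algorithm \ref{alg-potential} adds $\mathcal{O}(s(i^*)/\epsilon)$ additional migration by Lemma \ref{lem-algorithm}, where $i^*$ is the shift that triggered the call. Since each buffer bin of $Q_j$ can become full (or empty) at most once per operation and $\lfloor\Phi(B)\rfloor$ can change at most a constant number of times per operation (it is monotone in the total small-item size, which shifts by only $s(i)$), the number of triggers is $\mathcal{O}(1)$, and the total extra migration is therefore $\mathcal{O}(s(i)/\epsilon)$. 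Dividing by $s(i)$ yields an overall migration factor of $\mathcal{O}(1/\epsilon)$.

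The main obstacle I expect is the accounting in part (ii): a naive summation $\sum_{k}\mathcal{O}(s(i_k)/\epsilon)$ over every inner-loop iteration would yield only the weaker bound $\mathcal{O}(1/\epsilon^2)$. The sharper bound relies on observing that each auxiliary-algorithm invocation swaps in a \emph{fresh} (either empty or full) buffer bin, so repeated triggering requires correspondingly more mass to be shifted, which is itself bounded by the outer algorithm's $\mathcal{O}(s(i)/\epsilon)$ migration budget. Careful bookkeeping of the order of events --- bin filling/emptying, potential crossings, and queue splits/merges --- will also be needed to confirm that no intermediate state violates properties (1)--(4) or the Heap Equation before the next sub-step restores them.
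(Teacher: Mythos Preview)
Your plan for part (i) matches the paper's argument essentially one-to-one: reduce properties (1)--(3) to Lemma \ref{onlysmallitems} and Lemma \ref{lem-algorithm}, handle property (4) by observing that $h(B)$ and $\lfloor\Phi(B)\rfloor$ change only through Algorithms \ref{alg-bb} and \ref{alg-potential}, and check that step 4 of the inner loop restores the Heap Equation.

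For part (ii), however, you have a genuine gap, and you have also misidentified where the $\mathcal{O}(1/\epsilon^2)$ danger actually sits. Your accounting covers the outer Algorithm \ref{alg-insertsmall} and the triggers of Algorithms \ref{alg-bb}/\ref{alg-potential}, but it does not cover the cost of \emph{executing} step 4 of the inner loop, i.e.\ actually shifting items between $bb_\ell$ and $Q_{\ell+1}$. Lemma \ref{lem-algorithm} tells you that the total \emph{mass} to be moved across that boundary is $\mathcal{O}(s(i)/\epsilon)$, but if $Q_{\ell+1}$ were an ordinary queue of length up to $2/\epsilon$, pushing that mass into it via Algorithm \ref{alg-insertsmall} would cost another factor of $1/\epsilon$, giving $\mathcal{O}(s(i)/\epsilon^2)$ migration. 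The paper's fix is structural, not bookkeeping: property (3) is deliberately strengthened to require $|Q_{\ell+1}|=1$ (and $1\le|Q_{\ell+2}|\le 2/\epsilon$), so items pass directly from $bb_\ell$ into the single buffer bin $bb_{\ell+1}$ with no intra-queue cascade, and Algorithm \ref{alg-bb} is modified accordingly for $bb_{\ell+1}$. Without invoking this, your argument does not close at $\mathcal{O}(1/\epsilon)$.

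By contrast, the issue you flag --- summing $\sum_k \mathcal{O}(s(i_k)/\epsilon)$ over the inner-loop iterations --- is not the real obstacle: since $\sum_k s(i_k)\le 7\,s(i)$ by Lemma \ref{onlysmallitems}, that sum is already $\mathcal{O}(s(i)/\epsilon)$ without any further argument about the number of triggers.
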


\begin{proof}
Suppose a small item $i$ with size $s(i)$ is inserted or deleted from queue $Q_j$. The insert and delete operation basically consists of application of Algorithm \ref{alg-insertsmall} and iterated use of steps $(1)$ to $(3)$ where Algorithms \ref{alg-bb} and \ref{alg-potential} are used and items in $bb_\ell$ are moved to $Q_{\ell+1}$ and vice versa.
Let $B$ be the packing before the insert/delete operation and let $B'$ be the packing after the operation.

Proof for (i):
Now suppose by induction that property $(1)$ to $(4)$ and the Heap Equation is fulfilled for packing $B$. We prove that property $(4)$ and the Heap Equation maintain fulfilled after applying an insert or delete operation on $B$ resulting in the new packing $B'$.
Properties $(1)$ to $(3)$ hold by conclusion of Lemma \ref{onlysmallitems} and Lemma \ref{lem-algorithm}. Since the potential and the number of heap bins only change as a result of Algorithm \ref{alg-bb} or Algorithm \ref{alg-potential}, property (4) maintains fulfilled also.
By definition of step 4 in the insert operation, items are shifted from $bb_\ell$ to $Q_{\ell+1}$ until the Heap Equation is fulfilled. By definition of step 4 of the delete operation, the size of small items in $bb_{\ell}$ is adjusted such that the Heap Equation is fulfilled. Hence the Heap Equation is always fulfilled after application of 
Insert$(i,b_x,Q_j)$ or Delete$(i,b_x,Q_j)$.

Proof for (ii): According to Lemma \ref{onlysmallitems} the migration factor of the usual insert operation is bounded by $\mathcal{O}(\nicefrac{1}{\epsilon})$. By Lemma \ref{lem-algorithm} the migration in Algorithm \ref{alg-bb} and Algorithm \ref{alg-potential} is also bounded by $\mathcal{O}(\nicefrac{1}{\epsilon})$. It remains to bound the migration for step 4 in the insert/delete operation. Therefore we have to analyze the total size of items to be shifted out or into $bb_\ell$ in order to fulfill the Heap Equation.

Since the size of all items $i_1, \ldots, i_k$ that are inserted into $bb_j$ is bounded by $7 s(i)$ (see Lemma \ref{onlysmallitems}) and the capacity of $bb_j$ is at least $\nicefrac{\epsilon}{14}$ the potential $\Phi(B)$ changes by at most $\mathcal{O}(\nicefrac{s(i)}{\epsilon})$. By Lemma \ref{lem-algorithm} the size of items that needs to be shifted out or into $bb_\ell$ as a result of Algorithm \ref{alg-bb} or \ref{alg-potential} is also bounded by $\mathcal{O}(\nicefrac{s(i)}{\epsilon})$. Therefore the size of all items that need to be shifted out or into $bb_\ell$ in step (4) of the insert/delete operation is bounded by $\mathcal{O}(\nicefrac{s(i)}{\epsilon})$.

Shifting a size of $\mathcal{O}(\nicefrac{s(i)}{\epsilon})$ to $Q_{\ell+1}$ or vice versa leads to a migration factor of $\mathcal{O}(\nicefrac{1}{\epsilon^2})$ (Lemma \ref{onlysmallitems}). Fortunately we can modify the structure of queues $Q_{\ell+1}$ and $Q_{\ell+2}$ such that we obtain a smaller migration factor. Assuming that $Q_{\ell+1}$ consists of a single buffer bin, i.\,e., $|Q_{\ell+1}| = 1$ items can directly be shifted from $bb_\ell$ to $bb_{\ell+1}$ and therefore we obtain a migration factor of $\mathcal{O}(\nicefrac{1}{\epsilon})$. A structure with $|Q_{\ell+1}| = 1$ and $1 \leq |Q_{\ell+2}| \leq \nicefrac{2}{\epsilon}$ (see property (3)) can be maintained by changing Algorithm \ref{alg-bb} in the following way:

\begin{itemize}
\item If $bb_{\ell+1}$ is filled completely, move the filled bin to $Q_{\ell+2}$.
\begin{itemize}
    \item If $|Q_{\ell+2}| > \nicefrac{2}{\epsilon}$, split $Q_{\ell+2}$ into two queues.
\end{itemize}
\item If $bb_{\ell+1}$ is being emptied, remove the bin and label the first bin of $Q_{\ell+2}$ as $bb_{\ell+1}$.
\begin{itemize}
    \item If  $|Q_{\ell+2}| = 0$, remove $Q_{\ell+2}$.
\end{itemize}
\end{itemize}

\end{proof}

\subsection{Handling the General Setting}
\label{sec:final}
In the previous section we described how to handle small items in a mixed setting. It remains to describe how large items are handled in this mixed setting. Algorithm \ref{alg-afptas} describes how to handle large items only. However, in a mixed setting, where there are also small items, we have to make sure that properties $(1)$ to $(4)$ and the Heap Equation maintain fulfilled as a large item is inserted or deleted. Algorithm \ref{alg-afptas} changes the configuration of at most $\mathcal{O}(\nicefrac{1}{\epsilon}^2 \cdot \log \nicefrac{1}{\epsilon})$ bins (Theorem \ref{thm-main}). Therefore, the size of large items in a bin $b$ ($= 1-c(b)$) changes, as Algorithm \ref{alg-afptas} may increase or decrease the capacity of a bin. Changing the capacity of a bin may violate properties (2) to (4) and the Heap Equation. We describe an algorithm to change the packing of small items such that all properties and the Heap Equation are fulfilled again after Algorithm \ref{alg-afptas} was applied.

The following algorithm describes how the length of a queue $Q_{j}$ is adjusted if the length $|Q_{j}|$ falls below $\nicefrac{1}{\epsilon}$:
\begin{algo}[Adjust the queue length]
\ 
\begin{itemize}
\item Remove all small item $I_S$ from $bb_{j}$ and add $bb_j$ to the heap.
\item Merge $Q_j$ with $Q_{j+1}$. The merged queue is called $Q_j$.
\item If $|Q_{j}|> \nicefrac{2}{\epsilon}$ split queue $Q_j$ by adding a heap bin in the middle.
\item Insert items $I_S$ using Algorithm \ref{algo-small-items-general}.
\end{itemize}
\end{algo}

The following algorithm describes how the number of heap bins can be adjusted.
\begin{algo}[Adjust number of heap bins]
\label{algo-adjust-heap}
\ 
\begin{itemize}
\item Decreasing the number of heap bins by $1$.
    \begin{itemize}
    \item Shift small items from $Q_{\ell+1}$ to $bb_{\ell}$ until $bb_{\ell}$ is filled completely
    \item Label a heap bin as the new buffer bin of $Q_{\ell}$
    \end{itemize}
\item Increasing the number of heap bins by $1$.
    \begin{itemize}
    \item Shift all small items from $bb_{\ell}$ to $Q_{\ell+1}$
    \item Label $bb_{\ell}$ as a heap bin
    \item Label the bin left of $bb_{\ell}$ as new buffer bin of $Q_{\ell}$
    \end{itemize}
\end{itemize}
\end{algo}
Note that the Heap Equation can be fulfilled in the same way, by shifting items from $bb_{\ell}$ to $Q_{\ell+1}$ or vice versa.

Using these algorithms, we obtain our final algorithm for the fully dynamic $\BP$ problem.

\begin{algo}[\ac{afptas} for the mixed setting]
\label{algo-final-afptas}
\ 
\begin{itemize}
\item If $i$ is large do
    \begin{enumerate}
    \item Use Algorithm \ref{alg-afptas}.
    \item Remove all small items $I_S$ of bins $b$ with changed capacity.
    \item Adjust queue length.
    \item Adjust the number of heap bins.
    \item Adjust the Heap Equation.
    \item Insert all items $I_S$ using Algorithm \ref{algo-small-items-general}.
    \end{enumerate}
\item If $i$ is small use Algorithm \ref{algo-small-items-general}
\end{itemize}
\end{algo}

Combining all the results from the current and the previous section, we finally prove the central result that there is fully dynamic \ac{afptas} for the $\BP$ problem with polynomial migration.

\begin{theorem}
    Algorithm \ref{algo-final-afptas} is a fully dynamic \ac{afptas} for
    the $\BP$ problem,
    that achieves a migration factor of at most $\mathcal{O}(\nicefrac{1}{\epsilon}^4 \cdot \log \nicefrac{1}{\epsilon})$ by repacking items from at most $\mathcal{O}(\nicefrac{1}{\epsilon}^3 \cdot \log \nicefrac{1}{\epsilon})$ bins.
\end{theorem}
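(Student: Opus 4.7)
The plan is to combine the two main machinery results: Theorem~\ref{thm-main} for large items and Theorem~\ref{thm-main-small} for small items. Since Algorithm~\ref{algo-final-afptas} branches on the type of the arriving or departing item $i$, I would analyze the two cases separately and then take the maximum.

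When $i$ is small, Algorithm~\ref{algo-final-afptas} simply invokes Algorithm~\ref{algo-small-items-general}. Theorem~\ref{thm-main-small}(i) then guarantees that properties (1)--(4) and the Heap Equation are preserved, and Theorem~\ref{thm-main-small}(ii) gives migration factor $O(\nicefrac{1}{\epsilon})$. The number of repacked bins is $O(\nicefrac{1}{\epsilon})$ because an insertion traverses a single queue whose length is $O(\nicefrac{1}{\epsilon})$ by property~(3).

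When $i$ is large, step~1 applies Algorithm~\ref{alg-afptas}, which by Theorem~\ref{thm-main} repacks at most $N_L = O(\nicefrac{1}{\epsilon^2}\log(\nicefrac{1}{\epsilon}))$ bins with migration $O(\nicefrac{1}{\epsilon^3}\log(\nicefrac{1}{\epsilon}))$ and preserves the $(1+\epsilon)$-approximation on the large-item subinstance. Up to $N_L$ bins then have altered capacity $c(b)$, which may violate properties~(2)--(4) and the Heap Equation for their small items, so steps~2--6 repair the small-item structure. The total size of small items removed in step~2 is at most $N_L$ (each altered bin has content at most $1$). Reinserting these items through Algorithm~\ref{algo-small-items-general} costs, by Theorem~\ref{thm-main-small}(ii), moved items of total size $O(s(i')/\epsilon)$ per displaced item $i'$, summing to $O(N_L \cdot \nicefrac{1}{\epsilon}) = O(\nicefrac{1}{\epsilon^3}\log(\nicefrac{1}{\epsilon}))$; each reinsertion touches $O(\nicefrac{1}{\epsilon})$ bins, so the small-item phase repacks $O(\nicefrac{1}{\epsilon^3}\log(\nicefrac{1}{\epsilon}))$ bins overall. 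The queue-length and heap-count adjustments in Algorithm~\ref{algo-adjust-heap}, together with the re-establishment of the Heap Equation, each contribute only $O(\nicefrac{1}{\epsilon})$ further migration. Dividing the total size moved by $s(i) \geq \nicefrac{\epsilon}{14}$ yields migration factor $O(\nicefrac{1}{\epsilon^4}\log(\nicefrac{1}{\epsilon}))$, and the total number of repacked bins is dominated by $O(\nicefrac{1}{\epsilon^3}\log(\nicefrac{1}{\epsilon}))$.

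For the approximation ratio I would combine Theorem~\ref{thm-main} (the number $\Lambda$ of bins containing large items is at most $(1+O(\epsilon))\OPT(I_L,s) + O(m)$) with Lemma~\ref{lem-smallitems-approximation} (which, under properties (1)--(4), bounds the total number of used bins by $\max\{\Lambda,(1+O(\epsilon))\OPT(I,s) + O(1)\}$). Since $\OPT(I_L,s) \leq \OPT(I,s)$, both terms lie within $(1+O(\epsilon))\OPT(I,s) + O(\nicefrac{1}{\epsilon}\log(\nicefrac{1}{\epsilon}))$, so rescaling $\epsilon$ gives the desired asymptotic ratio $1+\epsilon$. Polynomial running time follows from Theorem~\ref{thm-main} plus $O(N_L)$ small-item reinsertions, each taking $\poly(\nicefrac{1}{\epsilon},\log n)$ time. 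The main obstacle I anticipate is the bookkeeping in the large-item case: after Algorithm~\ref{alg-afptas} has perturbed many capacities at once, properties~(3), (4), and the Heap Equation are all broken simultaneously, and one must sequence the repair steps so that when each small-item reinsertion runs, the structural preconditions of Lemma~\ref{lem-algorithm} hold at least approximately. Arranging this without amplifying the migration beyond $O(\nicefrac{1}{\epsilon^4}\log(\nicefrac{1}{\epsilon}))$ is exactly what forces the specific order of steps~2--6 in Algorithm~\ref{algo-final-afptas}.
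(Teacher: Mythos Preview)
Your approach mirrors the paper's proof: split into small/large cases, invoke Theorem~\ref{thm-main-small} and Theorem~\ref{thm-main} respectively, and for large items bound the repair cost of steps~2--6 in terms of the number $\gamma=N_L=\mathcal{O}(\nicefrac{1}{\epsilon^2}\log(\nicefrac{1}{\epsilon}))$ of bins whose capacity changed. The approximation-ratio argument via Lemma~\ref{lem-smallitems-approximation} and $\OPT(I_L,s)\le\OPT(I,s)$ is exactly the paper's.

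There is, however, one quantitative slip. You assert that ``the queue-length and heap-count adjustments \dots\ each contribute only $O(\nicefrac{1}{\epsilon})$ further migration.'' That is the cost of a \emph{single} such adjustment; but since up to $\gamma$ bins have altered capacity, up to $\gamma$ queues may drop below length $\nicefrac{1}{\epsilon}$ and the potential (hence the heap count) may change by up to $\gamma$. In the paper's accounting, step~3 therefore costs $\gamma\cdot\mathcal{O}(\nicefrac{1}{\epsilon^2})=\mathcal{O}(\nicefrac{1}{\epsilon^4}\log(\nicefrac{1}{\epsilon}))$ migration factor (a merge empties a buffer bin of size $\le 1$ and reinserts through Algorithm~\ref{algo-small-items-general} at factor $\mathcal{O}(\nicefrac{1}{\epsilon})$, then divide by $s(i)\ge\nicefrac{\epsilon}{14}$), and step~4 costs $\gamma\cdot\mathcal{O}(\nicefrac{1}{\epsilon})$. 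So step~3 is in fact a term of the \emph{same} order as your dominant step~6 contribution, not a lower-order correction. Your final migration bound $\mathcal{O}(\nicefrac{1}{\epsilon^4}\log(\nicefrac{1}{\epsilon}))$ is still correct because your step~6 analysis already reaches it, but the dismissal of steps~3 and~4 as $O(\nicefrac{1}{\epsilon})$ is wrong and should be replaced by the $\gamma$-fold counts.
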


\begin{proof}
{\bf Approximation guarantee:} By definition of the algorithm, it generates at every timestep $t$ a packing $B_t$ of instance $I(t)$ such that properties $(1)$ to $(4)$ are fulfilled. According to Lemma \ref{lem-smallitems-approximation}, at most $\max\{\Lambda,(1+\mathcal{O}(\epsilon))\OPT(I(t),s)+\mathcal{O}(1)\}$ bins are used where $\Lambda$ is the number of bins containing large items. Since we use Algorithm \ref{alg-afptas} to pack the large items, Theorem \ref{thm-main} implies that $\Lambda\leq (1+\mathcal{O}(\epsilon))\OPT(I(t),s)+\mathcal{O}(\nicefrac{1}{\epsilon}\log \nicefrac{1}{\epsilon})$. Hence the number of used bins can be bounded in any case by $(1+\mathcal{O}(\epsilon))\OPT(I(t),s)+\mathcal{O}(\nicefrac{1}{\epsilon}\log \nicefrac{1}{\epsilon})$.

{\bf Migration Factor:} Note that the Algorithm uses Algorithm \ref{algo-small-items-general} or Algorithm \ref{alg-afptas} to insert and delete small or large items. The migration factor for Algorithm \ref{algo-small-items-general} is bounded by $\mathcal{O}(\nicefrac{1}{\epsilon})$ due to Theorem \ref{thm-main-small} while the migration factor for Algorithm \ref{alg-afptas} is bounded by $\mathcal{O}(\nicefrac{1}{\epsilon^3}\cdot \log \nicefrac{1}{\epsilon})$ due to Theorem \ref{thm-main}.

It remains to bound the migration that is needed to adjust the heap bins, the length of a queue falling below $\nicefrac{1}{\epsilon}$ and the Heap Equation in case a large item arrives and Algorithm \ref{alg-afptas} is applied. 

Suppose the number of heap bins has to be adjusted by $1$. In this case Algorithm \ref{algo-adjust-heap} shifts items from $Q_{\ell+1}$ to $bb_{\ell}$ or vice versa until $bb_{\ell}$ is either filled or emptied. Hence, the size of moved items is bounded by $1$. Since the size of the arriving or departing item is $\geq \nicefrac{\epsilon}{14}$ the migration factor is bounded by $\mathcal{O}(\nicefrac{1}{\epsilon})$. In the same way, a migration of at most $\mathcal{O}(\nicefrac{1}{\epsilon})$ is used to fulfill the Heap Equation which implies that the migration in step 5 is bounded by $\mathcal{O}(\nicefrac{1}{\epsilon})$. 

If $|Q_j|$ falls below $\nicefrac{1}{\epsilon}$, the two queues $Q_j$ and $Q_{j+1}$ are merged by emptying $bb_j$. The removed items are inserted by Algorithm \ref{algo-small-items-general}. As their total size is bounded by $1$ and the algorithm has a migration factor of $\mathcal{O}(\nicefrac{1}{\epsilon})$, the size of the moved items is bounded by $\mathcal{O}(\nicefrac{1}{\epsilon})$. The migration to merge two queues can thus be bounded by $\mathcal{O}(\nicefrac{1}{\epsilon^2})$. 

Note that the proof of Theorem \ref{thm-main} implies that at most $\gamma=\mathcal{O}(\nicefrac{1}{\epsilon^2}\log \nicefrac{1}{\epsilon})$ bins are changed by Algorithm \ref{alg-afptas}. The total size of the items $I_S$ which are removed in step 2 is thus bounded by $\gamma$. Similarly, the length of at most $\gamma$ queues can fall below $\nicefrac{1}{\epsilon}$. The migration of step 3 is thus bounded by $\gamma\cdot \nicefrac{1}{\epsilon^2}$. As at most $\gamma$ buffer bins are changed, the change of the potential (and thus the number of heap bins) is also bounded by $\gamma$ and the migration in step 4 can be bounded by $\gamma\cdot \nicefrac{1}{\epsilon}$. The migration in step 6 is bounded by $s(I_S) \cdot \nicefrac{1}{\epsilon}\leq \gamma\cdot \nicefrac{1}{\epsilon}$ as Algorithm \ref{algo-small-items-general} has migration factor $\nicefrac{1}{\epsilon}$. The total migration of the adjustments is thus bounded by $\gamma\cdot \nicefrac{1}{\epsilon^2}=\mathcal{O}(\nicefrac{1}{\epsilon^4}\log \nicefrac{1}{\epsilon})$.

{\bf Running Time:} The handling of small items can be performed in linear time while the handling of large items requires $\mathcal{O}(M(\nicefrac{1}{\epsilon} \log(\nicefrac{1}{\epsilon}))\cdot \nicefrac{1}{\epsilon^3} \log(\nicefrac{1}{\epsilon})+\nicefrac{1}{\epsilon} \log(\nicefrac{1}{\epsilon}) \log(\epsilon^2\cdot n(t))+\epsilon n(t))$, where $M(n)$ is the time needed to solve a system of $n$ linear equations (see Theorem \ref{thm-main}). The total running time of the algorithm is thus $\mathcal{O}(M(\nicefrac{1}{\epsilon} \log(\nicefrac{1}{\epsilon}))\cdot \nicefrac{1}{\epsilon^3} \log(\nicefrac{1}{\epsilon})+\nicefrac{1}{\epsilon} \log(\nicefrac{1}{\epsilon}) \log(\epsilon^2\cdot n(t))+ n(t))$. 
\end{proof}

\subsubsection*{Acknowledgements}
We would like to thank Till Tantau for his valuable comments and suggestions to improve the presentation of the paper.

\bibliography{library}
%\printbibliography 

% \appendix
% \section{Technical Appendix}\lim{}
%\input{appendix.tex}

\end{document}